\newtheorem{theorem}{Theorem}
\newtheorem{proposition}{Proposition}
\newtheorem{remark}{Remark}
\newtheorem{definition}{Definition}
\begin{document}
%
\title{Age of Information in a Decentralized Network of Parallel Queues with Routing and Packets Losses}

 
%
%
%

\author{Josu~Doncel and Mohamad~Assaad
\thanks{J. Doncel is with the University of the Basque Country, Spain, 48940 Leioa, Spain (e-mail:josu.doncel@ehu.eus)}
\thanks{M. Assaad is with CentraleSup\'elec, Universit\'e Paris-Saclay, Laboratoire  des  Signaux  et Syst\`emes 91190 Gif-sur-Yvette,  France.}}
%
%

\markboth{Journal of Communications and Networks,~Vol.~XX, No.~XX, XXX~20XX}%
{Doncel \MakeLowercase{\textit{et al.}}: }
%



\markboth{Journal of Communications and Networks,~Vol X, No.~X, Date}%
{Doncel \MakeLowercase{\textit{et al.}}: Status Updates Routing through Decentralized Networks}

\maketitle

\begin{abstract}
The paper deals with Age of Information (AoI) in a network of multiple sources and  parallel queues with buffering capabilities, preemption in service and losses in served packets. The queues do not communicate between each other and the packets are dispatched through the queues according to a predefined probabilistic routing. By making use of the Stochastic Hybrid System (SHS)  method,  we provide a derivation of the average AoI of a system of two  parallel queues (with and without buffer capabilities) and compare the results with those of a single queue. We show that known results of packets delay in Queuing Theory do not hold for the AoI. Unfortunately, the complexity of computing the average AoI using the SHS method  increases highly with the number of queues. We therefore  provide an upper bound of the average AoI in a system of an arbitrary 
number of $M/M/1/(N+1)*$ queues and show its tightness in various regimes. This upper bound allows providing a tight approximation of the average AoI with a very low complexity. We then provide a game framework that allows each source to determine its best probabilistic routing decision. By using Mean Field Games, we provide an analysis of the routing game framework, propose an efficient iterative method to find the routing decision of each source and prove its convergence to the desired equilibrium.
\end{abstract}


\section{Introduction}
Age of Information (AoI) is a relatively new metric that measures the freshness of information in the network. AoI is gaining interest in many areas (e.g. control, communication networks, etc) due to the proliferation of applications in which a monitor is interested in having timely updates about a process of interest. As a typical example, AoI can capture the timeliness of information in a sensor network where the status of a sensor is frequently monitored.   Since its introduction in the seminal papers \cite{kaul2011minimizing,kaul2012real}, AoI has attracted the attention of many researchers in different fields.

A main part of  the  AoI literature focuses on the computation of the average AoI and its minimization, where
the channel/network in which the updates are sent to the monitor is modeled as a queueing system. The computation of AoI in various queueing models have therefore been investigated. For instance,
the authors in \cite{kaul2012real} considered  an M/M/1 queue, an M/D/1 queue or a D/M/1 queue model, and
the authors in \cite{kam2014effect,kam2015effect} studied an M/M/2 queue model. Other queueing models can also be found in \cite{KaulYates20,Moltafet20,kosta2017age,maatouk2018relay,sun2018multipleflows}. While in the aforementioned work, the status updates of the system are assumed to be sent under a predefined transmission policy, the problem of the design of the update policy  has been considered in some papers as well, e.g. \cite{talak2018can}, \cite{sun2017update},\cite{maatouk2019age}. Furthermore, the problem of scheduling  and random access design  with the aim of minimizing the average age of the network has been considered recently in several papers \cite{hsu2017scheduling,bedewy2018scheduling,kadota2018scheduling,maatouk2019noma,maatouk2019csma,maatouk2020csma}. Besides, since  single server queue models are not representative of networks in which packets can be sent through multiple paths, the average AoI has also been studied in networks with parallel servers \cite{bedewy2016age, SunElifKompella18,Y18}, or in more 
complex networks such as  multihop systems \cite{bedewy2019age,bedewy2017age}. For instance, the scheduling of a single packet flow in multi-hop queueing networks was studied in \cite{bedewy2019age,bedewy2017age}. It was shown that Preemptive Last Generated First Served (P-LGFS) policy is age-optimal  if service times are i.i.d. exponentially distributed. In \cite{Farazi19-1}, a multihop scenario in which each node is both a source and a monitor is considered. Fundamental age limits and  near optimal scheduling policies are provided in this work. Recently, the analysis of AoI in a multihop multicast context has been studied in \cite{Buyukates19}.  For more  detailed and comprehensive review of recent work on AoI, one can refer to \cite{sun2020review}.  


In this paper, the network model is different from the aforementioned previous work since we consider  multiple sources that can send their status updates through a system of different parallel  queues. The queues are assumed to be decentralized in the sense that they  cannot communicate between each other. The incoming traffic from each source is dispatched through the parallel queues according to a predefined probabilistic routing. Note that  we  also develop a framework to optimize this probabilistic routing decision as we will see later on in this paper.  Furthermore, we consider a realistic assumption that the transmissions  through the parallel queues are not perfect and that packets can be lost. This assumption is quite realistic in many scenarios, e.g. when the transmission arises over wireless links (which induces errors and hence packet losses) or even in wired networks when the service provider breaks down.   We aim to analyze the average
AoI of this system composed of parallel queues and, for this purpose, we use the stochastic Hybrid Systems (SHS) method, which is introduced in
\cite{YK19} (we explain it in detail in Section~\ref{sec:aoi-shs}). 
A related work to ours is \cite{Y18}, where the authors use the SHS method to compute the average AoI for a system 
formed by multiple sources and an arbitrary number of homogeneous M/M/1/1 queues (i.e. with no buffer)  as well as 
two heterogeneous M/M/1/1 queues, where in both cases preemption in service is allowed. In our work,
we compute the average AoI using the SHS method, including a system formed by two 
heterogeneous M/M/1/2* queues with preemption in service and packet losses. Due to the buffering capability at different queues, the analysis becomes more challenging and complex as compared to \cite{Y18}.  In addition, we assume that queues are decentralized in the sense that they do not communicate between each other. This makes our model different than	\cite{Y18}, where it is assumed that all the queues know where is the freshest update.
Besides, we provide an upper bound of the average AoI in a system of an arbitrary 
number of $M/M/1/(N+1)$ queues. This allows obtaining an approximation of the AoI with a low complexity. Finally, we provide in this paper a game framework to optimize the probabilistic routing decision for each source, which is to the best of our knowledge has not been considered before in the AoI literature.

\begin{table*}[t!]
\centering
  \begin{tabular}{| c | c | c |}
\hline
      & Without losses & With losses\\\hline
     Single Source & (SERVER-ROUTING) and (SERVER-DOUBLE) & (SERVER-ROUTING) and (SERVER-DOUBLE).\\
      &  have equal age for $\lambda$ small and large. See Figure~\ref{fig:comparison_mm11_theta0_alpha0}. & have equal age always. See Figure~\ref{fig:comparison_mm11_theta10_alpha0}. \\\hline
  Multiple Sources & (SERVER-ROUTING) and (SERVER-DOUBLE)& (SERVER-ROUTING) and (SERVER-DOUBLE). \\
&  have almost equal age. See Figure~\ref{fig:comparison_mm11_theta0_alpha10}.   & have equal age. See Figure~\ref{fig:comparison_mm11_theta10_alpha10}.\\\hline
  \end{tabular}
  \caption{Summary of Average AoI comparison of the systems without buffer (see Section~\ref{sec:real-aoi:sub:nobuffer})}
\label{tab:summary-server}.
\end{table*}

\begin{table*}[t!]
\centering
  \begin{tabular}{| c | c | c |}
\hline
      & Without losses & With losses\\\hline
     Single Source & QUEUE-ROUTING and QUEUE-DOUBLE & QUEUE-ROUTING and QUEUE-DOUBLE\\
      & have equal age for $\lambda$ small and large. See Figure~\ref{fig:comparison_mm12_theta0_alpha0}. & have equal age. See Figure~\ref{fig:comparison_mm12_theta10_alpha0}. \\\hline
  Multiple Sources & QUEUE-ROUTING and QUEUE-HALF & QUEUE-ROUTING and QUEUE-DOUBLE \\
& have equal age for $\lambda$ small.  & have equal age always. See Figure~\ref{fig:comparison_mm12_theta10_alpha10}.\\
 &  QUEUE-ROUTING and QUEUE-DOUBLE  &  \\  &have equal age for $\lambda$ large. See Figure~\ref{fig:comparison_mm12_theta0_alpha10}. & \\\hline
  \end{tabular}
  \caption{Summary of Average AoI comparison of the systems with buffer (see Section~\ref{sec:real-aoi:sub:buffer}).}
\label{tab:summary-queue}
\end{table*}

The main contributions of this work are twofold. First, we consider a system with multiple sources where the packets in service can be 
lost and preemption is allowed. The packets are sent to the parallel queues according to a predefined probabilistic routing. 
We compute the average AoI of a system with two parallel queues and we compare its average AoI
with that of a single queue. On one hand, in Table~\ref{tab:summary-server}, we present the obtained results where we compare the following systems: (i) two parallel M/M/1/1 queues, each of them with arrival rate $\lambda/2$, service rate $\mu$ and loss rate $\theta/2$, (denoted by SERVER-ROUTING), (ii) one M/M/1/1 queue  with arrival rate $\lambda/2$, loss rate $\theta/2$ and service rate $\mu$ (denoted by SERVER-HALF) and (iii) one M/M/1/1 queue with arrival rate $\lambda$, loss rate $\theta$ and service rate $2\mu$ (denoted by SERVER-DOUBLE). 
On the other hand, in Table~\ref{tab:summary-queue}, we show  results  in which we compare
the following systems: (i) two parallel M/M/1/2* queues, each of them with arrival rate $\lambda/2$,
service rate $\mu$ and loss rate $\theta$ (denoted by QUEUE-ROUTING), (ii) one
M/M/1/3* queue with arrival rate $\lambda/2$, service rate $\mu$ and loss rate $\theta/2$ 
(denoted by QUEUE-HALF) and (iii) 
one M/M/1/3* queue with arrival rate $\lambda$, service rate $2\mu$ and loss rate $\theta$ (denoted by QUEUE-DOUBLE). The description of M/M/1/3* and M/M/1/2* queues will be provided in Section III. 
The main conclusion of this part of the work is that the known results of packet delay in Queuing Theory do not hold  for the average AoI. 
For instance, we know that the delay of the systems SERVER-ROUTING and SERVER-HALF is the same, whereas according to our results, the average AoI of SERVER-ROUTING is smaller. 
This property also holds when we compare the systems QUEUE-ROUTING and QUEUE-HALF. 
Besides, we also conclude that the average AoI of SERVER-ROUTING is very close to the average AoI of SERVER-DOUBLE and also that the average AoI of QUEUE-ROUTING is very close to the average AoI of QUEUE-DOUBLE.\\ Besides, since the complexity of computing the exact average AoI  with SHS method increases hugely with the number of parallel queues, the second contribution of this work consists of
providing an upper-bound on the average AoI of a system composed of multiple sources  with an arbitrary number of parallel M/M/1/(N+1)* queues.
 We also study numerically the accuracy of the upper bound and we conclude that when the arrival rate is large or 
when there are multiple sources, the upper bound is very tight. The interest of this upper bound lies in the fact that it allows obtaining the average AoI with a low complexity. The last contribution of this work consists in using the derived upper bound of the average AoI  in order to optimize the probabilistic routing decision. For instance, we formulate a distributed framework where each source optimizes its own routing decision using Game Theory. By using Mean Field Games, we then provide a modification/simplification of the framework and derived a simple iterative algorithm allowing each source to find separately its own routing decision. We also provide a theoretical proof of the convergence of the iterative algorithm to the desired fixed point (Equilibrium) of the game.   

The rest of the paper is organized as follows. In Section~\ref{sec:aoi-shs}, we formulate the problem of calculating 
the average AoI and we present how the SHS can be used. In Section~\ref{sec:real-aoi} we focus on the 
average AoI derivation in the different systems under consideration. We present the upper bound of the 
AoI in Section~\ref{sec:bound-aoi} and, finally, we provide the main conclusion of our work in Section~\ref{sec:conclusion}.

\section{AoI and SHS}
\label{sec:aoi-shs}

We consider a transmitter sending status updates to a monitor. Packet $i$ is generated at time $s_i$ and is received by the monitor at 
time $s_i^\prime$. Hence, we define by $N(t)$ the index of the last received update at time $t$, i.e., 
$N(t)= \max\{i | s_i^\prime\leq t\}$, and the time stamp of the last received update at time $t$ as $U(t)=s_{N(t)}$. 
The AoI, or simply the age, is defined as 
$$
\Delta(t)=t-U(t).
$$ 

We are interested in calculating the average of the stochastic process $\Delta(t)$, that is, the average age, 
which is defined as 
$$
\Delta=\lim_{\tau\to\infty}\frac{1}{\tau}\int_0^\tau\Delta(t)dt.
$$

The computation of the average age  in a general setting is known to be a challenging task since the random variables of the 
interarrival times and of the system times are dependent. To overcome this difficulty, the authors in \cite{YK19} introduce 
the SHS. For completeness, we describe hereinafter this method and, for further details one can refer to \cite{hespanha2006modelling}.

In SHS, the system is modeled as a hybrid state $(q(t),{\mathbf x}(t))$, where $q(t)$ is a state of a continuous time Markov 
Chain and ${\mathbf x}(t)$ is a vector whose component belong to $\mathbb R^+_0$ and captures the evolution of the age in the system. 

A link $l$ of the Markov Chain represents a transition from two states $q$ and $q'$ with rate $\lambda^l$. \textcolor{black}{The interest of SHS is that each transition $l$ implies a reset mapping in the continuous process ${\mathbf x}$. In other words, in each transition $l$, the vector 
${\mathbf x}$ is transformed to ${\mathbf x}^\prime$ using a linear mapping where transformation matrix is given by
${\mathbf A}_l$, that is, we have the following SHS transition for every $l$: $\mathbf x^\prime ={\mathbf x}{\mathbf A}_l$. Throughout this paper, we denote by $x_i^\prime$ the i-th element of the vector $\mathbf x^\prime$.}

Furthermore, each state of the Markov Chain represents the elements of the system whose age increases at unit rate. In other words, 
for each state $q$, we define ${\mathbf b}_q$ as the vector whose elements are zero or one. Besides, the evolution of the
vector ${\mathbf x}(t)$ for state $q$ is given by $\dot{\mathbf x}(t)={\mathbf x}{\mathbf b}_q$. 

We assume the Markov Chain is ergodic and we denote by $\pi_q$ the stationary distribution of state $q$. Let ${\mathcal L}_q$ the set
of links that get out of state $q$ and ${\mathcal L}^\prime_q$ the set of links that get into state $q$. The following theorem allows us 
to characterize the average AoI:

\begin{theorem}[{\cite[Thm 4]{YK19}}]
Let $v_q(i)$ denote the $i$-th element of the vector ${\mathbf v}_q$.
For each state $q$, if ${\mathbf v}_q$ is a non-negative solution of the following system of equations
\begin{equation}
{\mathbf v}_q\sum_{l\in{\mathcal L}_q}\lambda^l={\mathbf b}_q\pi_q+\sum_{l \in{\mathcal L}^\prime_q}\lambda^l{\mathbf v}_{q_l}{\mathbf A}_l,
\label{eq:shs-thm}
\end{equation}
then the average AoI is $\Delta=\sum_{q}v_q(0)$.
\end{theorem}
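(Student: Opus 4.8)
The plan is to follow the standard stochastic-hybrid-systems route: pass from the pathwise age process to its state-conditioned first moments, write the evolution of those moments through the extended generator of the process, and then read off the stationary identity. Concretely, for each Markov state $q$ I would introduce the probability $\pi_q(t)=\Pr[q(t)=q]$ and the correlation (row) vector $\mathbf{v}_q(t)=\mathbb{E}[\mathbf{x}(t)\,\mathbbm{1}(q(t)=q)]$, whose $i$-th entry tracks the mean of $x_i$ on the event that the chain sits in $q$. The quantity of interest is the first coordinate: since the age equals $x_0(t)=\Delta(t)$, summing over states gives $\sum_q\mathbb{E}[x_0(t)\,\mathbbm{1}(q(t)=q)]=\mathbb{E}[x_0(t)]=\mathbb{E}[\Delta(t)]$, so the whole theorem reduces to computing $\lim_t\mathbb{E}[\Delta(t)]$ and matching it to the time average in the definition of $\Delta$.

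Next I would derive the transient ODE for $\mathbf{v}_q(t)$ by applying Dynkin's formula to the test functions $\psi(q,\mathbf{x})=x_i\,\mathbbm{1}(q=\hat q)$. Three effects contribute. While the chain remains in $q$, the deterministic drift raises the selected ages at unit rate, producing the source term $\mathbf{b}_q\pi_q(t)$. Each incoming link $l\in\mathcal{L}'_q$ fires at rate $\lambda^l$ and resets the state through $\mathbf{x}\mapsto\mathbf{x}\mathbf{A}_l$, contributing $\sum_{l\in\mathcal{L}'_q}\lambda^l\,\mathbf{v}_{q_l}(t)\mathbf{A}_l$. Each outgoing link $l\in\mathcal{L}_q$ removes probability mass (and the age it carries) from $q$, giving the sink $-\mathbf{v}_q(t)\sum_{l\in\mathcal{L}_q}\lambda^l$. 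Collecting these yields
\begin{equation*}
\dot{\mathbf{v}}_q(t)=\mathbf{b}_q\pi_q(t)+\sum_{l\in\mathcal{L}'_q}\lambda^l\,\mathbf{v}_{q_l}(t)\mathbf{A}_l-\mathbf{v}_q(t)\sum_{l\in\mathcal{L}_q}\lambda^l .
\end{equation*}
Because the Markov chain is ergodic, $\pi_q(t)\to\pi_q$; assuming the correlations settle to a finite limit $\mathbf{v}_q$, letting $\dot{\mathbf{v}}_q\to 0$ turns the display into exactly the stationary system \eqref{eq:shs-thm}. Ergodicity then lets me replace the ensemble mean by the time average, so $\Delta=\lim_\tau\frac1\tau\int_0^\tau\Delta(t)\,dt=\lim_t\mathbb{E}[\Delta(t)]=\sum_q v_q(0)$.

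The step I expect to be the crux is not the bookkeeping above but the two analytic justifications hiding in it. First, one must show the transient correlations actually converge to a finite limit, i.e. that $\mathbb{E}[\mathbf{x}(t)]$ stays bounded; this is a stability statement about the age process rather than a consequence of ergodicity of $q(t)$ alone, and it relies on the resets $\mathbf{A}_l$ together with the service/loss mechanism keeping the ages from drifting to infinity. Second, because the stacked linear system \eqref{eq:shs-thm} inherits the singularity of the chain's generator, it need not admit a unique solution in general, which is precisely why the statement restricts to a \emph{non-negative} solution. I would close this gap by observing that the probabilistic limit $\mathbf{v}_q=\lim_t\mathbb{E}[\mathbf{x}(t)\,\mathbbm{1}(q(t)=q)]$ is automatically entrywise non-negative, being the expectation of a non-negative quantity, and satisfies \eqref{eq:shs-thm}; one then argues that the non-negativity constraint singles this solution out, so that any non-negative solver of \eqref{eq:shs-thm} must return the true stationary correlations and hence the correct value $\sum_q v_q(0)$. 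Establishing this uniqueness among non-negative solutions is the genuinely delicate part of the argument.
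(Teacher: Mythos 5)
First, note that the paper does not prove this statement at all: Theorem~1 is quoted verbatim from \cite[Thm.~4]{YK19}, so the only meaningful comparison is against the proof in that reference. Your outline reconstructs its architecture faithfully: the state-conditioned correlation vectors ${\mathbf v}_q(t)=\mathbb E[{\mathbf x}(t)\mathbbm 1(q(t)=q)]$, the transient ODE obtained from the extended generator (drift term ${\mathbf b}_q\pi_q(t)$, reset inflow $\sum_{l\in\mathcal L'_q}\lambda^l{\mathbf v}_{q_l}(t){\mathbf A}_l$, outflow $-{\mathbf v}_q(t)\sum_{l\in\mathcal L_q}\lambda^l$), the passage to stationarity, and the identification $\Delta=\sum_q v_q(0)$ via $x_0(t)=\Delta(t)$. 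All of that bookkeeping is correct and is exactly what the cited proof does.

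The gap is in the step you yourself flag as the crux, and the way you propose to close it inverts the actual logic. You write ``assuming the correlations settle to a finite limit'' and then plan to (i) prove boundedness of $\mathbb E[{\mathbf x}(t)]$ by a separate stability argument about the age dynamics and (ii) establish uniqueness of non-negative solutions of \eqref{eq:shs-thm} so that any such solution must coincide with the probabilistic limit. Neither (i) nor (ii) is carried out, and (i) as stated is circular relative to the theorem: the hypothesis is only the \emph{existence of a non-negative solution} of the algebraic system, and it is precisely this hypothesis that must be shown to imply convergence. The argument in \cite{YK19} (building on Hespanha's SHS moment dynamics) runs the other way: the homogeneous part of the stacked linear ODE for $({\mathbf v}_q(t))_q$ has non-negative off-diagonal structure (a Metzler-type matrix), and for such positive systems the existence of a non-negative fixed point for the positively forced equation certifies that the system matrix is Hurwitz; convergence of ${\mathbf v}_q(t)$ to that fixed point, and hence uniqueness of the non-negative solution, are then consequences of the hypothesis rather than additional facts to be verified about the age process. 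Without this implication your proof does not establish the theorem as stated, only the weaker claim that \emph{if} the transient correlations happen to converge, their limit satisfies \eqref{eq:shs-thm}.
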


In the following section, we use the above result to characterize the average AoI of several systems. In 
Section~\ref{sec:bound-aoi}, we show that the method under consideration can be also used to obtain an upper-bound on 
the average AoI of very complex systems.

\section{Average AoI of Routing Systems Versus of a Single Queue}
\label{sec:real-aoi}

In this section, we aim to study the average AoI for different configurations using the SHS
method. We first focus on a system formed by queues without buffer and then consider several cases of queues with buffer.   Furthermore, we consider in this section two main scenarios: i) system with single queue, and ii) system with multiple parallel queues. In the latter,  we consider that multiple sources are dispatching their packets through  the different parallel queues according to a predefined probabilistic routing. This kind of routing policies is used in practice and is widely considered in routing literature since it can be implemented without  knowing the instantaneous  states of the network or of the servers (this assumption is realistic as in real-life networks  such information cannot be known at the sources). In more detail, the routing policy can be explained as follows.  Each source $i$ dispatches its packets according to the following policy: each job/packet of the source  is routed to queue $j$ with probability $p_{ij}$. We can see then that the arrival rate from source $i$ to queue $j$ is $\lambda_i p_{ij}$. In addition, it is obvious to see that $\sum_{j=1}^K p_{ij}=1$. Besides, we consider also that the transmission through the queues is not reliable and that the loss rate of each queue $j$ is denoted by $\theta_j$.  We also assume that the queues are decentralized in the sense that they do not communicate between each other.
 
\subsection{Queues Without Buffer}
\label{sec:real-aoi:sub:nobuffer}

In this section, we study the age when the queues do not have a buffer. We first focus on a system formed by an M/M/1/1 queue 
and then in a system with two parallel M/M/1/1 queues.

\subsubsection{The M/M/1/1 queue}
\label{subsubsec:ageoneserver:subsec:nobuffer:sec:real-aoi}
We consider a system formed by one M/M/1/1 queue that receives traffic from different $n$ sources
when preemption of the packets that are in service is permitted. \textcolor{black}{We therefore consider   a Poisson arrival process from each source and hence the resulting arrival from all sources is a Poisson process and two update arrivals cannot occur simultaneously.} 
We are interested in calculating the average AoI of any source.
Without loss of generality, we focus on source $1$. Thus, we consider that updates
	arrive to the system according to a Poisson process, where, without loss of generality, 
	the rate of updates of source $1$ are
	denoted by $\lambda_1$ and of the rest of the sources $\sum_{k>1}\lambda_{k}$. The total arrival rate
	in the system is denoted by $\lambda$, i.e., $\lambda=\sum_{k=1}^n\lambda_k$.
We assume that the service time  is exponentially distributed with rate $\mu$. We
also assume that an update that is in service is lost with an exponential time of rate $\theta$. 

\textcolor{black}{
	We use the SHS method to compute the age of this system. The continuous state of the SHS is ${\mathbf x}(t)=[x_0(t)\ x_1(t)]$, where the correspondence between $x_i$(t) and the elements of the system is 
	as follows: $x_0$ is the age at the monitor and $x_1$ the generation time of the packet in service. One can notice that $x_0(t)= x_1(t)$ if the  packet/update in 
	service is delivered. }The discrete state of the SHS is a two-state Markov Chain, 
where $0$ represents that the system is empty and $1$ that an update is being executed. \textcolor{black}{As explained in the previous section, in each transition in the Markov chain,  the continuous state of the SHS 
${\mathbf x}$ changes to ${\mathbf x}^\prime$.}
The Markov Chain is represented in  Figure~\ref{fig:oneserver} and the SHS transitions 
are given in Table~\ref{tab:oneserver} \textcolor{black}{in which we state explicitly the  transitions from 
${\mathbf x}$  to ${\mathbf x}^\prime$.}

	\begin{table}[t!]
		\centering
			\begin{tabular}{l l l l l}
				$l$ & $q_l\to q_{l^\prime}$ & $\lambda^{l}$ & $x^\prime=x A_l$ & $\bar v_{q_l}A_l$\\\hline
				0 & $0\to 1$ & $\lambda_1$ & $[x_0\ 0]$ & $[v_{0}(0)\ 0]$\\
				1 & $0\to 1$ & $\sum_{k>1}\lambda_k$ & $[x_0\ x_0]$ & $[v_{0}(0)\ v_{0}(0)]$\\
				2 & $1\to 0$ & $\mu$ & $[x_1\ 0]$ & $[v_{1}(1)\ 0]$\\
				3 & $1\to 0$ & $\theta$ & $[x_0\ 0]$ & $[v_{1}(0)\ 0]$\\
				4 & $1\to 1$ & $\lambda_1$ & $[x_0\ 0]$ & $[v_{1}(0)\ 0]$\\
				5 & $1\to 1$ & $\sum_{k>1}\lambda_k$ & $[x_0\ x_0]$ & $[v_{1}(0)\ v_{1}(0)]$\\
		\end{tabular}
		\caption{Table of transitions of Figure~\ref{fig:oneserver}}
		\label{tab:oneserver}
	\end{table}

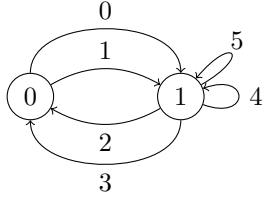
\begin{figure}[t!]
	\centering
	\begin{tikzpicture}[]
	\node[style={circle,draw}] at (0,0) (1) {$0$};
	\node[style={circle,draw}] at (2,0) (2) {$1$};
	\node[] at (3,0) (X) {$4$};
	\node[] at (2.75,0.75) (X) {$5$};
	\draw[->] (2) to [out=340,in=20,looseness=8] (2) ;
	\draw[->] (2) to [out=30,in=50,looseness=18] (2) ;
	\draw[->] (1) edge [out=90,in=90] node[above] {$0$} (2);
	\draw[->] (1) edge [bend left] node[above] {$1$} (2);
	\draw[<-] (1) edge [bend right] node[below] {$2$} (2);
	\draw[<-] (1) edge [out=270, in=270] node[below] {$3$} (2);
	\end{tikzpicture}  
	\caption{The SHS Markov Chain for the one M/M/1/1 queue system with multiple sources and losses
		of packets in service.}
	\label{fig:oneserver}
\end{figure}

We now explain each transition $l$:
\begin{itemize}
	\item[]
	\begin{itemize}
		\item[$l=0$] There is an arrival of source $1$ and the queue is idle. Therefore,
		the age of the monitor does not change, i.e., $x_0^\prime=x_0$ and the age of the packet in service
		is zero since there is a fresh update arrived.
		\item[$l=1$] There is an arrival of one of the others sources when the queue is idle. 
		\textcolor{black}{Since we are interested in the age of source 1, an arrival from the rest of the sources does not bring a fresh update of the status of source $1$ and hence it changes
		the value of $x_1$ to the age of the monitor, that is,
		the age of the update in service satisfies  $x_1^\prime=x_0$, where $x_0$ is the age of the monitor. 
                 Therefore, when this update ends its service, the age of the monitor remains unchanged.}
		\item[$l=2$] The update under execution ends its service and the age of the monitor is
		updated by that of this update, i.e., $x_0^\prime=x_1$.
		\item[$l=3$] The update that is in service is lost and, therefore, the age of the monitor does not
		change.
		\item[$l=4$] There is an arrival of source $1$ when the queue is in service. For this case,
		the update in service is replaced by the fresh one and, therefore,
		the age of the monitor does not change, i.e., $x_0^\prime=x_0$, but the age of the update in service
		changes to zero.
		\item[$l=5$] There is an arrival of another source when the queue is in service. For this case,
		the update in service is replaced by the fresh one and the age of the update in service
		changes to that of the monitor, i.e., $x_1^\prime=x_0$.
	\end{itemize}
\end{itemize}

The stationary distribution
of the Markov Chain of Figure~\ref{fig:oneserver} is 
$$
\pi_0=\frac{\mu+\theta}{\lambda+\mu+\theta},\ \pi_1=\frac{\lambda}{\lambda+\mu+\theta}.
$$

Besides, for the state $q=0$, we have that $\mathbf b_0=[1,0]$ since the age of the monitor 
is the only one that grows at unit rate and the age of the update in service is irrelevant, whereas for the state 
$q=1$ we have that $\mathbf b_1=[1,1]$ and the age of the monitor and of the update in service grow at a unit rate. 
On the other hand, we have that 
$\mathbf v_0=[v_0(0)\ v_0(1)]$ and $\mathbf v_1=[v_1(0)\ v_1(1)]$. From Theorem 4 of 
\cite{YK19}, we know that the age of this system is $v_0(0)+v_1(0)$, where
	\begin{align*}
	\lambda \mathbf{v}_0=&[\pi_0\ 0]+\mu[v_1(1)\ 0]+\theta[v_1(0)\ 0],\\
	(\lambda+\mu+\theta) \mathbf{v}_1=&[\pi_1\ \pi_1]+\lambda_1[v_1(0)\ 0]+\sum_{k>1}\lambda_k[v_1(0)\ v_1(0)]\\&+\lambda_1[v_0(0)\ 0]+\sum_{k>1}\lambda_k[v_0(0)\ v_0(0)],
	\end{align*}
	The above expressions can be written as the following system of equations:
	\begin{subequations}
		\begin{align}
		\lambda v_0(0)&=\pi_0+\mu v_1(1)+\theta v_1(0),\\
		(\mu+\theta) v_1(0)&=\pi_1+\lambda v_0(0),\\
		(\lambda+\mu+\theta) v_1(1)&=\pi_1+\sum_{k>1}\lambda_k v_1(0)+\sum_{k>1}\lambda_k v_0(0).
		\end{align}
		\label{eq:oneserver-shs}
	\end{subequations}

The solution to the above system of equations is
$$
v_0(0)=\frac{1}{\lambda_1}+\frac{\theta}{\lambda_1\mu}-\frac{\pi_1}{\lambda+\mu+\theta}
$$
$$
v_1(0)=\frac{\lambda}{\lambda_1\mu}+\frac{\pi_1}{\lambda+\mu+\theta}
$$
$$
v_1(1)=\frac{\sum_{k>1}\lambda_k}{\lambda_1 \mu} + \frac{\pi_1}{\lambda + \mu + \theta}.
$$
Therefore, since, from \eqref{eq:shs-thm}, the average AoI for this case is $v_0(0)+v_1(0)$, the next result follows:
\begin{proposition}
The average AoI of source $1$ in the aforementioned system is given by 
$$
\frac{1}{\lambda_1}+\frac{\theta}{\lambda_1\mu}+\frac{\lambda}{\lambda_1\mu}.
$$.
\end{proposition}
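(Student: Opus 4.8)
The plan is to invoke the characterization recalled in \eqref{eq:shs-thm}, which for this two-state chain reduces the average AoI of source $1$ to $\Delta = \sum_q v_q(0) = v_0(0) + v_1(0)$, and then to solve the linear system \eqref{eq:oneserver-shs}. Since the age formula involves only the $v_q(0)$ components, I would first observe that the vector equation at $q=0$ forces the unused component $v_0(1)$ to be $0$ and that it never feeds back into the relevant equations; consequently the three scalar relations collected in \eqref{eq:oneserver-shs} constitute a closed system in the three unknowns $v_0(0)$, $v_1(0)$ and $v_1(1)$, and it suffices to solve these.

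First I would eliminate $v_0(0)$ using the second equation of \eqref{eq:oneserver-shs} to write $\lambda v_0(0) = (\mu+\theta)v_1(0) - \pi_1$, and substitute this into the first equation. The $\theta v_1(0)$ contributions cancel and the constant term becomes $\pi_0 + \pi_1 = 1$, so the first two equations collapse to the clean relation $v_1(0) = v_1(1) + \tfrac{1}{\mu}$. This $\theta$-free identity between $v_1(0)$ and $v_1(1)$ is the algebraic heart of the simplification and is what ultimately produces the compact closed form.

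Next I would substitute into the third equation of \eqref{eq:oneserver-shs}. Expressing both $v_1(0)$ and $v_0(0)$ in terms of $v_1(1)$ (the latter again through the second equation) and simplifying $\sum_{k>1}\lambda_k = \lambda - \lambda_1$ turns it into a single linear equation in $v_1(1)$, which I would solve and then back-substitute to recover the closed forms for $v_1(1)$, $v_1(0)$ and $v_0(0)$ displayed above. Adding the last two gives the result: the two occurrences of $\pi_1/(\lambda+\mu+\theta)$ enter with opposite signs and cancel, leaving precisely $\tfrac{1}{\lambda_1} + \tfrac{\theta}{\lambda_1\mu} + \tfrac{\lambda}{\lambda_1\mu}$.

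I do not expect a genuine obstacle, since the argument is elementary linear algebra in three unknowns; the only points demanding care are the bookkeeping of the identity $\lambda_1 + \sum_{k>1}\lambda_k = \lambda$ when reducing the state-$1$ equations, and the final cancellation of the $\pi_1$ terms. If anything, the step worth isolating for clarity is the use of $\pi_0 + \pi_1 = 1$, which removes the dependence on $\theta$ from the coupling and makes the whole computation transparent.
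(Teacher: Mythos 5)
Your proposal is correct and follows essentially the same route as the paper: it reduces the problem to the three scalar equations in \eqref{eq:oneserver-shs} and solves them by elimination, and the intermediate values you describe (in particular $v_1(0)-v_1(1)=1/\mu$ and the cancellation of the $\pi_1/(\lambda+\mu+\theta)$ terms in $v_0(0)+v_1(0)$) agree with the closed forms the paper states for $v_0(0)$, $v_1(0)$ and $v_1(1)$. The paper simply displays the solution without the elimination steps, so your write-up is a more explicit version of the same argument.
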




\begin{remark}
		We remark that, when $\theta=0$, \eqref{eq:oneserver-shs} coincides with the result of Theorem 2a) in 
		\cite{YK19}. In their model, they consider a Markov Chain with a single state, but when
		there are updates that are lost their model cannot be considered. Therefore, we believe that the
		model presented above is the simplest one to study the average AoI 
		using the SHS method when there are update losses.
\end{remark}

\subsubsection{Two parallel M/M/1/1 queues}
	We now consider a system formed by two parallel M/M/1/1 queues receiving traffic from different $n$ sources and where preemption of packets in service is permitted. We aim to calculate the average age of
	information of source $1$. As in the previous case, the arrivals are Poisson and the rate of
	source $1$ is denoted by $\lambda_1$ and that of the rest of the sources is denoted by $\sum_{k>1}\lambda_k$. Hence, $\lambda=\sum_{k=1}^n\lambda_k.$ The packets are dispatched
	according to a predefined probabilistic routing, where $p_{1j}$ (resp. $p_{kj}$) is the probability that a job of source $1$ (resp. 
	of another source $k\neq 1$) is routed to queue $j$. We denote by $\lambda_1 p_{11}$
	and by $\lambda_1 p_{12}$ the arrival rates from source $1$ to queue 1 and to queue 2, respectively.
	Likewise, $\sum_{k>1}\lambda_k p_{k1}$ and $\sum_{k>1}\lambda_k  p_{k2}$ denotes the arrival rates from the rest of the 
	sources to queue 1 and to queue 2, respectively. The  service rate  and the loss rate 
	in queue $j$ are respectively $\mu_j$ and $\theta_j$, where $j=1,2$. We assume that the queues are decentralized
	in the sense that they do not communicate between each other.

\begin{remark}
	The latter assumption makes the model under study here different than
	\cite{Y18}, where it is assumed that the servers know where is the freshest update.
\end{remark}

	We compute the average AoI using the SHS method. First, we define
	the continuous state as $\mathbf x(t)=[x_0(t)\ x_1(t)\ x_2(t)]$, where the correspondence between
	$x_i$(t) and each element in the system is as follows: $x_0$ is the age of the monitor
	and $x_1$ (resp. $x_2$) is the age if an update of queue 1 (resp. of queue 2) is delivered.
The discrete state is a Markov Chain with four states, represented in Figure~\ref{fig:server-routing} and where state $k_1k_2$ represents that in queue 1 there are $k_1$ updates, with $k_1\in\{0,1\}$, 
and in queue 2 there are $k_2$ updates, with $k_2\in\{0,1\}$. We also represent the SHS transitions in 
Table~\ref{tab:server-routing}.

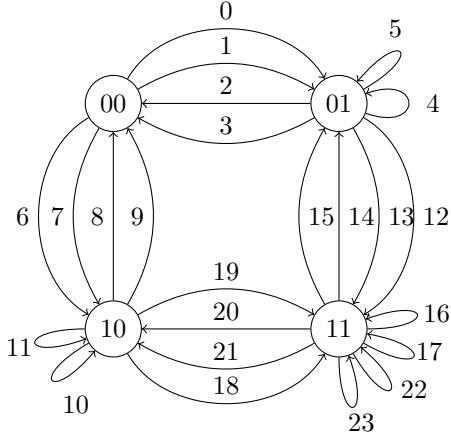
\begin{figure}[t!]
	\centering
	\begin{tikzpicture}[]
	\node[style={circle,draw}] at (0,0) (1) {$00$};
	\node[style={circle,draw}] at (3,0) (2) {$01$};
	\node[style={circle,draw}] at (0,-3) (3) {$10$};
	\node[style={circle,draw}] at (3,-3) (4) {$11$};
	\node[] at (4.25,0) (X1) {$4$};
	\node[] at (3.75,1) (X2) {$5$};
	\node[] at (-0.5,-4) (X3) {$10$};
	\node[] at (-1.25,-3.25) (X4) {$11$};
	\node[] at (4.3,-2.8) (X5) {$16$};
	\node[] at (4.2,-3.3) (X5) {$17$};
	\node[] at (4,-3.8) (X5) {$22$};
	\node[] at (3.3,-4.25) (X5) {$23$};
	\draw[->] (2) to [out=340,in=20,looseness=8] (2) ;
	\draw[->] (2) to [out=30,in=50,looseness=18] (2) ;
	\draw[->] (1) edge [out=60,in=120] node[above] {$0$} (2);
	\draw[->] (1) edge [bend left] node[above] {$1$} (2);
	\draw[<-] (1) edge [<-] node[above] {$2$} (2);
	\draw[<-] (1) edge [bend right] node[above] {$3$} (2);
	\draw[->] (1) edge [out=210,in=150] node[left] {$6$} (3);
	\draw[->] (1) edge [bend right] node[left] {$7$} (3);
	\draw[<-] (1) edge [<-] node[left] {$8$} (3);
	\draw[<-] (1) edge [bend left] node[left] {$9$} (3);
	\draw[->] (3) to [out=180,in=200,looseness=18] (3) ;
	\draw[->] (3) to [out=210,in=230,looseness=18] (3) ;
	\draw[->] (2) edge [out=330,in=30] node[right] {$12$} (4);
	\draw[->] (2) edge [bend left] node[right] {$13$} (4);
	\draw[<-] (2) edge [<-] node[right] {$14$} (4);
	\draw[<-] (2) edge [bend right] node[right] {$15$} (4);
	\draw[->] (4) to [out=360,in=20,looseness=18] (4) ;
	\draw[->] (4) to [out=330,in=350,looseness=18] (4) ;
	\draw[->] (4) to [out=300,in=320,looseness=18] (4) ;
	\draw[->] (4) to [out=270,in=290,looseness=18] (4) ;
	\draw[->] (3) edge [out=300,in=240] node[above] {$18$} (4);
	\draw[->] (3) edge [bend left] node[above] {$19$} (4);
	\draw[<-] (3) edge [<-] node[above] {$20$} (4);
	\draw[<-] (3) edge [bend right] node[above] {$21$} (4);
	\end{tikzpicture}  
	\caption{The SHS Markov Chain for system with two parallel M/M/1/1 queues 
		with multiple sources and losses of packets in service.}
	\label{fig:server-routing}
\end{figure}

\begin{table}[t!]
	\centering
		\begin{tabular}{l l l l l}
			$l$ & $q_l\to q_{l^\prime}$ & $\lambda^{l}$ & $x^\prime=x A_l$ & $\bar v_{q_l}A_l$\\\hline
			0 & $00\to 01$ & $\lambda_1p_{12}$ & $[x_0\ 0\ 0]$ & $[v_{00}(0)\ 0\ 0]$\\
			1 & $00\to 01$ & $\sum_{k>1}\lambda_kp_{k2}$ & $[x_0\ 0\ x_0 ]$ & $[v_{00}(0)\ 0\ v_{00}(0)]$\\
			2 & $01\to 00$ & $\mu_2$ & $[x_2\ 0\ 0]$ & $[v_{01}(2)\ 0\ 0]$\\
			3 & $01\to 00$ & $\theta_2$ & $[x_0\ 0\ 0]$ & $[v_{01}(0)\ 0\ 0]$\\
			4 & $01\to 01$ & $\lambda_1p_{12}$ & $[x_0\ 0\ 0]$ & $[v_{01}(0)\ 0\ 0]$\\
			5 & $01\to 01$ & $\sum_{k>1}\lambda_kp_{k2}$ & $[x_0\ 0\ x_0]$ & $[v_{01}(0)\ 0\ v_{01}(0)]$\\
			6 & $00\to 10$ & $\lambda_1p_{11}$ & $[x_0\ 0\ 0]$ & $[v_{00}(0)\ 0\ 0]$\\
			7 & $00\to 10$ & $\sum_{k>1}\lambda_kp_{k1}$ & $[x_0\ x_0\ 0 ]$ & $[v_{00}(0)\ v_{00}(0)\ 0]$\\
			8 & $10\to 00$ & $\mu_1$ & $[x_1\ 0\ 0]$ & $[v_{10}(1)\ 0\ 0]$\\
			9 & $10\to 00$ & $\theta_1$ & $[x_0\ 0\ 0]$ & $[v_{10}(0)\ 0\ 0]$\\
			10 & $10\to 10$ & $\lambda_1p_{11}$ & $[x_0\ 0\ 0]$ & $[v_{10}(0)\ 0\ 0]$\\
			11 & $10\to 10$ & $\sum_{k>1}\lambda_kp_{k1}$ & $[x_0\ x_0\ 0]$ & $[v_{10}(0)\ v_{10}(0)\ 0]$\\
			12 & $01\to 11$ & $\lambda_1p_{11}$ & $[x_0\ 0\ x_2]$ & $[v_{01}(0)\ 0\ v_{01}(2)]$\\
			13 & $01\to 11$ & $\sum_{k>1}\lambda_kp_{k1}$ & $[x_0\ x_0\ x_2 ]$ & $[v_{01}(0)\ v_{01}(0)\ v_{01}(2)]$\\
			14 & $11\to 01$ & $\mu_1$ & $[x_1\ 0\ x_2]$ & $[v_{11}(1)\ 0\ v_{11}(2)]$\\
			15 & $11\to 01$ & $\theta_1$ & $[x_0\ 0\ x_2]$ & $[v_{11}(0)\ 0\ v_{11}(2)]$\\
			16 & $11\to 11$ & $\lambda_1p_{11}$ & $[x_0\ 0\ x_2]$ & $[v_{11}(0)\ 0\ v_{11}(2)]$\\
			17 & $11\to 11$ & $\sum_{k>1}\lambda_kp_{k1}$ & $[x_0\ x_0\ x_2]$ & $[v_{11}(0)\ v_{11}(0)\ v_{11}(2)]$\\
			18 & $10\to 11$ & $\lambda_1p_{12}$ & $[x_0\ x_1\ 0]$ & $[v_{10}(0)\ v_{10}(1)\ 0]$\\
			19 & $10\to 11$ & $\sum_{k>1}\lambda_kp_{k2}$ & $[x_0\ x_1\ x_0 ]$ & $[v_{10}(0)\ v_{10}(1)\ v_{10}(0)]$\\
			20 & $11\to 10$ & $\mu_2$ & $[x_2\ x_1\ 0]$ & $[v_{11}(2)\ v_{11}(1)\ 0]$\\
			21 & $11\to 10$ & $\theta_2$ & $[x_0\ x_1\ 0]$ & $[v_{11}(0)\ v_{11}(1)\ 0]$\\
			22 & $11\to 11$ & $\lambda_1p_{12}$ & $[x_0\ x_1\ 0]$ & $[v_{11}(0)\ v_{11}(1)\ 0]$\\
			23 & $11\to 11$ & $\sum_{k>1}\lambda_kp_{k2}$ & $[x_0\ x_1\ x_0]$ & $[v_{11}(0)\ v_{11}(1)\ v_{11}(0)]$\\
	\end{tabular}
	\caption{Table of transitions of Figure~\ref{fig:server-routing}.}
	\label{tab:server-routing}
\end{table}

We now explain each transition $l$:
\begin{itemize}
	\item[]
		\begin{itemize}
			\item[$l=0$] 
			There is an arrival of source $1$ to queue 2, when the system is empty. Therefore,
			$x_0$ and $x_1$ do not change and the age of the update in service in queue 2 is zero due to a fresh 
			update arrival.
			\item[$l=1$] There is an arrival of an update of the other sources to queue 2 when it is idle. 
			Therefore, we set $x_2^\prime=x_0$, which means that, when this update ends its service, 
			the age of the monitor is again $x_0$.
			\item[$l=2$] The update under execution in queue 2 is delivered and the age of the monitor is
			updated by that of this update, i.e., $x_0^\prime=x_2$.
			\item[$l=3$] The update that is in service in queue 2 is lost and, therefore, 
			the age of the monitor does not change and queue 2 has no updates.
			\item[$l=4$] There is an arrival of source $1$ to queue 2 when it is in service. For this case,
			since preemption is permitted, the update in service is replaced by the fresh one and, 
			therefore, the age of the update in service in queue 2 changes to zero.
			\item[$l=5$] There is an arrival of another source to queue 2 when it has an update in service. 
			For this case, the update in service is replaced by the fresh one and the age of the update in service in queue 1
			changes to that of the monitor, i.e., $x_2^\prime=x_0$.
	\end{itemize}
	\item[] The transitions 6-11 are symmetric to 0-5, respectively.
	\begin{itemize}
		\item[$l=12$] When there is an update in queue 2, if an update of source $1$ arrives to queue 1, 
			the age of the monitor and of the update in queue 2 do not change, whereas that of queue 1 
			is set to zero, that is, $x_1^\prime=0$.
		\item[$l=13$] When there is an update in queue 2, if there is an arrival of one of the other
		sources in queue 1, the age of the update in queue 1 changes to the age of the monitor, i.e,. 
		$x_1^\prime=x_0$, whereas $x_0$ and $x_2$ do not change.
		\item[$l=14$] When there are updates in both queues, an update of queue 1 is delivered and
		the age of the monitor changes to $x_1$, i.e., $x_0^\prime=x_1$.
		\item[$l=15$] When there are updates in both queues, if an update of queue 1 is lost,
		the age of the monitor does not change and queue 1 is idle.
		\item[$l=16$] When both queues have updates in service, if an update of source $1$ arrives 
		to queue 1, we set $x_1^\prime$ to zero and the rest does not change.
		\item[$l=17$] When both queues have updates in service, if an update of the other sources 
		arrives to queue 1, we set $x_1^\prime$ to the same as the monitor.
	\end{itemize}
	\item[] The transitions 18-23 are symmetric to 12-17, respectively.
\end{itemize}

	The stationary distribution of the Markov Chain in  Figure~\ref{fig:server-routing} is given by
	$$
	\pi_{k_1k_2}=\frac{\rho_1^{k_1}\rho_2^{k_2}}{(1+\rho_1)(1+\rho_2)}, \ \text{ for } k_1,k_2=0,1,
	$$
	where $\rho_j=\frac{\lambda_1p_{1j}+\sum_{k>1}\lambda_kp_{kj}}{\mu_j+\theta_j}, j=1,2$. Moreover, we define 
the value of $b_q$ for each state $q\in\{00,10,01,11\}$ as follows:
$\mathbf b_{00}=[1\ 0 \ 0]$, $\mathbf b_{10}=[1\ 1 \ 0]$, $\mathbf b_{01}=[1\ 0 \ 1]$ and 
$\mathbf b_{11}=[1\ 1 \ 1]$. We also
define, for $q\in\{00,10,01,11\}$, the following vector $\mathbf v_q=[v_q(0)\ v_q(1)\ v_q(2)]$.

Let $ \hat \mu=\mu_1+\mu_2$ and $\hat \theta=\theta_1+\theta_2$. 
The SHS method says that the average AoI of this system is $\sum_q v_q(0)$,
where $v_{q}(0)$ is the solution of the following system of equations:
	\begin{align}
	\lambda\mathbf v_{00}&=[\pi_{00}\ 0 \ 0]+\mu_1 [v_{10}(1)\ 0 \ 0]\nonumber\\&+\theta_1 [v_{10}(0)\ 0 \ 0]\nonumber\\
	&+\mu_2 [v_{01}(2)\ 0 \ 0]+\theta_2 [v_{01}(0)\ 0 \ 0]\label{eq:server-routing-shs-1}
	\end{align} 
	\begin{align}
	(\lambda+\mu_1+\theta_1)\mathbf v_{10}&=[\pi_{10}\ \pi_{10}\ 0]+\lambda_1p_{11}[v_{00}(0)\ 0 \ 0]\nonumber\\ &+
	\sum_{k>1}\lambda_kp_{1k}[v_{00}(0)\ v_{00}(0) \ 0]\nonumber\\ &+\mu_2 [v_{11}(2)\ v_{11}(1) \ 0]\nonumber\\
	&+\theta_2 [v_{11}(0)\ v_{11}(1) \ 0]+\lambda_1p_{11}[v_{10}(0)\ 0 \ 0]\nonumber\\\ &+
	\sum_{k>1}\lambda_kp_{1k}[v_{10}(0)\ v_{10}(0) \ 0]
	\end{align}
	\begin{align}
	(\lambda+\mu_2+\theta_2)\mathbf v_{01}&=[\pi_{01}\ 0 \ \pi_{01}]+\lambda_1p_{12}[v_{00}(0)\ 0 \ 0]\nonumber\\\ &+
	\sum_{k>1}\lambda_kp_{k2}[v_{00}(0) \ 0 \ v_{00}(0) ]\nonumber\\\ &+\mu_1 [v_{11}(1)\ 0 \ v_{11}(2)]\nonumber\\
	&+\theta_1 [v_{11}(0)\ 0 \ v_{11}(2)]\nonumber\\\ &+\lambda_1p_{12}[v_{01}(0)\ 0 \ 0]\nonumber\\\ &+
	\sum_{k>1}\lambda_kp_{k2}[v_{01}(0)\ 0\  v_{01}(0)]
	\end{align}
	\begin{align}
	(\lambda+\hat \mu+\hat \theta)\mathbf v_{11}&=[\pi_{11}\ \pi_{11} \ \pi_{11}]+\lambda_1p_{11}[v_{01}(0)\ 0 \ v_{01}(2)]\nonumber\\\ &+
	\sum_{k>1}\lambda_kp_{k1}[v_{01}(0) \ v_{01}(0) \ v_{01}(2) ]\nonumber\\&+\lambda_1p_{11}[v_{11}(0)\ 0 \ v_{11}(2)]\nonumber\\\ &+
	\sum_{k>1}\lambda_kp_{k1}[v_{11}(0) \ v_{11}(0) \ v_{11}(2) ]\nonumber\\&+\lambda_1p_{12}[v_{10}(0)\ v_{10}(1)\ 0]\nonumber\\\ &+
	\sum_{k>1}\lambda_kp_{k2}[v_{10}(0) \ v_{10}(1) \ v_{10}(0) ]\nonumber\\&+\lambda_1p_{12}[v_{11}(0)\ v_{11}(1)\ 0]\nonumber\\\ &+
	\sum_{k>1}\lambda_kp_{k2}[v_{11}(0) \ v_{11}(1) \ v_{11}(0) ].\label{eq:server-routing-shs-4}
	\end{align}

Since the first equation has two irrelevant variables and the second and third ones have one 
irrelevant variable, the above expression can be written alternatively as a system of 8 equations.

\begin{proposition}
	The average AoI of source $1$ in the aforementioned system is given by 
	$v_{00}(0)+v_{10}(0)+v_{01}(0)+v_{11}(0)$, where for $q\in\{00,10,01,11\}$, $v_{q}(0)$ is 
	the solution of \eqref{eq:server-routing-shs-1}-\eqref{eq:server-routing-shs-4}.
\end{proposition}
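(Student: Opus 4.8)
The plan is to obtain the result as a direct specialization of Theorem~4 of \cite{YK19} (the SHS theorem quoted above) to the four-state Markov chain of Figure~\ref{fig:server-routing}. The content of the proof is therefore the verification that the data entering \eqref{eq:shs-thm} --- the transition rates $\lambda^l$, the reset matrices $\mathbf A_l$, the indicator vectors $\mathbf b_q$, and the stationary distribution $\pi_q$ --- are exactly those encoded in Table~\ref{tab:server-routing} and in the product form $\pi_{k_1k_2}$, so that the instance of \eqref{eq:shs-thm} coincides line by line with \eqref{eq:server-routing-shs-1}--\eqref{eq:server-routing-shs-4}.

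First I would justify the stationary distribution. Because the two queues receive independent Poisson arrival streams and have independent exponential service and loss clocks, the discrete state $(k_1,k_2)$ is the product of two independent two-state birth--death chains, queue $j$ having up-rate equal to its total arrival rate $\lambda_1 p_{1j}+\sum_{k>1}\lambda_k p_{kj}$ and down-rate $\mu_j+\theta_j$. Each factor has stationary distribution proportional to $(1,\rho_j)$ with $\rho_j=\frac{\lambda_1 p_{1j}+\sum_{k>1}\lambda_k p_{kj}}{\mu_j+\theta_j}$, and multiplying the two factors yields the claimed $\pi_{k_1k_2}$; ergodicity holds since both factors are finite irreducible chains. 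This supplies the $\pi_q$ appearing on the right-hand side of \eqref{eq:shs-thm}, while the indicator vectors are read off from which components age at unit rate in each state, giving $\mathbf b_{00}=[1\ 0\ 0]$, $\mathbf b_{10}=[1\ 1\ 0]$, $\mathbf b_{01}=[1\ 0\ 1]$ and $\mathbf b_{11}=[1\ 1\ 1]$.

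Next, for each of the four states I would assemble \eqref{eq:shs-thm}. On the left-hand side the coefficient $\sum_{l\in\mathcal L_q}\lambda^l$ is the total outgoing rate; summing the rates of the links leaving each state in Table~\ref{tab:server-routing} (and using $\sum_j p_{1j}=\sum_j p_{kj}=1$) collapses these to $\lambda$, $\lambda+\mu_1+\theta_1$, $\lambda+\mu_2+\theta_2$ and $\lambda+\hat\mu+\hat\theta$, matching the prefactors in \eqref{eq:server-routing-shs-1}--\eqref{eq:server-routing-shs-4}. On the right-hand side I would enumerate $\mathcal L'_q$, the links entering $q$, and for each such link add $\lambda^l\,\mathbf v_{q_l}\mathbf A_l$; these products are precisely the entries of the last column $\bar v_{q_l}\mathbf A_l$ of Table~\ref{tab:server-routing}, so the sum reproduces the remaining terms of the corresponding equation. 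The one point that needs care is the treatment of the self-loops (links $4,5,10,11,16,17,22,23$): each self-loop at $q$ lies in both $\mathcal L_q$ and $\mathcal L'_q$, so it contributes its rate to the left-hand coefficient and simultaneously a term $\lambda^l\mathbf v_q\mathbf A_l$ on the right. Checking that these appear correctly on both sides --- for instance the pair $16,17$ in the equation for $\mathbf v_{11}$ --- is the main bookkeeping obstacle, together with verifying the mixed reset maps in state $11$, where both $x_1$ and $x_2$ must be carried forward.

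Having matched all four vector equations with \eqref{eq:shs-thm}, Theorem~4 applies verbatim: any non-negative solution $\{\mathbf v_q\}$ of \eqref{eq:server-routing-shs-1}--\eqref{eq:server-routing-shs-4} yields average AoI $\Delta=\sum_q v_q(0)=v_{00}(0)+v_{10}(0)+v_{01}(0)+v_{11}(0)$. Existence of such a non-negative solution follows from ergodicity of the chain exactly as in \cite{YK19}, and since the first equation contains two irrelevant coordinates and the second and third one each, the system reduces to eight scalar equations from which $v_{00}(0),v_{10}(0),v_{01}(0),v_{11}(0)$ can be extracted, completing the proof.
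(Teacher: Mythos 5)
Your proposal is correct and follows essentially the same route as the paper: the paper's own justification of this proposition is precisely the preceding derivation, namely the product-form stationary distribution of the two independent busy/idle chains, the vectors $\mathbf b_q$, and the line-by-line instantiation of \eqref{eq:shs-thm} via Table~\ref{tab:server-routing} (self-loops included on both sides), after which Theorem~4 of \cite{YK19} gives $\Delta=\sum_q v_q(0)$. The bookkeeping points you flag (self-loop links, the mixed resets in state $11$, and the reduction to eight scalar equations) are exactly the ones the paper relies on.
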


\subsubsection{Average AoI Comparison}
\label{sec:real-aoi:sub:nobuffer:comparison}

\begin{figure*}
\centering
\begin{subfigure}[b]{0.3\textwidth}
         \centering
         \includegraphics[width=\columnwidth,clip=true,trim=140pt 610pt 190pt 130pt]{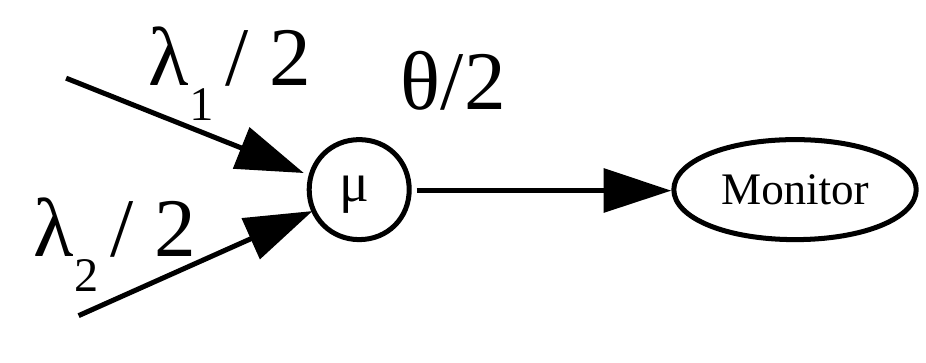}
         \caption{M/M/1/1 queue with arrival rate $\lambda/2$.}
         \label{fig:ex_nobuffer:sub1}
     \end{subfigure}
\hfill
\begin{subfigure}[b]{0.3\textwidth}
         \centering
         \includegraphics[width=\columnwidth,clip=true,trim=140pt 710pt 190pt 30pt]{fig1.pdf}
         \caption{M/M/1/1 queue with service rate $2\mu$.}
         \label{fig:ex_nobuffer:sub2}
     \end{subfigure}
\hfill
\begin{subfigure}[b]{0.3\textwidth}
         \centering
         \includegraphics[width=\columnwidth,clip=true,trim=70pt 460pt 190pt 230pt]{fig1.pdf}
         \caption{Two parallel M/M/1/1 queues.}
         \label{fig:ex_nobuffer:sub3}
     \end{subfigure}
\caption{Representation of the models under comparison in Section~\ref{sec:real-aoi:sub:nobuffer:comparison} for two sources.}
\label{fig:ex_nobuffer}
\end{figure*}

\begin{figure}
	\centering
	\includegraphics[width=\columnwidth,clip=true,trim=10pt 230pt 0pt 240pt]{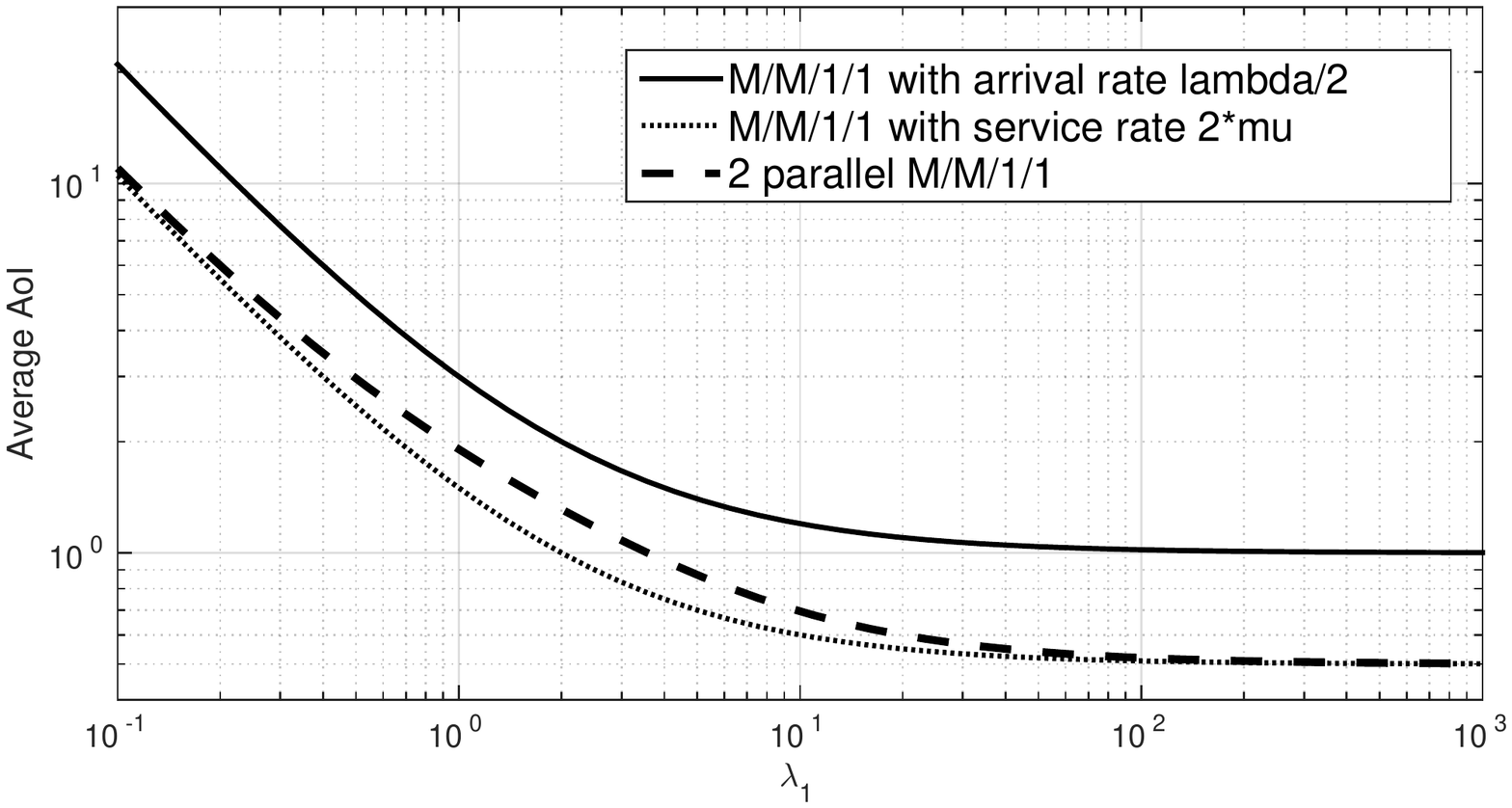}
			\caption{Average AoI of source $1$ when $\lambda_1$ varies from $0.1$ to $10^{3}$ 
			with a single source and without losses ($\lambda_k=0$ for all $k>1$ and $\theta=0$). $\mu=1$.}
	\label{fig:comparison_mm11_theta0_alpha0}
\end{figure}

\begin{figure}[t!]
	\centering
	\includegraphics[width=\columnwidth,clip=true,trim=10pt 230pt 0pt 240pt]{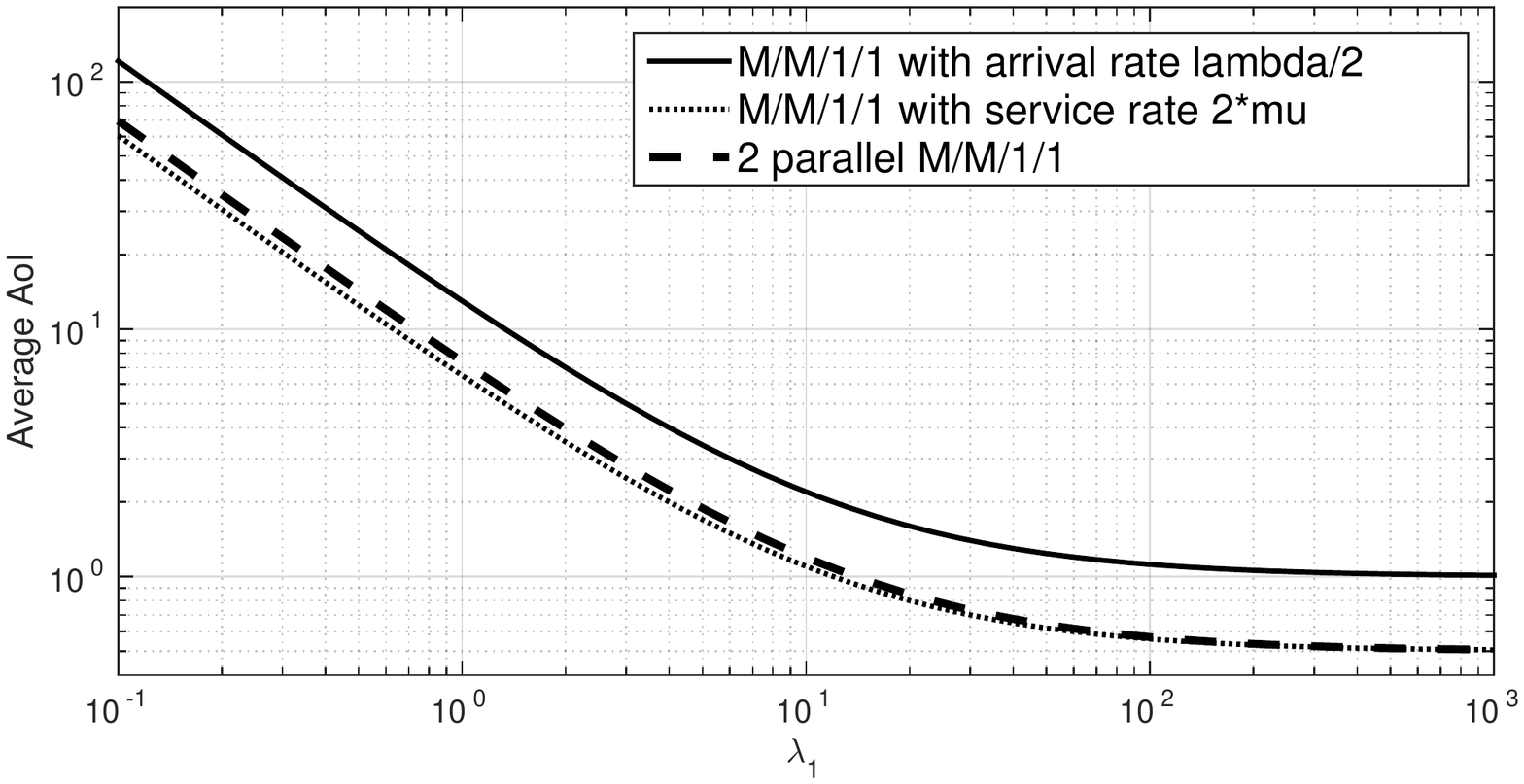}
	\caption{
			Average AoI  of source $1$  when $\lambda_1$ varies from $0.1$ to $10^{3}$ 
			with multiple sources and without losses ($\sum_{k>1}\lambda_k=10$ and $\theta=0$). $\mu=1$.}
	\label{fig:comparison_mm11_theta0_alpha10}
\end{figure}

\begin{figure}[t!]
	\centering
	\includegraphics[width=\columnwidth,clip=true,trim=10pt 230pt 0pt 240pt]{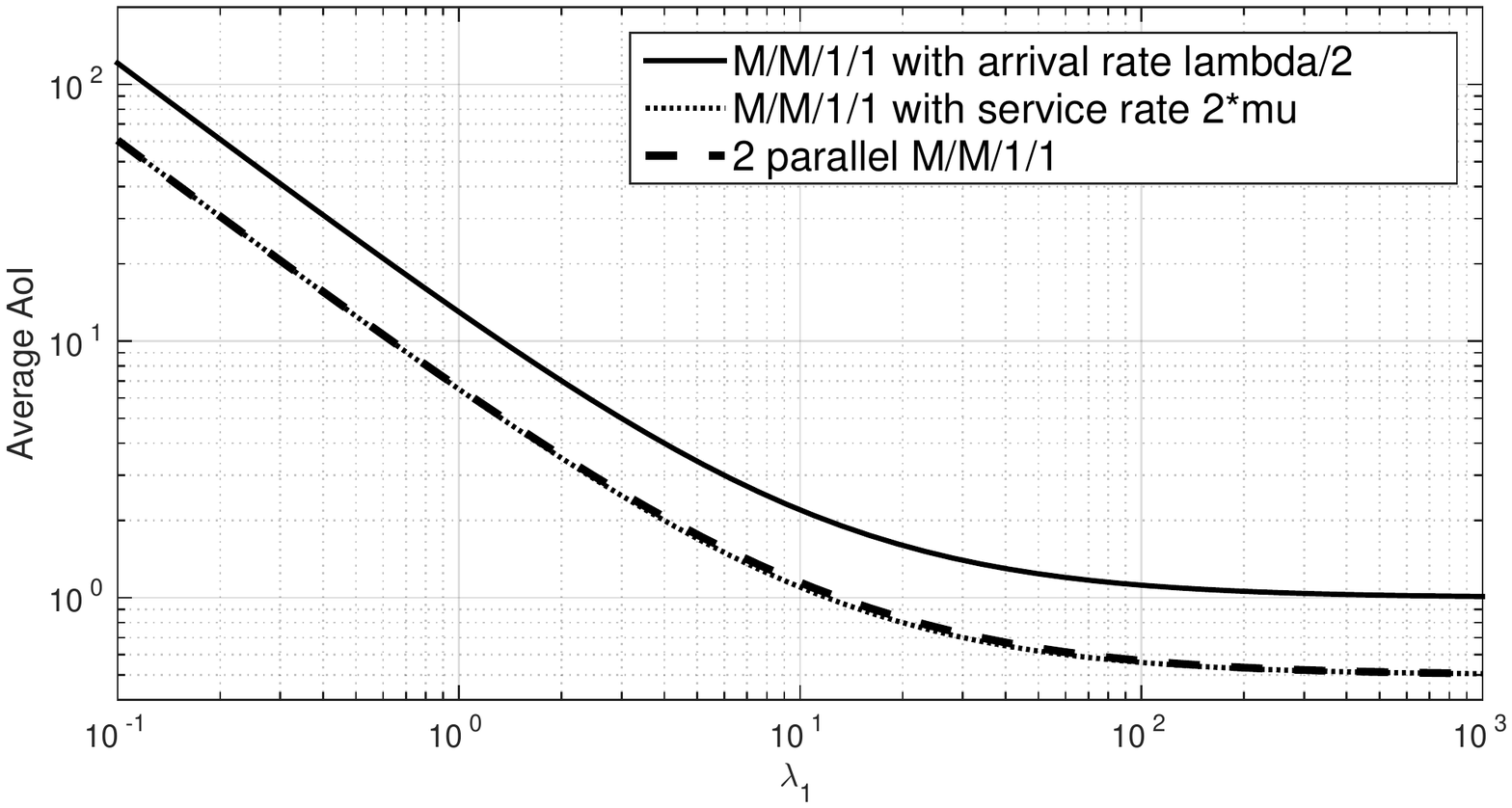}
	\caption{
			Average AoI  of source $1$  when $\lambda_1$ varies from $0.1$ to $10^{3}$ 
			with a single source and losses ($\lambda_k=0$ for all $k>1$ and $\theta=10$). $\mu=1$.
		}
	\label{fig:comparison_mm11_theta10_alpha0}
\end{figure}

\begin{figure}[t!]
	\centering
	\includegraphics[width=\columnwidth,clip=true,trim=10pt 230pt 0pt 240pt]{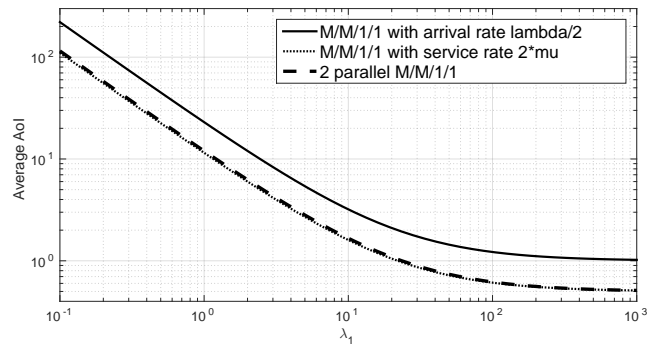}
	\caption{
			Average AoI  of source $1$  when $\lambda_1$ varies from $0.1$ to $10^{3}$ 
			with multiple sources and losses ($\sum_{k>1}\lambda_k=10$ and $\theta=10$). $\mu=1$.}
	\label{fig:comparison_mm11_theta10_alpha10}
\end{figure}


	We now compare the average AoI  of source $1$ for the models we have studied in this section.
	For this purpose, we consider three systems. The first one consists of a single M/M/1/1 queue 
	with arrival rate of source $1$ $\lambda_1/2$ and of the rest of the sources $(1/2)\sum_{k>1}\lambda_k$, loss rate $\theta/2$ and service rate $ \mu$ (see Figure~\ref{fig:ex_nobuffer:sub1}). The average 
	AoI of this model is represented with a solid line. The second system we 
	consider is a single M/M/1/1 queue with arrival rate of source $1$ $\lambda_1$ and of the rest of the sources $\sum_{k>1}\lambda_k$, loss rate $\theta$ and service rate 
	$2\mu$  (see Figure~\ref{fig:ex_nobuffer:sub2}). The average AoI of this model is represented with a dotted line. 
	Finally, we consider a system with two parallel M/M/1/1 queues with arrival rate of source $1$ equal to $\lambda_1$ and of the rest of the sources $\sum_{k>1}\lambda_k$. Besides, we consider that $p_{kj}=1/2$ for all $k=1,\dots,n$ and $j=1,2$, the loss rate and the service rate are  in both servers $\theta/2$ and $\mu$, respectively  (see Figure~\ref{fig:ex_nobuffer:sub3}). The average 
	AoI of this model is represented with a dashed line. 
	Our goal is to determine which system has the smallest average AoI when $\lambda_1$ varies and the rest of the parameters are fixed. To this end, we have solved numerically the systems of equations in 
	\eqref{eq:oneserver-shs} and in  \eqref{eq:server-routing-shs-1}-\eqref{eq:server-routing-shs-4}.
	We set $\mu=1$ in these simulations.
	When we study the system with multiple sources, we consider that $\sum_{k>1}\lambda_k=10$, and
	in the case of losses in the packets in service, we set $\theta=10$.
	\\
	In Figure~\ref{fig:comparison_mm11_theta0_alpha0}, we plot the average AoI 
	of these systems as a function of $\lambda_1$ when there is a single source and there are no 
	losses in the queues. We observe that the smallest age is achieved for the single M/M/1/1 queue system
	with service rate $2\mu$. We also observe that the age of the two parallel M/M/1/1 queues is the same 
	as the latter when $\lambda_1$ is either very small or very large.

In Figures~\ref{fig:comparison_mm11_theta0_alpha10}-\ref{fig:comparison_mm11_theta10_alpha10}, we show that the average AoI 
of a system with two parallel M/M/1/1 queues is equal to that of a single server with service rate
$2\mu$. 

In queueing theory, it is known that, among the systems under consideration in this section, 
the one that minimizes the delay is the single M/M/1/1 queue with service rate $2\mu$. 
Therefore, these illustrations show that the AoI also verifies this property. 
On the other hand, for classical queueing theory metrics such as delay, the performance
of two parallel M/M/1/1 queues coincides with that of a single M/M/1/1 queue with arrival rate $\lambda/2$ and
loss rate $\hat \theta/2$. However, as far as average AoI is concerned, one can see  that, according to the figures we present in this section,
 this is not the case (solid  and dashed lines do not coincide on these figures).

	\subsection{Queues With Buffer}
	\label{sec:real-aoi:sub:buffer}
	We now focus on queues with buffer. For this case, an  update starts getting served upon its arrival to a queue, if the queue is idle.
However, if the queue is busy, the incoming update is put in the last position of the queue
and, if the queue is full, the last update of the buffer is replaced by the new one. In this section, 
we aim to compare the average AoI of a system with one queue and buffer size two with that of  two parallel queues with buffer size one. We first compute the average AoI of the former system and then of the latter one. 

\subsubsection{The M/M/1/3* queue}

We concentrate on a system formed by a queue with a buffer of size two. 
When an update arrives and the 
system is empty, it gets served immediately. However, if an update arrives when another packet is in service, it replaces the last update in the queue if it is full and, otherwise, it is put in the last position of the queue. This system will be denoted in the remainder of 
the paper as the M/M/1/3* queue.

	When traffic comes from $n$
	different sources, we are interested in computing the average AoI of source
	$1$. Updates of source $1$ arrive to the queue according to a Poisson process of rate $\lambda_1$
	and of those of the rest of the sources with rate $\sum_{k>1}\lambda_k$.  We assume that the updates that are waiting in the queue 
are served according to the FCFS discipline and that the service time  is exponentially 
distributed with rate $\mu$, as well as the update that is in service is lost with exponentially 
distributed time with rate $\theta$.

\begin{figure}[t!]
	\centering
	\begin{tikzpicture}[]
	\node[style={circle,draw}] at (0,0) (1) {$0$};
	\node[style={circle,draw}] at (2,0) (2) {$1$};
	\node[style={circle,draw}] at (4,0) (3) {$2$};
	\node[style={circle,draw}] at (6,0) (4) {$3$};
	\node[] at (7,0) (X) {$12$};
	\node[] at (6.75,0.75) (X) {$13$};
	\draw[->] (1) edge [out=90,in=100] node[above] {$0$} (2);
	\draw[->] (1) edge [bend left] node[above] {$1$} (2);
	\draw[<-] (1) edge [bend right] node[below] {$2$} (2);
	\draw[<-] (1) edge [out=280, in=260] node[below] {$3$} (2);
	\draw[->] (4) to [out=340,in=20,looseness=8] (4) ;
	\draw[->] (4) to [out=30,in=50,looseness=18] (4) ;
	\draw[->] (2) edge [out=80,in=100] node[above] {$4$} (3);
	\draw[->] (2) edge [bend left] node[above] {$5$} (3);
	\draw[<-] (2) edge [bend right] node[below] {$6$} (3);
	\draw[<-] (2) edge [out=280, in=260] node[below] {$7$} (3);
	\draw[->] (3) edge [out=80,in=100] node[above] {$8$} (4);
	\draw[->] (3) edge [bend left] node[above] {$9$} (4);
	\draw[<-] (3) edge [bend right] node[below] {$10$} (4);
	\draw[<-] (3) edge [out=280, in=260] node[below] {$11$} (4);
	\end{tikzpicture}  
	\caption{The SHS Markov Chain for the M/M/1/3* queue with multiple sources and losses
		of packets in service.}
	\label{fig:mm13}
\end{figure}
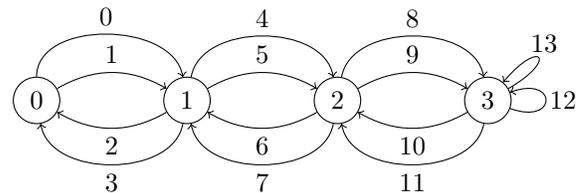

	We employ the SHS method to calculate the average AoI of this system.
	The continuous state is given by $\mathbf x(t)=[x_0(t)\ x_1(t)\ x_2(t)\ x_3(t)]$, where the correspondence between $x_i$(t) and each element in the system is as follows: $x_0$ is 
	the age of the monitor, $x_1$ is the age if the update in service is delivered and $x_2$ and $x_3$ is 
	respectively the age if the update in the first and second positions of the queue are delivered.
The discrete state is a four state Markov Chain, where state $k$ represents that there are $k$
	updates present in the system, with $k=0,1,2,3$. The Markov Chain under consideration and the SHS
	transition are represented respectively in Figure~\ref{fig:mm13} and Table~\ref{tab:mm13}.

\begin{table}[t!]
	\centering
		\begin{tabular}{l l l l l}
			$l$ & $q_l\to q_{l^\prime}$ & $\lambda^{l}$ & $x^\prime=x A_l$ & $\bar v_{q_l}A_l$\\\hline
			0 & $0\to 1$ & $\lambda_1$ & $[x_0\ 0\ 0\ 0]$ & $[v_{0}(0)\ 0\ 0\ 0]$\\
			1 & $0\to 1$ & $\sum_{k>1}\lambda_k$ & $[x_0\ x_0\ 0\ 0]$ & $[v_{0}(0)\ v_{0}(0)\ 0\ 0]$\\
			2 & $1\to 0$ & $\mu$ & $[x_1\ 0\ 0\ 0]$ & $[v_{1}(1)\ 0\ 0\ 0]$\\
			3 & $1\to 0$ & $\theta$ & $[x_0\ 0\ 0\ 0]$ & $[v_{1}(0)\ 0\ 0\ 0]$\\
			4 & $1\to 2$ & $\lambda_1$ & $[x_0\ x_1\ 0\ 0]$ & $[v_{1}(0)\ v_1(1)\ 0\ 0]$\\
			5 & $1\to 2$ & $\sum_{k>1}\lambda_k$ & $[x_0\ x_1\ x_1\ 0]$ & $[v_{1}(0)\ v_{1}(1)\ v_1(1)\ 0]$\\
			6 & $2\to 1$ & $\mu$ & $[x_1\ x_2\ 0\ 0]$ & $[v_{2}(1)\ v_2(2)\ 0\ 0]$\\
			7 & $2\to 1$ & $\theta$ & $[x_0\ x_2\ 0\ 0]$ & $[v_{2}(0)\ v_2(2)\ 0\ 0]$\\
			8 & $2\to 3$ & $\lambda_1$ & $[x_0\ x_1\ x_2\ 0]$ & $[v_{2}(0)\ v_2(1)\ v_2(2)\ 0]$\\
			9 & $2\to 3$ & $\sum_{k>1}\lambda_k$ & $[x_0\ x_1\ x_2\ x_2]$ & $[v_{2}(0)\ v_2(1)\ v_2(2)\ v_2(2)]$\\
			10 & $3\to 2$ & $\mu$ & $[x_1\ x_2\ x_3\ 0]$ & $[v_{3}(1)\ v_3(2)\ v_3(3)\ 0]$\\
			11 & $3\to 2$ & $\theta$ & $[x_0\ x_2\ x_3\ 0]$ & $[v_{3}(0)\ v_3(2)\ v_3(3)\ 0]$\\
			12 & $3\to 3$ & $\lambda_1$ & $[x_0\ x_1\ x_2\ 0]$ & $[v_{3}(0)\ v_3(1)\ v_3(2)\ 0]$\\
			13 & $3\to 3$ & $\sum_{k>1}\lambda_k$ & $[x_0\ x_1\ x_2\ x_2]$ & $[v_{3}(0)\ v_{3}(1)\ v_3(2) \ v_3(2)]$\\
	\end{tabular}
	\caption{Table of transitions of Figure~\ref{fig:mm13}}
	\label{tab:mm13}
\end{table}

We now explain each transition $l$:
\begin{itemize}
	\item[]
	\begin{itemize}
		\item[$l=0$] The system is empty and an update of source $1$ arrives. The age of the monitor is
		not modified and we set $x_1^\prime=0$.
		\item[$l=1$] The system is empty and an update of another source arrives. The age of the monitor is
		not modified and the age $x_1$ changes to $x_0$, i.e., $x_1^\prime=x_0$.
		\item[$l=2$] When there is an update getting in service and the queue is empty. If the update in 
		service is delivered, the age of the monitor changes to $x_1$, i.e., $x_0^\prime=x_1$.
		\item[$l=3$] The queue is empty and the update in service is lost. For this case, the age of the
		monitor does not change and the age of $x_1$ is replaced by zero.
		\item[$l=4$] The queue is busy and an update of source $i$ arrives. The age of the monitor and that of 
		$x_1$ are not modified and we set $x_2^\prime=0$.
		\item[$l=5$] The queue is busy and an update of source $i$ arrives. The age of the monitor and that of 
		$x_1$ are not modified and the age $x_2$ changes to $x_1$, i.e., $x_2^\prime=x_1$.
		\item[$l=6$] There are two updates in the system and the update in service is delivered and, 
		therefore, the age of the monitor changes to $x_1$ and the age $x_2$ to $x_1$, i.e., $x_0^\prime=x_1$ and $x_1^\prime=x_2$.
		\item[$l=7$] There are two updates in the system and the update in service is lost. For this case, the age of the monitor does not change, but the age $x_1$ is replaced by $x_2$, i.e., $x_1^\prime=x_2$ 
		since the update that was waiting start getting served.
		\item[$l=8$]   There are two updates in the system and an update of source $1$ arrives. 
		The ages of the updates that are present in the system do not change and we set $x_3^\prime=0$.
		\item[$l=9$] There are two updates in the system and an update of another source arrives. 
		The ages of the updates that are present in the system do not change 
		and the age $x_3$ changes to $x_2$, i.e., $x_3^\prime=x_2$.
		\item[$l=10$] The system is full and the update in service is delivered. For this case, 
		the age of the monitor changes to $x_1$, the age of $x_1$ to $x_2$ and the age of $x_2$ to $x_3$.
		\item[$l=11$] The system is full and the update in service is lost. For this case, the age of the
		monitor does not change, but the age of $x_1$ changes to $x_2$ and the age of $x_2$ to $x_3$.
		\item[$l=12$] The system is full and an update of source $1$ arrives. The age of the monitor
		and of that of $x_1$ and $x_2$ are not modified and we set $x_3^\prime=0$.
		\item[$l=13$] The system is full and an update of another source arrives. The ages of the monitor,
		of $x_1$ and of $x_2$ do not change, but the age of $x_3$ is set to $x_2$, i.e., $x_3^\prime=x_2$.
	\end{itemize}
\end{itemize}

Let $\lambda=\sum_{k=1}^n\lambda_k$ and $\rho=\frac{\lambda}{\mu+\theta}$. The stationary distribution of the Markov Chain in Figure~\ref{fig:mm13} is
$$
\pi_j=\frac{ \rho^j}{1+\rho+\rho^2+\rho^3}, \ j=0,1,2,3.
$$

We now define the vector $\mathbf b_q$ for all state $q$ of the Markov Chain of 
Figure~\ref{fig:mm13}: $\mathbf b_0=[1\ 0\ 0\ 0]$,  $\mathbf b_1=[1\ 1\ 0\ 0]$, 
$\mathbf b_2=[1\ 1\ 1\ 0]$ and  $\mathbf b_3=[1\ 1\ 1\ 1]$. Besides, for all state $q\in\{0,1,2,3\}$, 
$\mathbf v_q=[v_q(0)\ v_q(1)\ v_q(2)\ v_q(3)]$. From Theorem 4 in \cite{YK19}, we have that
the average AoI of the M/M/1/3* queue is $v_0(0)+v_1(0)+v_2(0)+v_3(0)$, where
$v_q(0)$ is the solution of the following system of equations:
	\begin{align}
	\lambda \mathbf v_0=&[\pi_0\ 0\ 0 \ 0]+\mu[v_1(1)\ 0\ 0\ 0]+\theta[v_1(0)\ 0\ 0\ 0]\label{eq:mm13-shs-1}\\
	(\lambda+\mu+\theta)\mathbf v_1=&[\pi_1\ \pi_1\ 0 \ 0]+\lambda_1[v_0(0)\ 0\ 0\ 0]\nonumber\\
	&+\sum_{k>1}\lambda_k[v_0(0)\ v_0(0)\ 0\ 0]\nonumber\\
	&+\mu[v_2(1)\ v_2(2)\ 0\ 0]\nonumber\\
	&+\theta[v_2(0)\ v_2(2)\ 0\ 0].\\
	(\lambda+\mu+\theta)\mathbf v_2=&[\pi_2\ \pi_2\ \pi_2 \ 0]+\lambda_1[v_1(0)\ v_1(1)\ 0\ 0]\nonumber\\
	&+\sum_{k>1}\lambda_k[v_1(0)\ v_1(1)\ v_1(1)\ 0]\nonumber\\
	&+\mu[v_3(1)\ v_3(2)\ v_3(3)\ 0]\nonumber\\
	&+\theta[v_3(0)\ v_3(2)\ v_3(3)\ 0].\\
	(\lambda+\mu+\theta)\mathbf v_3=&[\pi_3\ \pi_3\ \pi_3 \ \pi_3]+\lambda_1[v_2(0)\ v_2(1)\ v_2(2)\ 0]\nonumber\\
	&+\sum_{k>1}\lambda_k[v_2(0)\ v_2(1)\ v_2(2)\ v_2(2)]\nonumber\\
	&+\lambda_1[v_3(0)\ v_3(1)\ v_3(2)\ 0]\nonumber\\
	&+\sum_{k>1}\lambda_k[v_3(0)\ v_3(1)\ v_3(2)\ v_3(2)]\label{eq:mm13-shs-4}
	\end{align}

Since the first equation has three irrelevant variables and the second and third equations have
respectively two  and one irrelevant variables, the above expression can be  alternatively written as a system of 10 equations.
\begin{proposition}
	The average AoI of source $1$ in the aforementioned system is given by 
	$v_{0}(0)+v_{1}(0)+v_{2}(0)+v_{3}(0)$, where for $q\in\{0,1,2,3\}$, $v_{q}(0)$ is 
	the solution of \eqref{eq:mm13-shs-1}-\eqref{eq:mm13-shs-4}.
\end{proposition}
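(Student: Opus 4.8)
The plan is to obtain the result as a direct application of Theorem~4 of \cite{YK19} (stated above) to the SHS that models the M/M/1/3* queue. The substantive task is not a fresh computation but a verification: one must check that the system \eqref{eq:mm13-shs-1}--\eqref{eq:mm13-shs-4} is exactly the specialization of the general fixed-point equation \eqref{eq:shs-thm} to the Markov chain of Figure~\ref{fig:mm13} together with the reset maps of Table~\ref{tab:mm13}, and then confirm that the hypotheses of the theorem (ergodicity of the chain and non-negativity of the solution) hold. Once this is done, Theorem~4 immediately yields $\Delta=\sum_q v_q(0)$.

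First I would assemble the SHS data. The discrete state is the occupancy $q\in\{0,1,2,3\}$, and the physical description of each transition $l$ given just before the table fixes both the rate $\lambda^l$ and the reset matrix $\mathbf A_l$; from these one reads off the products $\bar{\mathbf v}_{q_l}\mathbf A_l$ recorded in the last column of Table~\ref{tab:mm13}. The age-growth vectors are $\mathbf b_0=[1\ 0\ 0\ 0]$, $\mathbf b_1=[1\ 1\ 0\ 0]$, $\mathbf b_2=[1\ 1\ 1\ 0]$ and $\mathbf b_3=[1\ 1\ 1\ 1]$. Since the chain is a truncated birth--death process with birth rate $\lambda$ and death rate $\mu+\theta$ at every occupied state, its stationary law is $\pi_j=\rho^j/(1+\rho+\rho^2+\rho^3)$ with $\rho=\lambda/(\mu+\theta)$, as stated.

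Next I would instantiate \eqref{eq:shs-thm} state by state. Reading the out-links $\mathcal L_q$ and in-links $\mathcal L_q'$ off Figure~\ref{fig:mm13} (and noting that each self-loop at an occupied state contributes to \emph{both} $\mathcal L_q$ and $\mathcal L_q'$), the total outgoing rate is $\lambda$ for $q=0$ and $\lambda+\mu+\theta$ for $q\in\{1,2,3\}$, which gives the left-hand coefficients. Summing $\lambda^l\,\bar{\mathbf v}_{q_l}\mathbf A_l$ over the in-links and adding $\mathbf b_q\pi_q$ reconstructs each right-hand side: for $q=0$ the in-links are $l=2,3$ from state $1$; for $q=1$ they are $l=0,1$ from state $0$ and $l=6,7$ from state $2$; for $q=2$ they are $l=4,5$ from state $1$ and $l=10,11$ from state $3$; and for $q=3$ they are $l=8,9$ from state $2$ together with the self-loops $l=12,13$. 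This reproduces \eqref{eq:mm13-shs-1}--\eqref{eq:mm13-shs-4} verbatim. Observing that $v_0(1),v_0(2),v_0(3)$, $v_1(2),v_1(3)$ and $v_2(3)$ never enter the equations governing the $v_q(0)$, the $16$ scalar unknowns collapse to the $10$ relevant ones, consistent with the count announced in the text.

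The step I expect to be the main obstacle is discharging the non-negativity requirement of Theorem~4, since the theorem only guarantees the age formula for a \emph{non-negative} solution. I would resolve this by exploiting the forward-substitution structure of the reduced system: the higher-index coordinates feed into the lower-index ones, so the solution can be computed recursively and exhibited coordinate by coordinate, whence non-negativity can be checked directly. Alternatively, and more cleanly, one appeals to the SHS construction itself, in which each $v_q(i)$ is the stationary expectation of a non-negative age quantity restricted to state $q$; this makes non-negativity automatic, and uniqueness of the solution to the ergodic fixed-point system then identifies it with the recursively computed one, closing the argument.
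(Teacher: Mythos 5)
Your proposal is correct and follows essentially the same route as the paper: the proposition is obtained by instantiating Theorem~4 of \cite{YK19} on the SHS of Figure~\ref{fig:mm13} with the transitions of Table~\ref{tab:mm13}, which reproduces \eqref{eq:mm13-shs-1}--\eqref{eq:mm13-shs-4} (with the same in-/out-link bookkeeping and the same reduction to 10 relevant equations) and hence gives $\Delta=\sum_q v_q(0)$. Your extra care in discharging the non-negativity hypothesis of the theorem goes beyond the paper, which invokes the result without addressing that condition.
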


\subsubsection{Two parallel M/M/1/2* queues}
\label{sec:mm12-routing}

	We consider a system formed by two parallel queues with buffer size equal to one and $n$ different sources. 
	The packets are dispatched
	according to a predefined probabilistic routing. An update of source $1$ arrives to the system with rate $\lambda_1$ and it is sent to queue $j$ with probability $p_{1j}$. Therefore, the arrival rate of source $1$ to queue $j$ is $\lambda_1p_{1j}$. Besides, an update of the rest of the sources 
	arrives to the system with rate $\sum_{k>1}\lambda_k$ and an update of source $k\neq i$ is sent to queue $j$ with probability $p_{kj}$. Therefore, the arrival rate of the rest of the sources to queue $j$ is $\sum_{k>1}\lambda_kp_{kj}$. 
	If an update finds the queue full, it replaces the update that is waiting
	in the queue, whereas when the queue is idle, it starts being served immediately. This system
	will be denoted as two parallel M/M/1/2* queues.

	We assume that the service time of queue $j$ is exponentially distributed with rate
	$\mu_j$ and that updates/packets in service in queue $j$ are lost with exponential time of rate 
	$\theta_j$, $j=1,2$. We assume that the queues are decentralized
	in the sense that they do not communicate between each other.


	We seek to compute the average AoI of this system using the SHS method. The 
	continuous state is $\mathbf x(t)=[x_0(t)\ x_{11}(t)\ x_{12}(t)\ x_{21}(t)\ x_{22}(t)]$, where 
	the correspondence between $x_i$(t) and each element is as follows: $x_0$
	is the age of the monitor, $x_{j1}$ is the age if the update/packet in service in queue $j$ is delivered and
	$x_{j2}$ the age if the update that is waiting for service in queue $j$ is delivered. The discrete
	state is described by a Markov Chain, where the state $k_1k_2$ denotes that there are $k_1$ 
	updates in queue 1 and $k_2$ in queue 2, with   $k\in\{0,1,2\}$. The Markov Chain we study
	is depicted in Figure~\ref{fig:mm12-routing}. We note that some of the links are unified to 
	avoid heavy notation. The SHS transitions for this model are reported in Appendix~\ref{app:table-mm12-routing}.

\begin{figure}[t!]
	\centering
	\begin{tikzpicture}[]
	\node[style={circle,draw}] at (0,0) (1) {$00$};
	\node[style={circle,draw}] at (2,0) (2) {$01$};
	\node[style={circle,draw}] at (4,0) (22) {$02$};
	\node[style={circle,draw}] at (0,-2) (3) {$10$};
	\node[style={circle,draw}] at (0,-4) (32) {$20$};
	\node[style={circle,draw}] at (2,-2) (4) {$11$};
	\node[style={circle,draw}] at (2,-4) (42) {$21$};
	\node[style={circle,draw}] at (4,-2) (43) {$12$};
	\node[style={circle,draw}] at (4,-4) (5) {$22$};
	\node[] at (5.25,0) (X1) {$4$};
	\node[] at (5.25,-2) (X1) {$15$};
	\node[] at (-0.75,-4.5) (X1) {$22$};
	\node[] at (2,-5) (X1) {$25$};
	\node[] at (4,-5) (X1) {$28$};
	\node[] at (5.25,-4) (X1) {$29$};
	\draw[->] (1) edge [bend left] node[above] {$0$} (2);
	\draw[<-] (1) edge [bend right] node[above] {$1$} (2);
	\draw[->] (2) edge [bend left] node[above] {$2$} (22);
	\draw[<-] (2) edge [bend right] node[above] {$3$} (22);
	\draw[->] (1) edge [bend right] node[left] {$5$} (3);
	\draw[<-] (1) edge [bend left] node[left] {$6$} (3);
	\draw[->] (2) edge [bend right] node[right] {$7$} (4);
	\draw[<-] (2) edge [bend left] node[right] {$8$} (4);
	\draw[->] (22) edge [bend right] node[right] {$9$} (43);
	\draw[<-] (22) edge [bend left] node[right] {$10$} (43);
	\draw[->] (3) edge [bend left] node[above] {$11$} (4);
	\draw[<-] (3) edge [bend right] node[above] {$12$} (4);
	\draw[<-] (4) edge [bend left] node[above] {$13$} (43);
	\draw[->] (4) edge [bend right] node[above] {$14$} (43);
	\draw[->] (3) edge [bend right] node[left] {$16$} (32);
	\draw[<-] (3) edge [bend left] node[left] {$17$} (32);
	\draw[->] (4) edge [bend right] node[right] {$18$} (42);
	\draw[<-] (4) edge [bend left] node[right] {$19$} (42);
	\draw[->] (43) edge [bend right] node[right] {$20$} (5);
	\draw[<-] (43) edge [bend left] node[right] {$21$} (5);
	\draw[->] (32) edge [bend right] node[above] {$24$} (42);
	\draw[<-] (32) edge [bend left] node[above] {$23$} (42);
	\draw[->] (42) edge [bend right] node[above] {$27$} (5);
	\draw[<-] (42) edge [bend left] node[above] {$26$} (5);
	\draw[->] (32) to [out=210,in=230,looseness=8] (32) ;
	\draw[->] (5) to [out=260,in=280,looseness=8] (5) ;
	\draw[->] (42) to [out=260,in=280,looseness=8] (42) ;
	\draw[->] (43) to [out=340,in=20,looseness=8] (43) ;
	\draw[->] (22) to [out=340,in=20,looseness=8] (22) ;
	\draw[->] (5) to [out=340,in=20,looseness=8] (5) ;
	\end{tikzpicture}  
	\caption{The SHS Markov Chain for system with two parallel M/M/1/2* queues 
		with multiple sources and losses of packets in service.}
	\label{fig:mm12-routing}
\end{figure}
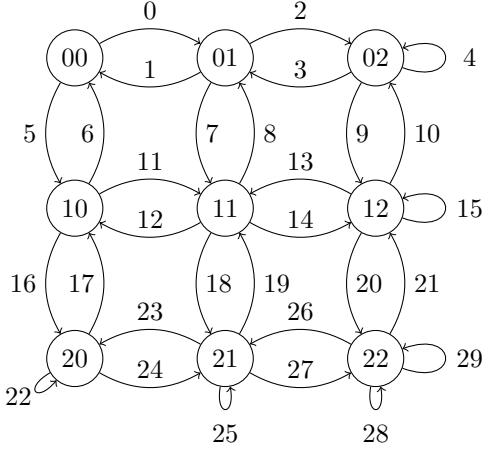

	Let $\rho_1=\frac{\lambda_1p_{11}+\sum_{k>1}\lambda_kp_{k1}}{\mu_1+\theta_1}$ and $\rho_2=\frac{\lambda_1p_{12}+\lambda_kp_{k2}}{\mu_2+\theta_2}.$ The stationary distribution of the Markov Chain in Figure~\ref{fig:mm12-routing} is 
	$$
	\pi_{k_1k_2}=\frac{\rho_1^{k_1}\rho_2^{k_2}}{(1+\rho_1+\rho_1^2)(1+\rho_2+\rho_2^2)}, k_1,k_2\in\{0,1,2\}.
	$$

Let $\hat \mu=\mu_1+\mu_2$, $\hat \theta=\theta_1+\theta_2$ and $\mathcal Q=\{00,10,20,01,11,21,02,12,22\}$. For every $q\in \mathcal Q$, we define 
$\mathbf v_q=[v_q(0)\ v_q(1)\ v_q(2)\ v_q(3)\ v_q(4)]$ and the vector $\mathbf b_{q}$ as follows: 
$\mathbf b_{00}=[1\ 0\ 0\ 0\ 0]$, $\mathbf b_{10}=[1\ 1\ 0\ 0\ 0]$, $\mathbf b_{20}=[1\ 1\ 1\ 0\ 0]$,
$\mathbf b_{01}=[1\ 0\ 0\ 1\ 0]$, $\mathbf b_{11}=[1\ 1\ 0\ 1\ 0]$, $\mathbf b_{21}=[1\ 1\ 1\ 1\ 0]$,
$\mathbf b_{02}=[1\ 0\ 0\ 1\ 1]$, $\mathbf b_{12}=[1\ 1\ 0\ 1\ 1]$ and $\mathbf b_{22}=[1\ 1\ 1\ 1\ 1]$. We use the result of Theorem 4 in \cite{YK19} that shows that the average AoI
of this system is given by $\sum_{q\in \mathcal Q}v_q(0),$ where $v_q(0)$ is the solution to the
following system of equations:
	\begin{align}
	\lambda \mathbf v_{00}=&[\pi_{00}\ 0\ 0\ 0\ 0]+\mu_1[v_{10}(1)\ 0 \ 0\ 0\ 0]\nonumber\\
	&+\theta_1[v_{10}(0)\ 0 \ 0\ 0\ 0]+\mu_2[v_{01}(3)\ 0 \ 0\ 0\ 0]\nonumber\\
	&+\theta_2[v_{01}(0)\ 0 \ 0\ 0\ 0]\label{eq:mm12-routing-shs-1}
	\end{align}
	\begin{align}
	(\lambda+\mu_1+\theta_1) \mathbf v_{10}=&[\pi_{10}\ \pi_{10}\ 0\ 0\ 0]\nonumber\\&+\lambda_1p_{11}
	[v_{00}(0)\ 0\ 0\ 0\ 0]\nonumber\\
	&+\sum_{k>1}\lambda_kp_{k1}[v_{00}(0)\  v_{00}(0)\ 0 \ 0 \ 0]\nonumber\\
	&+\mu_2[v_{11}(3)\ v_{11}(1) \ 0\ 0\ 0]\nonumber\\
	&+\theta_2[v_{11}(0)\ v_{11}(1) \ 0\ 0\ 0]\nonumber\\
	&+\mu_1[v_{20}(1)\ v_{20}(2) \ 0\ 0\ 0]\nonumber\\
	&+\theta_1[v_{20}(0)\ v_{20}(2) \ 0\ 0\ 0]\nonumber\\
	&+\mu_2[v_{11}(3)\ v_{11}(1) \ 0\ 0\ 0]\nonumber\\
	&+\theta_2[v_{11}(0)\ v_{11}(1) \ 0\ 0\ 0]
	\end{align}
	\begin{align}
	(\lambda+\mu_1+\theta_1) \mathbf v_{20}=&[\pi_{20}\ \pi_{20}\ \pi_{20}\ 0\ 0]\nonumber\\
	&+\mu_2[v_{21}(3)\ v_{21}(1) \ v_{21}(2)\ 0\ 0]\nonumber\\
	&+\theta_2[v_{21}(0)\ v_{21}(1) \ v_{21}(2)\ 0\ 0]\nonumber\\
	&+\lambda_1p_{11}
	[v_{10}(0)\ v_{10}(1)\ 0\ 0\ 0]\nonumber\\
	&+\sum_{k>1}\lambda_kp_{k1}[v_{10}(0)\  v_{10}(1)\ v_{10}(1) \ 0 \ 0]\nonumber\\
	&+\lambda_1p_{11}
	[v_{20}(0)\ v_{20}(1)\ 0\ 0\ 0]\nonumber\\
	&+\sum_{k>1}\lambda_kp_{k1}[v_{20}(0)\  v_{20}(1)\ v_{20}(1) \ 0 \ 0]
	\end{align}
	\begin{align}
	(\lambda+\mu_2+\theta_2) \mathbf v_{01}=&[\pi_{01}\ 0 \ 0 \pi_{01}\ 0]\nonumber\\
	&+\lambda_1p_{12}[v_{00}(0)\ 0\ 0\ 0\ 0]\nonumber\\
	&+\sum_{k>1}\lambda_kp_{k2}[v_{00}(0)\ 0 \ 0  v_{00}(0) \ 0]\nonumber\\
	&+\mu_2[v_{02}(3)\ 0 \ 0 \ v_{20}(4) \ 0]\nonumber\\
	&+\theta_2[v_{02}(0)\ 0 \ 0 \ v_{20}(4) \ 0]\nonumber\\
	&+\mu_1[v_{11}(1)\ 0 \ 0\ v_{11}(3) \ 0]\nonumber\\
	&+\theta_1[v_{11}(0)\ 0 \ 0 \ v_{11}(3)\ 0]
	\end{align}
	\begin{align}
	(\lambda+\hat \mu+\hat \theta) \mathbf v_{11}=&[\pi_{11}\ \pi_{11} \ 0\ \pi_{11}\ 0]\nonumber\\
	&+\lambda_1p_{11}
	[v_{01}(0)\ 0\ 0 \ v_{01}(3)\ 0]\nonumber\\
	&+\sum_{k>1}\lambda_kp_{k1}[v_{01}(0)\ v_{01}(0) \ 0\   v_{01}(3) \ 0]\nonumber\\
	&+\lambda_1p_{12}[v_{10}(0)\ v_{10}(1) \ 0 \ 0 \  0]\nonumber\\
	&+\sum_{k>1}\lambda_kp_{k2}[v_{10}(0)\ v_{10}(1) \ 0  v_{10}(0) \ 0]\nonumber\\
	&+\mu_2[v_{12}(3)\ v_{12}(1) \ 0 \ v_{21}(4) \ 0]\nonumber\\
	&+\theta_2[v_{12}(0)\ v_{12}(1) \ 0 \ v_{21}(4) \ 0]\nonumber\\
	&+\mu_1[v_{21}(1)\ v_{21}(2) \ 0\ v_{21}(3) \ 0]\nonumber\\
	&+\theta_1[v_{21}(0)\ v_{21}(2) \ 0 \ v_{21}(3)\ 0]
	\end{align}
	\begin{align}
	(\lambda+\hat \mu+\hat \theta) \mathbf v_{21}=&[\pi_{21}\ \pi_{21} \ \pi_{21} \pi_{21}\ 0]\nonumber\\
	+&\lambda_1p_{11}[v_{11}(0)\ v_{11}(1)\ 0 \ v_{11}(3)\ 0]\nonumber\\
	+&\sum_{k>1}\lambda_kp_{k1}[v_{11}(0)\ v_{11}(1) \ v_{11}(1)  \ v_{11}(3) \ 0]\nonumber\\
	+&\lambda_1p_{11}[v_{21}(0)\ v_{21}(1)\ 0 \ v_{21}(3)\ 0]\nonumber\\
	+&\sum_{k>1}\lambda_kp_{k1}[v_{21}(0)\ v_{21}(1) \ v_{21}(1) \  v_{21}(3) \ 0]\nonumber\\
	+&\lambda_1p_{12}[v_{20}(0)\ v_{20}(1) \ v_{20}(2) \ 0 \  0]\nonumber\\
	+&\sum_{k>1}\lambda_kp_{k2}[v_{20}(0)\ v_{20}(1) \ v_{20}(2)  v_{20}(0) \ 0]\nonumber\\
	+&\mu_2[v_{22}(3)\ v_{22}(1) \ v_{22}(2) \ v_{22}(4) \ 0]\nonumber\\
	+&\theta_2[v_{22}(0)\ v_{22}(1) \ v_{22}(2) \ v_{22}(4) \ 0]
	\end{align}
	\begin{align}
	(\lambda+\mu_2+\theta_2) \mathbf v_{02}=&[\pi_{02}\ 0 \ 0\  \pi_{02}\ \pi_{02}]\nonumber\\
	&+\lambda_1p_{12}
	[v_{01}(0)\ 0\ 0\ v_{01}(3)\ 0]\nonumber\\
	&+\sum_{k>1}\lambda_kp_{k2}[v_{01}(0)\ 0 \ 0 \  v_{01}(3) \ v_{01}(3)]\nonumber\\
	&+\lambda_1p_{12}[v_{02}(0)\ 0\ 0\ v_{02}(3)\ 0]\nonumber\\
	&+\sum_{k>1}\lambda_kp_{k2}[v_{02}(0)\ 0 \ 0\   v_{02}(3) \ v_{02}(3)]\nonumber\\
	&+\mu_1[v_{12}(1)\ 0 \ 0\ v_{12}(3) \ v_{12}(4)]\nonumber\\
	&+\theta_1[v_{12}(0)\ 0 \ 0 \ v_{12}(3)\ v_{12}(4)]
	\end{align}
	\begin{align}
	(\lambda+\hat \mu+\hat \theta) \mathbf v_{12}&=[\pi_{12}\ \pi_{12} \ 0\ \pi_{12}\ \pi_{12}]\nonumber\\
	&+\lambda_1p_{11}
	[v_{11}(0)\ v_{11}(1)\ 0 \ v_{11}(3)\ 0]\nonumber\\
	&+\sum_{k>1}\lambda_kp_{k1}[v_{11}(0)\ v_{11}(1) \ 0 \  v_{11}(3) \ v_{11}(3)]\nonumber\\
	&+\lambda_1p_{12}[v_{20}(0)\ v_{20}(1) \ v_{20}(2) \ 0 \  0]\nonumber\\
	&+\sum_{k>1}\lambda_kp_{k2}[v_{20}(0)\ v_{20}(1) \ v_{20}(2) \ v_{20}(0) \ 0]\nonumber\\
	&+\mu_1[v_{22}(1)\ v_{22}(2) \ 0\ v_{22}(3) \ v_{22}(4)]\nonumber\\
	&+\theta_1[v_{22}(0)\ v_{22}(2) \ 0 \ v_{22}(3)\ v_{22}(4)]
	\end{align}
	\begin{align}
	(\lambda+\hat \mu+\hat \theta) \mathbf v_{22}&=[\pi_{22}\ \pi_{22} \ \pi_{22}\ \pi_{22}\ \pi_{22}]\nonumber\\
	+&\lambda_1p_{11}
	[v_{12}(0)\ v_{12}(1)\ 0 \ v_{12}(3)\ v_{12}(4)]\nonumber\\
	+&\sum_{k>1}\lambda_kp_{k1}[v_{12}(0)\ v_{12}(1) \ v_{12}(1)\   v_{12}(3) \ v_{12}(4)]\nonumber\\
	+&\lambda_1p_{11}
	[v_{22}(0)\ v_{22}(1)\ 0 \ v_{22}(3)\ v_{22}(4)]\nonumber\\
	+&\sum_{k>1}\lambda_kp_{k1}[v_{22}(0)\ v_{22}(1) \ v_{22}(1)\   v_{22}(3) \ v_{22}(4)]\nonumber\\
	+&\lambda_1p_{12}[v_{21}(0)\ v_{21}(1) \ v_{21}(2) \ v_{21}(3) \  0]\nonumber\\
	+&\sum_{k>1}\lambda_kp_{k2}[v_{21}(0)\ v_{21}(1) \ v_{21}(2) \ v_{21}(3) \ v_{21}(3)]\nonumber\\
	+&\lambda_1p_{12}[v_{22}(0)\ v_{22}(1) \ v_{22}(2) \ v_{22}(3) \  0]\nonumber\\
	+&\sum_{k>1}\lambda_kp_{k2}[v_{22}(0)\ v_{22}(1) \ v_{22}(2) \ v_{22}(3) \ v_{22}(3)]\label{eq:mm12-routing-shs-9}
	\end{align}

From the above expressions, if we remove the irrelevant variables, we obtain a system of 27 equations.

\begin{proposition}
	The average AoI of source $1$ in the aforementioned system is given by 
	$\sum_{q\in\mathcal Q}v_q(0)$, where $v_{q}(0)$ is 
	the solution of \eqref{eq:mm12-routing-shs-1}-\eqref{eq:mm12-routing-shs-9}.
\end{proposition}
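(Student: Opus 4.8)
The plan is to prove this exactly as the analogous Propositions for the M/M/1/1, the two parallel M/M/1/1 and the M/M/1/3* systems were established, namely as a direct instantiation of Theorem~4 of \cite{YK19} (the SHS characterization stated above). The proof has three ingredients: (i) the hybrid model $(q(t),\mathbf x(t))$ is correctly specified, (ii) the discrete chain is ergodic with the claimed product-form stationary law, and (iii) the balance equations \eqref{eq:shs-thm} written out for this model coincide term-by-term with \eqref{eq:mm12-routing-shs-1}--\eqref{eq:mm12-routing-shs-9}; the average AoI of source~$1$ is then $\sum_{q\in\mathcal Q}v_q(0)$ by the theorem.

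First I would confirm the continuous-state bookkeeping. The vector $\mathbf x=[x_0\ x_{11}\ x_{12}\ x_{21}\ x_{22}]$ records the monitor age together with the would-be ages attached to the in-service and the waiting packet of each queue, so $\mathbf b_q$ carries a one in precisely the coordinates that grow at unit rate in state $q$ (the monitor plus each occupied position), as listed before the proposition. Each transition in Appendix~\ref{app:table-mm12-routing} must then be matched to the physical rule of the M/M/1/2* discipline: a source-$1$ arrival resets the affected age to $0$; an arrival from the remaining sources resets it to $x_0$ (no fresher information about source~$1$); a service completion delivers the in-service packet, so $x_0'=x_{j1}$, and the FCFS promotion sets $x_{j1}'=x_{j2}$; a loss leaves $x_0$ unchanged but triggers the same promotion; and an arrival to a full queue overwrites only the waiting position, leaving the discrete state fixed. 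This transition bookkeeping is where essentially all the work and all the risk of error lie, and it is the main obstacle.

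Second I would justify the stationary distribution. Because the routing is independent across queues and the queues are decentralized, Poisson thinning makes the per-queue arrival streams independent Poisson processes, so $k_1(t)$ and $k_2(t)$ evolve as independent M/M/1/2* birth--death chains with loads $\rho_1$ and $\rho_2$; the joint chain on $\mathcal Q$ is therefore ergodic and its stationary law factorizes as $\pi_{k_1k_2}=\pi^{(1)}_{k_1}\pi^{(2)}_{k_2}$, giving the stated expression. With $\pi_{k_1k_2}>0$ in hand I would assemble, for each $q\in\mathcal Q$, the balance equation \eqref{eq:shs-thm}. On the left the coefficient $\sum_{l\in\mathcal L_q}\lambda^l$ is the total rate of \emph{all} outgoing links, including the self-loops produced by arrivals to a full queue; since such an arrival still resets the waiting position it belongs simultaneously to $\mathcal L_q$ and to $\mathcal L'_q$, which is exactly why every left-hand coefficient collapses to $\lambda$ plus the service and loss rates of the busy queues. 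On the right, $\mathbf b_q\pi_q$ supplies the source term while the incoming contributions $\lambda^l\,\bar v_{q_l}A_l$ are read directly off the last column of the transition table. Matching the two sides state by state reproduces \eqref{eq:mm12-routing-shs-1}--\eqref{eq:mm12-routing-shs-9}.

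Finally, since the system is finite and every coordinate of $\mathbf x$ is reset at a positive rate, the age process is stable and the correlation vectors $\mathbf v_q=\lim_{t\to\infty}\mathbb E[\mathbf x(t)\,\mathbbm 1(q(t)=q)]$ exist and are non-negative by construction; they satisfy \eqref{eq:shs-thm}, so the hypotheses of Theorem~4 of \cite{YK19} hold and $\Delta=\sum_{q\in\mathcal Q}v_q(0)$ as claimed. Discarding the coordinates whose $\mathbf b_q$-entry is zero removes the irrelevant variables and reduces the array to the $27$ effective scalar equations noted in the text. I expect the invocation of the theorem and the stationary-distribution argument to be routine; the only delicate part is verifying the reset maps of step two, particularly the buffer overwrite and the FCFS promotion on departure.
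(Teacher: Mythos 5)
Your proposal is correct and follows essentially the same route as the paper: the paper likewise presents no separate proof but derives the proposition directly by specifying the hybrid state and reset maps (Table~\ref{tab:mm12-routing}), stating the product-form stationary distribution and the vectors $\mathbf b_q$, and instantiating Theorem~4 of \cite{YK19} to obtain \eqref{eq:mm12-routing-shs-1}--\eqref{eq:mm12-routing-shs-9}, from which $\Delta=\sum_{q\in\mathcal Q}v_q(0)$. Your added remarks on ergodicity, the independence argument behind the product form, and the self-loops appearing in both $\mathcal L_q$ and $\mathcal L'_q$ are consistent elaborations of what the paper leaves implicit.
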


\subsubsection{Age Comparison}
\label{sec:real-aoi:sub:buffer:comparison}

\begin{figure*}
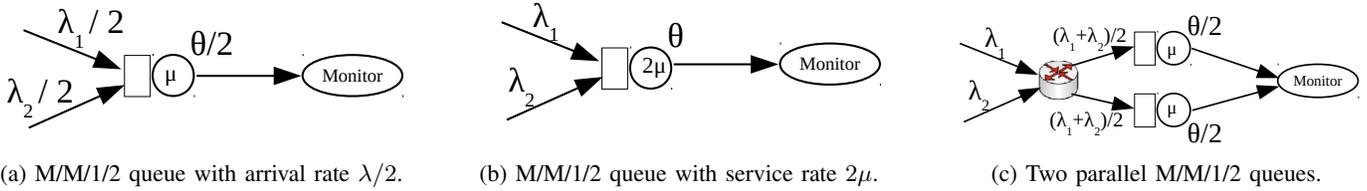

\centering
\begin{subfigure}[b]{0.3\textwidth}
         \centering
         \includegraphics[width=\columnwidth,clip=true,trim=120pt 220pt 190pt 520pt]{fig1.pdf}
         \caption{M/M/1/2 queue with arrival rate $\lambda/2$.}
         \label{fig:ex_buffer:sub1}
     \end{subfigure}
\hfill
\begin{subfigure}[b]{0.3\textwidth}
         \centering
         \includegraphics[width=\columnwidth,clip=true,trim=120pt 320pt 190pt 420pt]{fig1.pdf}
         \caption{M/M/1/2 queue with service rate $2\mu$.}
         \label{fig:ex_buffer:sub2}
     \end{subfigure}
\hfill
\begin{subfigure}[b]{0.3\textwidth}
         \centering
         \includegraphics[width=\columnwidth,clip=true,trim=50pt 90pt 190pt 640pt]{fig1.pdf}
         \caption{Two parallel M/M/1/2 queues.}
         \label{fig:ex_buffer:sub3}
     \end{subfigure}
\caption{Representation of the models under comparison in Section~\ref{sec:real-aoi:sub:buffer:comparison} for two sources.}
\label{fig:ex_buffer}
\end{figure*}

	We compare the average AoI of the models presented in this section.
	We focus on the following three systems. First, we consider an M/M/1/3* queue 
	with arrival rate of source $1$ equal to $\lambda_1/2$ and that of the rest of the sources $(1/2)\sum_{k>1}\lambda_k$, loss rate $\theta/2$ and service rate $\mu$ (see Figure~\ref{fig:ex_buffer:sub1}). The average 
	AoI of this model is represented with a solid line. The second system we 
	consider is an M/M/1/3* queue with arrival rate of source $1$ equal to $\lambda_1$ and that of the rest of the sources $\sum_{k>1}\lambda_k$, 
	loss rate $\theta$ and service rate 
	$2\mu$ (see Figure~\ref{fig:ex_buffer:sub2}). The average AoI of this model is represented with a dotted line. 
	We also consider a system with two parallel M/M/1/2* queues with arrival rate of source $1$ $\lambda_1$ and  that of the rest of the sources $\sum_{k>1}\lambda_k$. Each of the servers satisfies that $p_{k1}=p_{k2}=1/2$ for all $k=1,\dots,n$, has a loss rate equal to $ \theta/2$ and a service rate $ \mu$  (see Figure~\ref{fig:ex_buffer:sub3}). The average 
	AoI of the latter model is represented with a dashed line. 
	We aim to investigate which  system has the smallest average AoI when $\lambda_1$ varies. Thus, we have solved numerically the systems of equations in 
	\eqref{eq:mm13-shs-1}-\eqref{eq:mm13-shs-4} and of \eqref{eq:mm12-routing-shs-1}-\eqref{eq:mm12-routing-shs-9}.
	We set $\mu=1$ in these simulations.
	When we study the system with multiple sources, we consider that $\sum_{k>1}\lambda_{k}=10$ and
	in the case of packet losses, we set $\theta=10$.

We first focus on the average AoI for a single source and when there are no packet losses.
The evolution of the AoI of source $1$ with respect to $\lambda_1$ is 
represented in Figure~\ref{fig:comparison_mm12_theta0_alpha0}. We observe that for the 
M/M/1/3* queue with service rate $2\mu$, the average AoI coincides
with that of the two parallel M/M/1/2* queues when $\lambda_1$ is either very small or very large. 
	Another interesting property obtained from  these simulations is that the average AoI of the
	M/M/1/3* queue is not monotone in $\lambda_1$. This phenomenon is due to the FCFS discipline and the presence of the buffer of size 2 and can be interpreted as follows. For small arrival rates, all the packets will be delivered directly without staying too much time in the buffer and the system will behave like an M/M/1/1 queue. Therefore, when the arrival rate increases and on average there is only one packet (or less) in the server, the AoI will keep decreasing since more fresh packets improves the AoI. We can observe in this figure  that the minimum AoI is achieved when $\lambda_1=\mu$ (which means that on average we have one packet in the server as explained above). Then,  when the arrival rate keeps increasing, there will be always packets in the buffer (in addition to the packet in the server) and the arrived packets will be delayed, which will increase the AoI. When the arrival rate grows very large, the packet in the second place in the buffer will be constantly replaced by the new arrived packet. However, there  will be always a delay due the fact that the packet in the first place of the buffer should wait until the packet in the server is delivered. In other words, the AoI will converge to an asymptotic value. However, this asymptotic value is greater than the AoI when $\lambda=\mu$ since in that case there is on average one packet in the system (i.e. the packet is directly served by the server) and hence the packets are not delayed by the buffer. \textcolor{black}{In addition to these results, we provide in Appendix B some results  obtained by simulations for single M/M/1/3* queue. As expected The results assess the accuracy of the SHS method used to evaluate the average AoI.} 

In Figure~\ref{fig:comparison_mm12_theta0_alpha10}, we study the average AoI
for a system with multiple sources and without losses. We see that the AoI of 
the two parallel M/M/1/2* queues coincides with that of the M/M/1/3* queue with half traffic rate 
when $\lambda$ is small, whereas it coincides with that of  the M/M/1/3* queue with double service
rate when $\lambda$ is large. It is worth mentioning that we consider here that the arrival rate of the rest of the sources is $\sum_{k>1}\lambda_k=10$, which implies that there will be always packets in the server and in the buffer and the system cannot behave as an M/M/1/1 by changing $\lambda_1$ of the first source. This explains why the average AoI decreases with $\lambda_1$ until reaching a limiting value and the average AoI does not have the same shape as in Figure~\ref{fig:comparison_mm12_theta0_alpha0}.

We also study the average AoI with a single source and losses in 
Figure~\ref{fig:comparison_mm12_theta10_alpha0}. For this case, the AoI of 
the system with two parallel M/M/1/2* queues and of an M/M/1/3* queue with double service rate 
coincide when $\lambda_1$ is either small or large. In this case, we can see that the average AoI decreases with the arrival rate $\lambda_1$. This can be explained by the fact that, since the packets can get lost, it is better from AoI perspective to have more arrived packets (even if these packets are delayed in the buffer).

Finally, in Figure~\ref{fig:comparison_mm12_theta10_alpha10}, we show the average age of 
information for different values of $\lambda_1$ when there are multiple sources and losses.
This illustration presents that, depending on the value of $\lambda_1$, the AoI 
approaches that of an M/M/1/3* with half arrival rate and half loss rate or that of 
an M/M/1/3* with double service rate, as in Figure~\ref{fig:comparison_mm12_theta10_alpha10}.

The main conclusion of these illustrations is that, from an AoI perspective,  the M/M/1/3* 
queue with double service rate is the optimal one 
among the systems under consideration. Besides, we characterize the instances where the 
AoI of the two parallel queues coincides with the optimal AoI.

\begin{figure}[t!]
	\centering
	\includegraphics[width=\columnwidth,clip=true,trim=10pt 230pt 0pt 240pt]{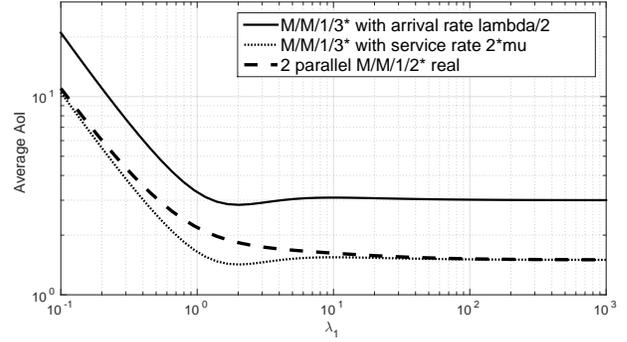}
	\caption{Average AoI comparison when $\lambda_1$ varies from $0.1$ to $10^{3}$ with a single source and without losses ($\lambda_k=0$ for all $k>1$ and $\theta=0$). $\mu=1$.
	}
	\label{fig:comparison_mm12_theta0_alpha0}
\end{figure}

%

\begin{figure}[t!]
	\centering
	\includegraphics[width=\columnwidth,clip=true,trim=10pt 230pt 0pt 240pt]{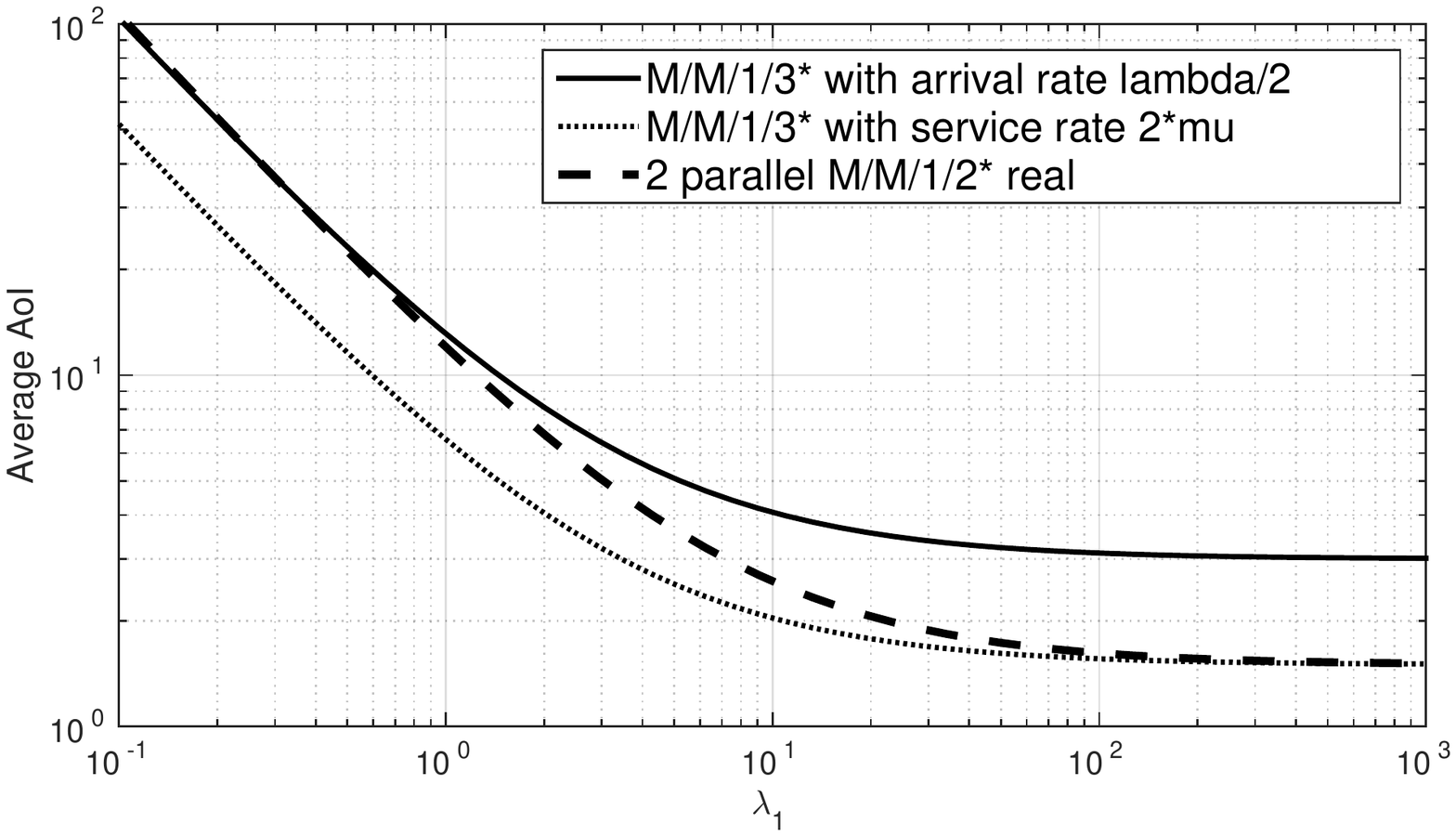}
	\caption{
			Average AoI comparison when $\lambda_1$ varies from $0.1$ to $10^{3}$ 
			with multiple sources and without losses  ($\sum_{k>1}\lambda_k=10$ and $\theta=0$). $\mu=1$.}
	\label{fig:comparison_mm12_theta0_alpha10}
\end{figure}

\begin{figure}[t!]
	\centering
	\includegraphics[width=\columnwidth,clip=true,trim=10pt 230pt 0pt 240pt]{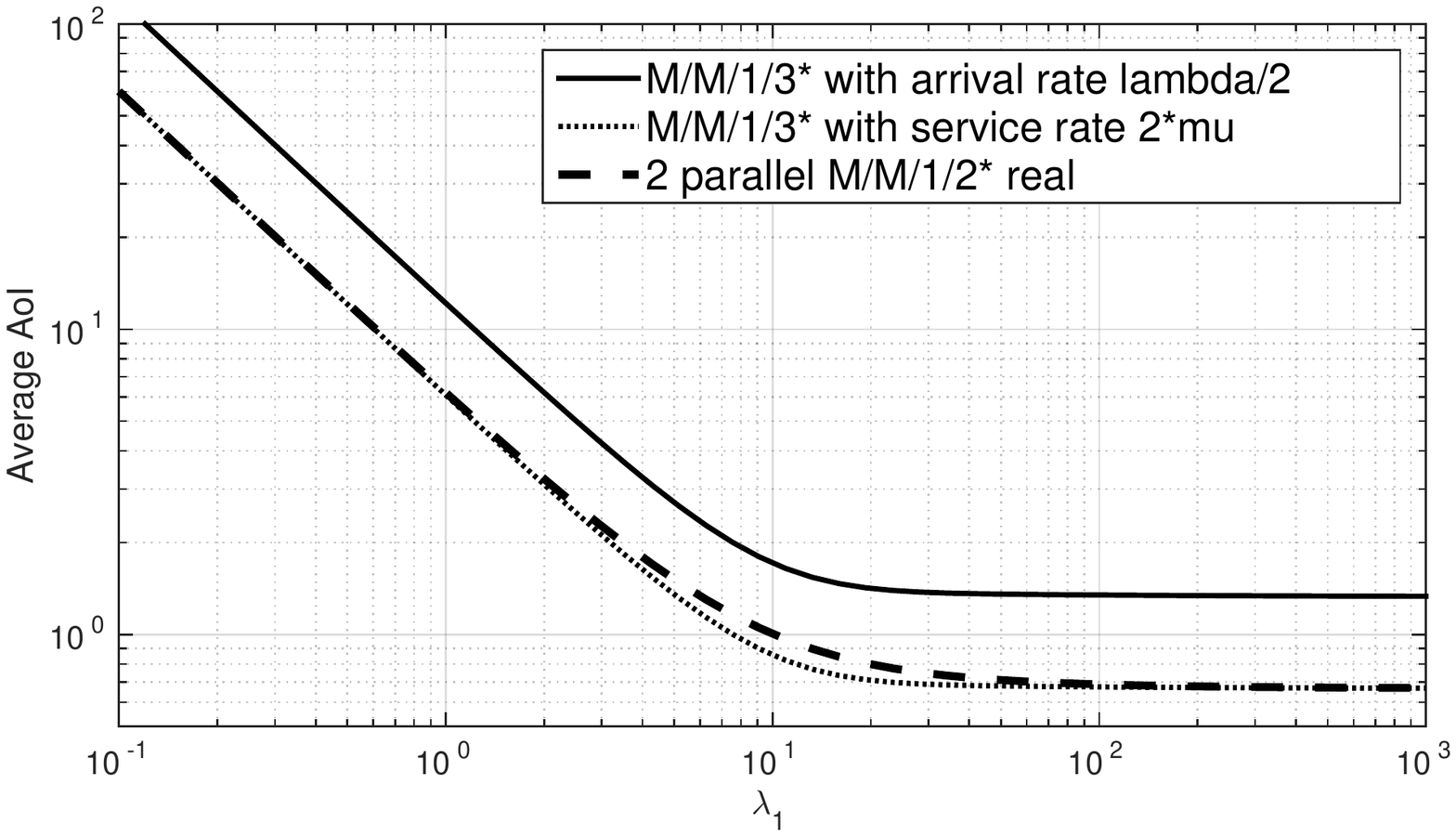}
	\caption{
			Average AoI comparison when $\lambda_1$ varies from $0.1$ to $10^{3}$ 
			with a single source and losses  ($\lambda_k=0$ for all $k>1$ and $\theta=10$). $\mu=1$.}
	\label{fig:comparison_mm12_theta10_alpha0}
\end{figure}

\begin{figure}[t!]
	\centering
	\includegraphics[width=\columnwidth,clip=true,trim=10pt 230pt 0pt 240pt]{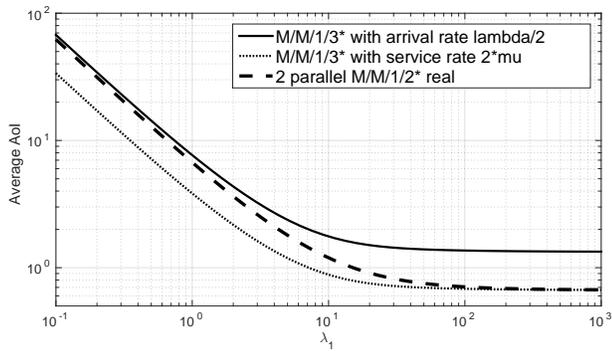}
	\caption{
			Average AoI comparison when $\lambda_1$ varies from $0.1$ to $10^{3}$ 
			with multiple sources and losses  ($\sum_{k>1}\lambda_k=10$ and $\theta=10$). $\mu=1$.}
	\label{fig:comparison_mm12_theta10_alpha10}
\end{figure}

\subsection{Parallel Queues With Buffer Size $N>1$}

We now focus on the study of the average AoI in a system with $K$ parallel queues with buffer size $N>1$. We 
notice that using the SHS method leads to the analysis
of a Markov Chain with a number of states equal to $K\cdot(N+2)$. 
This implies that the number of SHS transitions increases at a very high rate with the number of queues and with the buffer size. 
As a result, according to \eqref{eq:shs-thm}, the number of equations to be solved so as to obtain the AoI 
suffers from the curse of dimensionality. Thus, providing an analytical expression 
of the AoI of source $1$ in a system with an arbitrary routing system (with an arbitrary number of queues and an 
arbitrary buffer size) seems to be intractable using the considered method. However, as we will see in the next section, it is possible 
to provide an upper-bound on the AoI.

\section{Upper-bound on the Average AoI for an Arbitrary Routing System}
\label{sec:bound-aoi}

We study the average AoI of a system with $K>2$ parallel queues with $N>1$ buffer size. In this section, we provide an upper bound on the age 
of information using the SHS method in a system with a single and multiple sources.

	We now explain the system we study here. We consider a system with $n$ sources where, for all i, the updates of source $i$ and
	of the rest of the sources arrive to the system with rate $\lambda_i$. We denote by $p_{ij}$ the probability that a job of source
	$i$ is routed to server $j$. Hence, $\lambda_{i}=\sum_{j=1}^K\lambda_ip_{ij}$. Besides, the total incoming traffic to the 
	system is denoted by $\lambda$, i.e., $\lambda=\sum_{i=1}^n \lambda_i$. We assume that
	the service rate  in queue $j$ is exponentially distributed with rate $\mu_j$.
	In the following result, we provide an upper bound of the average AoI of the 
	system under study here. \textcolor{black}{Without loss of generality,  the result is provided for source $1$, however one can see that the result can be obtained for any source $i$. The proof of this result is reported in Appendix~\ref{proof:thm:upperbound}.}
	\begin{theorem}
		For the aforementioned system, the average AoI of source $1$ is
		upper bounded by
		\begin{equation}
		\frac{1}{\sum_{j=1}^K\mu_j}
		\left(1+KN+\sum_{j=1}^K\frac{\sum_{k>1}\lambda_k p_{kj}+\mu_j}{\lambda_{j}p_{1j}}\right).
		\label{eq:thm:upperbound}
		\end{equation}
		\label{thm:upperbound}
	\end{theorem}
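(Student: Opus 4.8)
The plan is to bound the true age $\Delta_1$ by the average age of a deliberately pessimistic \emph{reference} system whose age is analytically tractable, and then to show that this reference age equals \eqref{eq:thm:upperbound}. Throughout I write $M=\sum_{j=1}^K\mu_j$ for the aggregate service rate, $a_j=\lambda_1 p_{1j}$ for the source-$1$ arrival rate into queue $j$, and $\Lambda_j=\sum_{k>1}\lambda_k p_{kj}$ for the arrival rate of all the other sources into queue $j$. First I would record the sample-path characterization $\Delta_1=\lim_{\tau\to\infty}\frac1\tau\int_0^\tau x_0(t)\,dt$, where $x_0(t)=t-s^\ast(t)$ and $s^\ast(t)$ is the generation time of the freshest source-$1$ update delivered by time $t$. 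This $x_0(t)$ is a sawtooth that drops only at instants where some queue delivers a source-$1$ packet fresher than the current one, so by renewal--reward $\Delta_1=\mathbb E[A]/\mathbb E[Y]$, with $Y$ the length of an inter-refresh cycle and $A$ the sawtooth area accumulated over one cycle.

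The crucial simplification is a monotone coupling: replace the genuine ``keep the freshest'' rule at the monitor by the cruder rule ``accept every delivered source-$1$ packet,'' and inside each queue force every source-$1$ packet to sit in the worst admissible position. Accepting a possibly stale packet and inflating each system time can only raise $x_0(t)$ pathwise, so the resulting reference age dominates $\Delta_1$. In this reference system the refresh epochs become a tractable point process, and I would evaluate $\mathbb E[Y]$ and $\mathbb E[A]$ explicitly: the aggregate rate $M$ sets the overall time scale and factors out as the prefactor $1/M$; the factor $1+KN$ collects the base half-second-moment term together with the buffering delay, bounded by the at-most-$N$ packets that can lie ahead of a source-$1$ packet in each of the $K$ buffers; and the cross term $\sum_j(\Lambda_j+\mu_j)/a_j$ accounts for how many ``foreign'' events (other-source arrivals plus services, rate $\Lambda_j+\mu_j$) elapse per source-$1$ arrival (rate $a_j$) before queue $j$ produces a refresh. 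Assembling these contributions and pulling out $1/M$ should reproduce \eqref{eq:thm:upperbound} exactly.

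An alternative route that avoids pathwise coupling is to work directly from Theorem~1: since $\Delta_1=\sum_q v_q(0)$ with $\mathbf v_q\ge 0$ solving \eqref{eq:shs-thm}, it suffices to exhibit a nonnegative \emph{super-solution}, i.e.\ vectors $\bar{\mathbf v}_q$ satisfying \eqref{eq:shs-thm} with ``$\ge$'' in the monitor component and with $\sum_q \bar v_q(0)$ equal to the claimed bound; a monotonicity argument for the linear fixed-point system then yields $v_q(0)\le\bar v_q(0)$ and hence the theorem. The main obstacle is the same in either route: the number of joint states grows combinatorially in $K$ and $N$, so one cannot solve \eqref{eq:shs-thm} in closed form, and the entire argument hinges on identifying the correct dominating structure---the worst-case per-queue position that produces the $KN$ term, and the decoupling of the $K$ queues into a single aggregate-rate-$M$ server that produces the prefactor. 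I expect the delicate part to be justifying the domination rigorously: the reset maps on delivery send $x_0$ to a served packet's age rather than to zero, and controlling those resets (equivalently, verifying nonnegativity and the correct inequality direction of the guessed $\bar{\mathbf v}_q$) is the bookkeeping that will consume most of the effort.
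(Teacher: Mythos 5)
Your high-level plan --- dominate the true age by a pessimistic but tractable reference system --- is indeed the strategy the paper uses, but your proposal stops exactly where the proof has to start. The specific device that makes the reference system tractable is missing: the paper keeps every queue \emph{permanently full} by inserting a \emph{fake} update (carrying the age of the penultimate packet) into the tail of queue $j$ each time queue $j$ completes a service. Because the occupancy of every queue is then deterministically $N+1$ at all times, the discrete component of the SHS collapses to a \emph{single} state, the stationary distribution is trivial, and the fixed-point system \eqref{eq:shs-thm} becomes a small set of linear equations solvable in closed form: $v_0\sum_{j}\mu_j=1+\sum_j\mu_j v_{j1}$, the telescoping relations $\mu_j v_{jl}=1+\mu_j v_{j,l+1}$ across the buffer positions (which produce the $KN$ term), and a boundary equation at the tail position yielding $\mu_j v_{jN}=1+\bigl(\sum_{k>1}\lambda_k p_{kj}+\mu_j\bigr)/(\lambda_1 p_{1j})$. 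You explicitly note in your second route that the joint state space grows combinatorially in $K$ and $N$ so that \eqref{eq:shs-thm} cannot be solved in closed form --- overcoming precisely that obstacle is the content of the theorem, and your proposal does not supply the idea that overcomes it.

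Beyond the missing construction, neither of your two routes is actually carried out. In the renewal--reward route you assert that $\mathbb{E}[A]/\mathbb{E}[Y]$ ``should reproduce'' \eqref{eq:thm:upperbound} and give only a heuristic matching of the three terms to the three pieces of the formula; for $K$ interacting queues with cross-traffic the inter-refresh cycle is not a renewal interval in any obvious sense, and no computation is offered. In the super-solution route you correctly identify that one would need nonnegative $\bar{\mathbf v}_q$ satisfying a one-sided version of \eqref{eq:shs-thm}, but you do not exhibit such vectors, and you yourself flag that verification as the step that would consume most of the effort. Also, your proposed modification ``accept every delivered source-$1$ packet'' is not needed: in the SHS reset maps a stale packet already carries the monitor's age as its $x$-value, so delivering it leaves $x_0$ unchanged; the only pessimization required is the lengthening of sojourn times induced by the always-full queues. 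As written, the proposal is a plausible plan rather than a proof.
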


\textcolor{black}{
 It is also important to note that to obtain the above result we consider that when an update completes the service, we create a 
\emph{fake} update to keep the system full of packets. This is the reason why, unlike in the previous section, 
we are not able to study the influence of the packet losses  on the upper bound of the average AoI 
we provide in Theorem~\ref{thm:upperbound}. In fact, let us  consider that $N=0$ as an example. In this case, when a packet in service is lost, 
we  put a false/fake update with the same age of the lost one in service. Therefore, this fake update modifies the age of the monitor
when it is served and delivered to the monitor. As a result, the fake update will modify the age at the monitor and the system will behave like a system with no packet losses (but with a  different service rate). This explains why the above result cannot capture the impact of packet losses. 
}

\subsection{Tightness of the Upper bound}
	We now aim to explore if the upper bound on the average AoI is tight for the 
	systems we have studied in Section~\ref{sec:real-aoi}. We consider $\mu=1$ and, when we
	analyze the AoI for multiple sources, we fix the arrival rate of the rest of the sources 
	to $10$, that is, $\sum_{k>1}\lambda_k=10$.

	We first study in Figures~\ref{fig:tightness_2_mm12_single}-\ref{fig:tightness_2_mm12_multiple},
	a system formed by 2 parallel queues with equal arrival rate and service rate, i.e., 
	$K=2,$ $N=1$, $p_{11}=p_{21}=p_{12}=p_{22}=1/2$
	and $\mu_1=\mu_2=\mu$. For this case, we get from Theorem~\ref{thm:upperbound} that the upper bound for source $1$ is
	$$
	\frac{1}{2\mu}\left(3+\frac{\sum_{k>1}\lambda_k+2\mu}{\lambda_1}\right).
	$$

\begin{figure}[t!]
	\centering
	\includegraphics[width=\columnwidth,clip=true,trim=40pt 240pt 56pt 270pt]{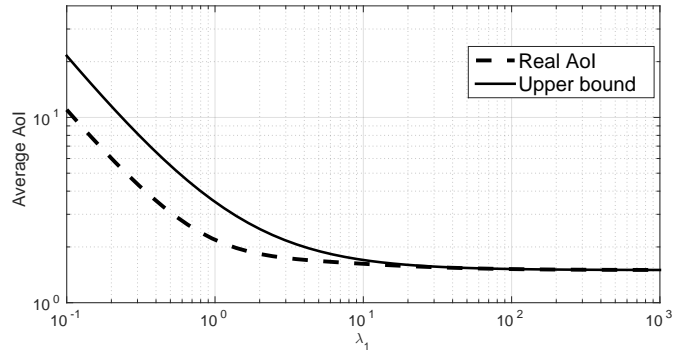}
	\caption{Upper bound and real average AoI comparison when $\lambda$ varies from $0.1$ to $10^{3}$ for two parallel M/M/1/2* queues with a single source ($\lambda_2=0$).
	}
	\label{fig:tightness_2_mm12_single}
\end{figure}

\begin{figure}[t!]
	\centering
	\includegraphics[width=\columnwidth,clip=true,trim=40pt 240pt 56pt 270pt]{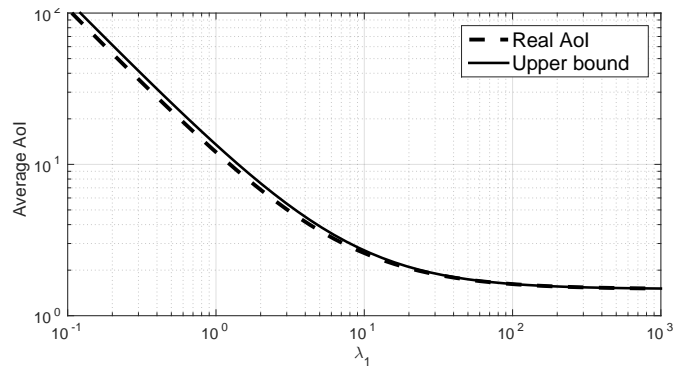}
	\caption{Upper bound and real average AoI comparison when $\lambda$ varies from $0.1$ to $10^{3}$ for two parallel M/M/1/2* queues with multiple sources ($\lambda_2=10$). $\mu=1$.
	}
	\label{fig:tightness_2_mm12_multiple}
\end{figure}

As we observe in Figure~\ref{fig:tightness_2_mm12_single}, the upper bound is tight when 
the arrival rate of source $1$ is large enough, whereas in Figure~\ref{fig:tightness_2_mm12_multiple}, we show that it is always very close to the real age.

	We now focus on the influence of $\mu$ on the tightness of the upper bound of the average AoI in a system with two parallel M/M/1/2* 
	queues with a single source. First, we consider $\mu=0.1$ in Figure~\ref{fig:tightness_2_mm12_single_mu01} and we show that,
	when $\lambda_1$ is larger than 2, the upper bound is very tight. Then, we consider $\mu=10$ in Figure~\ref{fig:tightness_2_mm12_single_mu10} and we show that, when $\lambda_1$ is larger than 10, the upper bound is accurate. Finally,
	we consider $\mu=100$ in Figure~\ref{fig:tightness_2_mm12_single_mu100} and, as it can be observed in this illustration, the upper
	bound is very close to the real age when $\lambda_1$ is larger than 1000.

\begin{figure}[t!]
	\centering
	\includegraphics[width=\columnwidth,clip=true,trim=40pt 240pt 56pt 270pt]{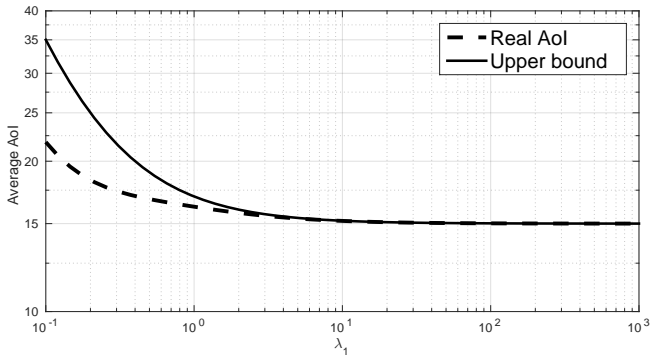}
	\caption{Upper bound and real average AoI comparison when $\lambda$ varies from $0.1$ to $10^{3}$ for two parallel M/M/1/2* queues with a single source ($\lambda_2=0$). $\mu=0.1$.
	}
	\label{fig:tightness_2_mm12_single_mu01}
\end{figure}

\begin{figure}[t!]
	\centering
	\includegraphics[width=\columnwidth,clip=true,trim=40pt 240pt 56pt 270pt]{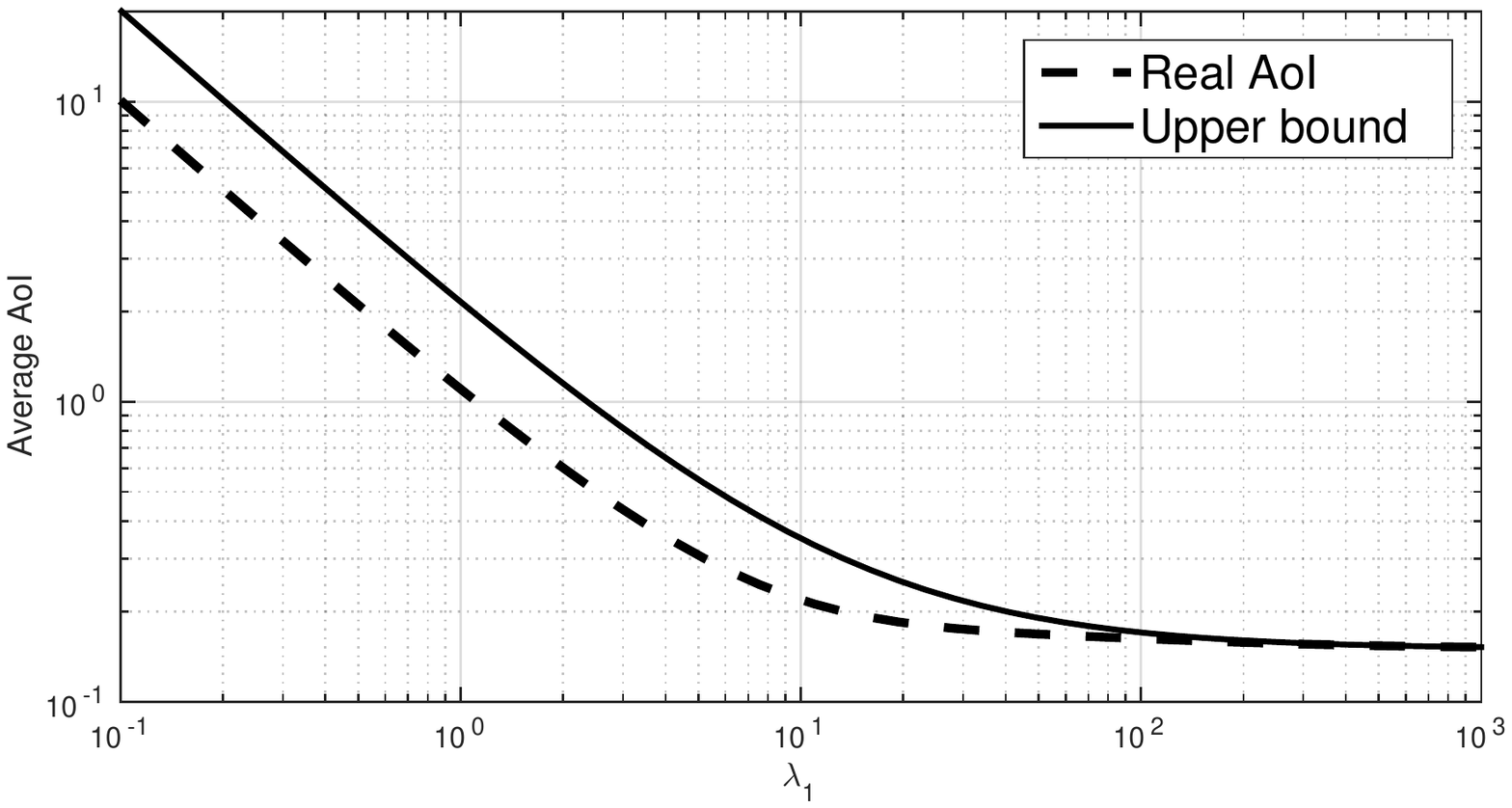}
	\caption{Upper bound and real average AoI comparison when $\lambda$ varies from $0.1$ to $10^{3}$ for two parallel M/M/1/2* queues with a single source ($\lambda_2=0$). $\mu=10$.
	}
	\label{fig:tightness_2_mm12_single_mu10}
\end{figure}

\begin{figure}[t!]
	\centering
	\includegraphics[width=\columnwidth,clip=true,trim=40pt 240pt 56pt 270pt]{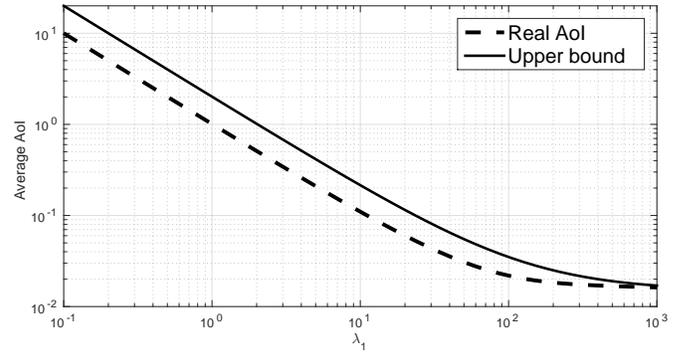}
	\caption{Upper bound and real average AoI comparison when $\lambda$ varies from $0.1$ to $10^{3}$ for two parallel M/M/1/2* queues with a single source ($\lambda_2=0$). $\mu=100$.
	}
	\label{fig:tightness_2_mm12_single_mu100}
\end{figure}

	We also investigate
	a system formed by one M/M/1/3* queue with arrival rate $\lambda$, service rate $\mu$ and multiple sources
	in Figure~\ref{fig:tightness_mm13_multiple}. 
	For this case, we have that $K=1$ and $N=2$ and, therefore, from Theorem~\ref{thm:upperbound}, the average AoI of source $1$ is given by
	$$
	\frac{1}{\mu}\left(3+\frac{\sum_{k>1}\lambda_k+\mu}{\lambda_1}\right).
	$$
	As we see in Figure~\ref{fig:tightness_mm13_multiple}, we show that it is always very close to the real age for any value of $\lambda_1$.
	Thus, this plot confirms that the upper bound on the average AoI we provide in this paper is very tight when the arrival rate is large.

%

\begin{figure}[t!]
	\centering
	\includegraphics[width=\columnwidth,clip=true,trim=40pt 240pt 56pt 260pt]{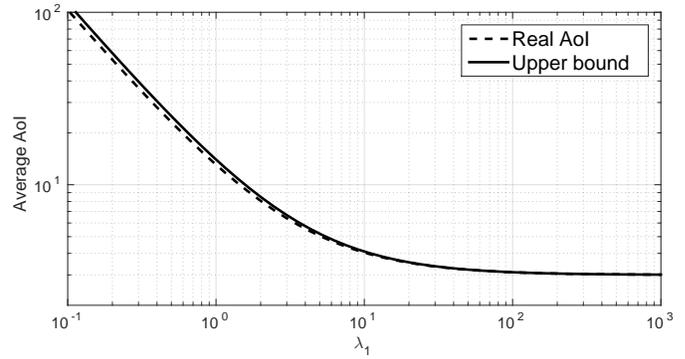}
	\caption{Upper bound and real average AoI comparison when $\lambda$ varies from $0.1$ to $10^{3}$ for one M/M/1/3* queue with multiple source ($\lambda_2=10$). $\mu=1$.
	}
	\label{fig:tightness_mm13_multiple}
\end{figure}

\subsection{AoI Comparison with a Single M/M/1/1 queue}

We now consider a system with a single source which is formed by $K$ homogeneous 
queues without buffer, i.e., $\mu_j=\mu$ for all $j=1,\dots,K$. According to the result of Theorem~\ref{thm:upperbound}, the average AoI is upper bounded by
\begin{equation}
\frac{1}{K\mu}\left(1+\mu\sum_{j=1}^K\frac{1}{\lambda_j}\right).
\label{eq:age-comp-single}
\end{equation}

We now aim to compare the above expression with the average AoI of  a single M/M/1/1 queue with preemption of jobs in service, arrival rate $\lambda/K$ and 
service rate $\mu$, which according to Theorem 2(a) in \cite{YK19}  is given by
$$
\frac{K}{\lambda}+\frac{1}{\mu}.
$$

In the following result, we compare the above expressions.

\begin{proposition}
	Let $K>1$. Then,
	$$
	\frac{K}{\lambda}+\frac{1}{\mu}>\frac{1}{K\mu}\left(1+\mu\sum_{j=1}^K\frac{1}{\lambda_j}\right).
	$$
\end{proposition}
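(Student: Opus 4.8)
The plan is to reduce the inequality to the trivial bound $K>1$ by exploiting the homogeneity of the queues. Since the $K$ servers are identical and the single source splits its traffic symmetrically, the rate routed to each queue is $\lambda_j=\lambda/K$ for every $j$, so that $\sum_{j=1}^K 1/\lambda_j = K^2/\lambda$. Substituting this into the right-hand side of \eqref{eq:age-comp-single} gives
\[
\frac{1}{K\mu}\left(1+\mu\,\frac{K^2}{\lambda}\right)=\frac{1}{K\mu}+\frac{K}{\lambda}.
\]

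With the upper bound in this closed form, the comparison becomes immediate. Both the single-queue age $\frac{K}{\lambda}+\frac{1}{\mu}$ and the simplified bound $\frac{K}{\lambda}+\frac{1}{K\mu}$ contain the common term $K/\lambda$; cancelling it, the claimed inequality is equivalent to $\frac{1}{\mu}>\frac{1}{K\mu}$. Equivalently, I would simply subtract the two expressions and record that
\[
\left(\frac{K}{\lambda}+\frac{1}{\mu}\right)-\left(\frac{K}{\lambda}+\frac{1}{K\mu}\right)=\frac{1}{\mu}-\frac{1}{K\mu}=\frac{K-1}{K\mu},
\]
which is strictly positive whenever $\mu>0$ and $K>1$. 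This is exactly the hypothesis, so the inequality holds.

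I do not expect a genuine obstacle here; the computation is a one-line cancellation once the harmonic sum is evaluated. The only point that needs care is the reduction step: the estimate relies on the equal-split allocation $\lambda_j=\lambda/K$ inherent to the homogeneous system being compared (the same symmetric routing $p_{1j}=1/K$ used throughout the tightness study). For a non-uniform routing the sum $\sum_{j}1/\lambda_j$ is larger---by the AM--HM inequality it is minimized precisely at $\lambda_j=\lambda/K$, where it equals $K^2/\lambda$---so the equal-split instance is the relevant and most favourable case for the comparison, and it is there that the strictly positive gap $\frac{K-1}{K\mu}$ between the single queue and the parallel upper bound appears.
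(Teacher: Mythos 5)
Your proof is correct and follows essentially the same route as the paper's: substitute the equal-split allocation $\lambda_j=\lambda/K$ to evaluate the harmonic sum as $K^2/\lambda$, cancel the common term $K/\lambda$, and reduce the claim to $1/\mu > 1/(K\mu)$, i.e.\ $K>1$. Your additional AM--HM remark correctly identifies that the proposition implicitly assumes the symmetric routing (the paper's proof makes the same assumption without comment), but the core argument is identical.
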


\begin{proof}
	First, we note that, when $\lambda_j=\lambda/K$ for all $j$, we have that
	$$
	\frac{1}{K\mu}\left(1+\mu\sum_{j=1}^K\frac{1}{\lambda_j}\right)=
	\frac{1}{K\mu}\left(1+\frac{\mu K^2}{\lambda}\right)=\frac{1}{K\mu}+\frac{K}{\lambda}.
	$$
	
	Therefore, we aim to show that
	$$
	\frac{1}{K\mu}+\frac{K}{\lambda}<\frac{1}{\mu}+\frac{K}{\lambda} \iff K>1
	$$
	
	And the desired result follows since the last expression is always true.
\end{proof}

An interesting result is derived from the above proposition. Indeed, when we consider a 
system formed by K parallel queues and each of them receives the same arrival rate, 
since the expression \eqref{eq:age-comp-single} provides an upper bound on the average AoI, this result implies that the average AoI 
of a single M/M/1/1 queue with arrival rate $\lambda/K$ is larger than that of the considered system.

\subsection{AoI Comparison with \cite{Y18}}

In Theorem 2 in \cite{Y18}, the author provides the following expression of the average AoI of a system with homogeneous parallel queues where the incoming jobs are always sent to the server with the oldest job:
\begin{equation}
\frac{1}{\mu}\left(\frac{1}{K}\prod_{i=1}^{K-1}\frac{\rho}{i+\rho}+\frac{1}{\rho}+
\frac{1}{\rho}\sum_{l=1}^{K-1}\prod_{i=1}^l\frac{\rho}{i+\rho}\right).
\label{eq:yates-hom}
\end{equation}

We now notice that, in our model, the knowledge of the queue with the oldest job is not 
considered. Therefore, one might expect that the average AoI is always 
smaller in the model in \cite{Y18}. In the following result, we consider the regime where 
$\lambda$ tends to infinity and we compare both models.

\begin{proposition}
	When $\lambda\to\infty$, we have that \eqref{eq:yates-hom} and \eqref{eq:thm:upperbound}
	tend to $\tfrac{1}{K\mu}.$
\end{proposition}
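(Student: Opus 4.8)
The plan is to treat the statement as a pair of elementary limit computations, taking $\lambda\to\infty$ termwise in each closed form while exploiting the fact that the number of queues $K$ is held fixed, so that every product and every sum appearing in \eqref{eq:yates-hom} has a fixed, finite number of factors and terms.

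First I would dispose of \eqref{eq:yates-hom}. In the homogeneous setting of \cite{Y18} the per-queue load $\rho$ diverges as $\lambda\to\infty$, so it suffices to take $\rho\to\infty$. The key observation is that for each fixed $i\ge 1$ one has $\frac{\rho}{i+\rho}=\frac{1}{1+i/\rho}\to 1$, and since every product $\prod_{i=1}^{m}\frac{\rho}{i+\rho}$ with $m\le K-1$ fixed contains only finitely many factors, each such product tends to $1$. Hence the first summand satisfies $\frac{1}{K}\prod_{i=1}^{K-1}\frac{\rho}{i+\rho}\to\frac{1}{K}$, the middle term $\frac{1}{\rho}\to 0$, and in the last summand the inner sum $\sum_{l=1}^{K-1}\prod_{i=1}^{l}\frac{\rho}{i+\rho}$ tends to the finite value $K-1$, so its product with $\frac{1}{\rho}$ vanishes. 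Multiplying by the prefactor $\frac{1}{\mu}$ gives the limit $\frac{1}{\mu}\cdot\frac{1}{K}=\frac{1}{K\mu}$.

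Next I would handle \eqref{eq:thm:upperbound}. Specializing it to the homogeneous, buffer-free, single-source configuration underlying the comparison (that is, $N=0$, $\mu_j=\mu$, and $\sum_{k>1}\lambda_k p_{kj}=0$) reduces it to the expression \eqref{eq:age-comp-single}, namely $\frac{1}{K\mu}\bigl(1+\mu\sum_{j=1}^K\frac{1}{\lambda_j}\bigr)$, where $\lambda_j$ is the arrival rate routed to queue $j$. Since each $\lambda_j$ grows linearly in $\lambda$ for fixed positive routing probabilities, the finite sum $\sum_{j=1}^K\frac{1}{\lambda_j}\to 0$ as $\lambda\to\infty$, so the bracketed factor tends to $1$ and the whole bound tends to $\frac{1}{K\mu}$.

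The computation is routine and I anticipate no serious obstacle; the only point demanding a moment of care is the final summand of \eqref{eq:yates-hom}, where one must verify that the vanishing prefactor $\frac{1}{\rho}$ dominates the sum $\sum_{l=1}^{K-1}\prod_{i=1}^{l}\frac{\rho}{i+\rho}$. Because $K$ is a fixed constant, this sum is bounded above by $K-1$ uniformly in $\rho$, so its product with $\frac{1}{\rho}$ goes to zero and no uniformity-in-$K$ issue arises. With both limits established, I would conclude that \eqref{eq:yates-hom} and \eqref{eq:thm:upperbound} share the common limit $\frac{1}{K\mu}$, which is exactly the claimed statement.
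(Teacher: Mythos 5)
Your proposal is correct and matches the paper's (one-line) proof, which simply declares the limit computation straightforward from \eqref{eq:yates-hom} and \eqref{eq:thm:upperbound}; you have merely written out the termwise limits explicitly, including the appropriate specialization of \eqref{eq:thm:upperbound} to the homogeneous single-source no-buffer case \eqref{eq:age-comp-single}. No gap.
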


\begin{proof}
	The proof is straightforward from \eqref{eq:yates-hom} and \eqref{eq:thm:upperbound}.
\end{proof}

From this result, we conclude that, when $\lambda\to\infty$, the improvement of the average
AoI caused by the knowledge of the state of the queues is negligible. 

\subsection{Optimization of the Upper Bound}
In this section, we provide a framework to minimize the upper bound of the average age of each source $i$. The objective is to find the routing probabilities $p_{ij}$ in such that the age upper bound for each source $i$ is minimized. The problem can be formulated as a game framework. More formally, the problem can be formulated as
\begin{equation}
\min_{\mathbf{p}_{i}}\frac{1}{\sum_{j=1}^K\mu_j}
\left(1+KN+\sum_{j=1}^K\frac{\sum_{l\neq i}^n\lambda_{l}p_{lj}+\mu_j}{\lambda_{i}p_{ij}}\right), \ \forall i \label{game:equ}
\end{equation} 
\begin{equation}
\text{s.t.} \ \sum_{j=1}^K p_{ij}=1 \ \forall i
\end{equation}
where $\mathbf{p}_i=[p_{i1},...,p_{iK}].$ In the sequel, we will characterizes the   Nash Equilibrium (NE) of the above problem and provide a solution that achieves the NE. Before defining NE, we first introduce the so-called best response set valued function ($BR_i$) for each source (or player) $i$, which is given as follows
\[\text{($BR_i$)}: \ \text{Given} \ \mathbf{p}_{-i} \overset{\Delta}{=} (\mathbf{p}_1,...,\mathbf{p}_{i-1},\mathbf{p}_{i+1},...\mathbf{p}_N)\]
\[\mathbf{p}_i \in \arg \max_{\mathbf{p}'_i} \ U_i(\mathbf{p}'_i,\mathbf{p}_{-i})\]
where $\mathbf{p}'_i$ satisfies $\sum_{j=1}^K p'_{ij}=1$ and  $U_i(\mathbf{p}'_i,\mathbf{p}_{-i})=\frac{1}{\sum_{j=1}^K\mu_j}
\left(1+KN+\sum_{j=1}^K\frac{\sum_{l\neq i}^N\lambda_{l}p_{lj}+\mu_j}{\lambda_{i}p_{ij}}\right)$. In order to show explicitly the dependence of $BR_i$ in $\mathbf{p}_{-i}$, we will use the notation $BR_i(\mathbf{p}_{-i})$ to represent the best response set valued function of source $i$. In other words, the best response consists of optimizing the utility of each source with respect only to its own action vector (i.e. routing probability $\mathbf{p}_i$).\\
Furthermore, we can also define the  sources’ joint best-response function as 
\[BR(\mathbf{p})=\left(BR_1(\mathbf{p}_{-1}),...BR_N(\mathbf{p}_{-N})\right)\]
We now provide the definition of NE and the relation with the sources' best response.
\begin{definition}
	A strategy profile $\mathbf{p}=(\mathbf{p}_1,...,\mathbf{p}_N)$ is a pure Nash equilibrium iff $\forall i$,\\
	\[\forall \ \mathbf{p}'_i, \ U_i(\mathbf{p}'_i,\mathbf{p}_{-i}) \geq U_i(\mathbf{p}_i,\mathbf{p}_{-i})\]  
\end{definition}
\begin{definition}
	A strategy profile is a Nash Equilibrium iff 
	\[\mathbf{p} \in BR(\mathbf{p})\]
\end{definition}
In words, a NE is a fixed point of the BR dynamic. In the sequel, we will therefore show that a fixed point of BR exists and it is unique. \\
It is straightforward to see that $U_i(\mathbf{p}'_i,\mathbf{p}_{-i})$ is convex with respect to $\mathbf{p}_i$. The best response problem for each source $i$ can be solved easily using the standard Lagrangian technique. \\
The Lagrangian for each source can be written as follows, 
\[L_i(\mathbf{p}_i,\delta)=U_i(\mathbf{p}_i,\mathbf{p}_{-i})+\delta (\sum_{j=1}^K p_{ij}-1) \]
The optimal solution of the above optimization problem for each source $i$, i.e. ($\mathbf{p}^*_i,\delta^*$), can then be obtained by KKT and complementarity conditions, i.e. by using $\frac{\partial L_i}{\partial p_{ij}}=0$ $\forall j$ and $\delta^* (\sum_{j=1}^K p^*_{ij}-1)=0$. After some algebraic manipulations, this leads to the following expression of $\mathbf{p}^*_i$ 
\[p_{ij}^*=\frac{\omega_{ij}}{\sum_{j'=1}^K \omega_{ij'}}\]
where $\omega_{ij}=\sqrt{\sum_{l\neq i}^N\lambda_{l}p_{lj}+\mu_j}$.\\
Consequently, a NE is simply the solution of the following system of equations
\begin{equation}
p_{ij}^*=\frac{\sqrt{\sum_{l\neq i}^N\lambda_{l}p^*_{lj}+\mu_j}}{\sum_{j'=1}^K \sqrt{\sum_{l\neq i}^N\lambda_{l}p^*_{lj'}+\mu_{j'}}}, \ \forall i,j \label{BR:equ}
\end{equation}


We will show numerically later on that the aforementioned system of equations has a solution. Before that, we will provide an analysis in the case of large number of sources, by using mean field analysis, and develop a simple iterative algorithm allowing each source to find its probabilistic routing vector $\mathbf{p}_i=[p_{i1},...,p_{iK}]^T$. 

\subsubsection{Mean Field Analysis}
We provide here an analysis in the case of large number of sources and  provide a distributed iterative algorithm that converges to the solution of the system of equations  in (\ref{BR:equ}). In order to use mean field tools, indistinguishable sources should be considered. We therefore consider that $\lambda_l=\bar{\lambda}$. We first define the following mean field term for each queue $j$: 
\[m^n_{ij}=\frac{1}{n}\sum_{l\neq i}^{n}p_{lj}\]
One can notice that $m^n_{ij} \rightarrow m_j$ when $n \rightarrow \infty$ $\forall$ $i$, where $m_j=\lim_{n\to\infty}\frac{1}{n}\sum_{i=1}^{n}p_{ij}$.\\
The problem in (\ref{game:equ}) can be written as 
\[\min_{\mathbf{p}_{i}}\frac{1}{\sum_{j=1}^K \frac{\mu_j}{n}}
\left(\frac{1+KN}{n}+\sum_{j=1}^K\frac{\lambda m^n_{ij}+\frac{\mu_j}{n}}{\lambda p_{ij}}\right), \ \forall i \]
One can notice also that when $n \rightarrow \infty$, $\mu_j/n \rightarrow \bar{\mu}_j$, where $\bar{\mu}_j$ represents the scaled asymptotic service rate per source. In fact, when the number of sources tends to infinity, the service rate of each queue must be high enough to serve all sources. \\
When $n$ is very large, the system of equations in (\ref{BR:equ}) tends to:
\begin{equation}
p_{ij}^*=\frac{\sqrt{m_j+\frac{\bar{\mu}_j}{\bar{\lambda}}}}{\sum_{j'=1}^K \sqrt{m_{j'}+\frac{\bar{\mu}_{j'}}{\bar{\lambda}}}}, \ \forall i,j \label{BR1:equ}
\end{equation}
By summing the previous equations over all source indexes $i$ and taking the limit when $n \rightarrow \infty$, we get
\begin{equation}
m_j=\frac{\sqrt{m_j+\frac{\bar{\mu}_j}{\bar{\lambda}}}}{\sum_{j'=1}^K \sqrt{m_{j'}+\frac{\bar{\mu}_{j'}}{\bar{\lambda}}}}, \ \forall j \label{MF:equ}
\end{equation}
By making the variable change $y_j=\sqrt{m_j+\frac{\bar{\mu}_j}{\bar{\lambda}}}$, the previous equations can be written as: 
\begin{equation}
y_j=\frac{1}{\sum_{j'=1}^K y_{j'}}+\frac{\bar{\mu}_j}{\bar{\lambda} y_j}, \ \forall j \label{MF1:equ}
\end{equation}
\begin{proposition}
	If the system of equations in (\ref{MF1:equ}) has a solution, then this solution is unique. 
\end{proposition}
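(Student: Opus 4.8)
The plan is to collapse the $K$-dimensional system into a single scalar fixed-point equation in the aggregate $S\overset{\Delta}{=}\sum_{j'=1}^K y_{j'}$ and then exploit monotonicity. First I would note that all coordinates are strictly positive, since $y_j=\sqrt{m_j+\bar{\mu}_j/\bar{\lambda}}$ with $m_j\ge 0$ and $\bar{\mu}_j,\bar{\lambda}>0$, so $S>0$. Multiplying the $j$-th equation in (\ref{MF1:equ}) by $y_j$ gives the quadratic relation $y_j^2-\tfrac{y_j}{S}-\tfrac{\bar{\mu}_j}{\bar{\lambda}}=0$. For a fixed value of $S>0$ this quadratic has exactly one positive root (the product of its two roots is $-\bar{\mu}_j/\bar{\lambda}<0$, so the roots have opposite signs), namely
$$
y_j(S)=\frac{1}{2S}+\frac{1}{2}\sqrt{\frac{1}{S^2}+\frac{4\bar{\mu}_j}{\bar{\lambda}}}.
$$
Hence any positive solution of the full system is completely determined by the single number $S$ through $y_j=y_j(S)$: the coupling among the coordinates enters only via the scalar $S$.

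Next I would impose the consistency requirement that $S$ equals the sum of the reconstructed coordinates, i.e. $S=g(S)$, where $g(S)\overset{\Delta}{=}\sum_{j=1}^K y_j(S)$. The key observation is that each $y_j(S)$ is strictly decreasing in $S$ on $(0,\infty)$, because both $\tfrac{1}{2S}$ and $\tfrac12\sqrt{\tfrac{1}{S^2}+\tfrac{4\bar{\mu}_j}{\bar{\lambda}}}$ are strictly decreasing. Consequently $g$ is strictly decreasing, and therefore $h(S)\overset{\Delta}{=}g(S)-S$ is strictly decreasing as well. A strictly monotone function vanishes at most once, so the scalar equation $g(S)=S$ admits at most one positive root.

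Finally I would assemble the conclusion: under the hypothesis that a solution exists, its aggregate $S$ must solve $g(S)=S$, which has at most one positive value by the monotonicity argument; substituting this $S$ back into $y_j=y_j(S)$ then pins down every coordinate uniquely, establishing uniqueness of the whole vector $(y_1,\dots,y_K)$. I expect the only delicate step to be the reduction itself, namely recognizing that freezing $S$ decouples the equations and selecting the positive branch of each quadratic is legitimate precisely because of the positivity of the $y_j$; once that is in place, the strict monotonicity of $g$ (and hence of $h$) is routine and immediately yields the at-most-one-zero property.
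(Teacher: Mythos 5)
Your proof is correct, but it follows a genuinely different route from the paper's. The paper argues by contradiction with a scaling argument in the spirit of standard interference functions: it supposes two distinct solutions $\mathbf{y}^1,\mathbf{y}^2$, picks $\nu>1$ so that $\mathbf{y}^1\leq\nu\mathbf{y}^2$ with equality in some coordinate $k$ and strict inequality in some coordinate $j$, and then uses the monotone, sub-homogeneous structure of the right-hand side of (\ref{MF1:equ}) to derive a contradiction at coordinate $k$ — no explicit solving of the equations is needed. You instead observe that the coupling between coordinates enters only through the aggregate $S=\sum_{j'}y_{j'}$, solve the resulting per-coordinate quadratic $y_j^2-y_j/S-\bar{\mu}_j/\bar{\lambda}=0$ for its unique positive root $y_j(S)$ (the sign argument via the product of roots is exactly right, and degenerates harmlessly to $y_j(S)=1/S$ if $\bar{\mu}_j=0$), and reduce uniqueness to the scalar fixed-point equation $S=g(S)$ with $g$ strictly decreasing. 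Both arguments are sound; yours buys something extra that the paper's does not: since $g(S)-S\to+\infty$ as $S\to 0^+$ and $g(S)\to\sum_j\sqrt{\bar{\mu}_j/\bar{\lambda}}$ as $S\to\infty$, the intermediate value theorem gives \emph{existence} of the fixed point as well, so your reduction actually removes the conditional hypothesis in the proposition rather than merely establishing uniqueness. The one step you rightly flag as delicate — that positivity of the $y_j$ forces the positive branch of each quadratic — is justified exactly as you say, via $y_j=\sqrt{m_j+\bar{\mu}_j/\bar{\lambda}}>0$.
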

\begin{proof}
	Let's assume that the system of equations has two different solutions $\mathbf{y}^1=[y^{(1)}_1,...,y^{(1)}_K]^T$ and $\mathbf{y}^2=[y^{(2)}_1,...,y^{(2)}_n]^T$. Without loss of generality, we can consider that there exists $\nu>1$ such that $\mathbf{y}^1 \leq \nu \mathbf{y}^2$ and $\exists$ at least one $j$ for which $y^{(1)}_j <\nu y^{(2)}_j$ and $\exists$ at least one queue $k$ for which $y^{(1)}_k =\nu y^{(2)}_k$. One can see easily that for any possible vectors  $\mathbf{y}^1=[y^{(1)}_1,...,y^{(1)}_K]^T$ and $\mathbf{y}^2=[y^{(2)}_1,...,y^{(2)}_n]^T$  obtaining $\nu$ to satisfy the above statement is straightforward. Recall that for queue $k$ 
	\[y^{(2)}_k=\frac{1}{\sum_{j'=1}^K y^{(2)}_{j'}}+\frac{\bar{\mu}_k}{\bar{\lambda} y^{(2)}_k} \]
	By using $\mathbf{y}^1 \leq \nu \mathbf{y}^2$ and  $y^{(1)}_j <\nu y^{(2)}_j$  for at least one queue $j$, we get  
	\[y^{(2)}_k < \frac{\nu}{\sum_{j'=1}^K y^{(1)}_{j'}}+\frac{\nu \bar{\mu}_k}{\bar{\lambda} y^{(1)}_k}=\nu y^{(1)}_k \] 
	Therefore, we obtain  $y^{(2)}_k < \nu  y^{(1)}_k$ which contradicts the fact that $y^{(2)}_k = \nu  y^{(1)}_k$. Consequently, it is not possible to have two different solutions to the system of equations in (\ref{MF1:equ}).
\end{proof}
In order to solve the system of nonlinear equations in (\ref{MF1:equ}), we provide an efficient iterative learning algorithm and prove its convergence to the solution of (\ref{MF1:equ}). The proposed algorithm has a reduced complexity and can be implemented in a distributed manner. We first consider the following iterative algorithm:  
\begin{equation}
y_j(t+1)=(1-\alpha)y_j(t)+\alpha \frac{1}{\sum_{j'=1}^K y_{j'}(t)}+\alpha \frac{\bar{\mu}_j}{\bar{\lambda} y_j(t)}, \ \forall j \label{Mann-iteration}
\end{equation}
where $\alpha$ is a sufficiently small step size. 
This algorithm is a simple class of Ishikawa algorithm  (using Mann-like iteration but with constant step size $\alpha$; see \cite{Ishikawa} for details). \\
By definition of $\mathbf{y}$, one can notice that $\forall j$ $y_j \in \mathcal{S}_j= \biggl [\sqrt{\frac{\bar{\mu}_j}{\bar{\lambda}}},\sqrt{1+\frac{\bar{\mu}_j}{\bar{\lambda}}} \biggr ]$. Let $\mathcal{S}=\mathcal{S}_1\times ...\times \mathcal{S}_K$ be the set of feasible values of $\mathbf{y}$. In order to ensure that the solution obtained by the iterative algorithm lies in the feasible set $\mathcal{S}$, we consider  the following projection $\Pi_\mathcal{S}$: $\hat{\mathbf{y}}= \Pi_\mathcal{S} (\mathbf{y})$ defined as, \\ $\forall j$, if $y_j < \sqrt{\frac{\bar{\mu}_j}{\bar{\lambda}}}$ then $\hat{y}_j= \max{\{y_j,\sqrt{\frac{\bar{\mu}_j}{\bar{\lambda}}}\}}$; if $y_j > \sqrt{1+\frac{\bar{\mu}_j}{\bar{\lambda}}}$ then $\hat{y}_j=\min{\{y_j,\sqrt{1+\frac{\bar{\mu}_j}{\bar{\lambda}}}\}}$; and $\hat{y}_j=y_j$ otherwise. Using the aforementioned projection, the iterative algorithm becomes
\begin{equation}
\hat{y}_j(t+1)=\Pi_\mathcal{S} \biggl ( (1-\alpha)\hat{y}_j(t)+\alpha \frac{1}{\sum_{j'=1}^K \hat{y}_{j'}(t)}+\alpha \frac{\bar{\mu}_j}{\bar{\lambda} \hat{y}_j(t)} \biggr), \ \forall j \label{Mann-iteration1}
\end{equation}
The following result proves that the algorithm above converges to the solution of the system of equations in (\ref{MF1:equ}), whenever it exists. Notice that once the algorithm converges, one can obtain $\mathbf{m}$ from $\mathbf{y}$ from the relation $y_j=\sqrt{m_j+\frac{\bar{\mu}_j}{\bar{\lambda}}}$ $\forall j$. The routing probabilities for each source $i$, i.e. $\mathbf{p}_i$, can then be obtained from (\ref{BR1:equ}). Finally, one can see that since the aforementioned algorithm depends only on the average arrival and service rates, without requiring any information about the instantaneous status of the network, it can be implemented separately by each source.

\begin{proposition}
	The algorithm in (\ref{Mann-iteration1}) converges  to the solution of the system of equations in (\ref{MF1:equ}). 
\end{proposition}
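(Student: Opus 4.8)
The plan is to read the update (\ref{Mann-iteration1}) as an ordinary Picard (fixed-point) iteration of a single operator, and to show that for sufficiently small step size this operator is a contraction on the compact convex box $\mathcal{S}$; Banach's fixed-point theorem then gives convergence to a unique fixed point, which I will identify with the solution of (\ref{MF1:equ}). To this end, write $T_j(\mathbf{y}) = \frac{1}{\sum_{j'=1}^K y_{j'}} + \frac{\bar{\mu}_j}{\bar{\lambda} y_j}$ and $G(\mathbf{y}) = (1-\alpha)\mathbf{y} + \alpha\, T(\mathbf{y})$, so that (\ref{Mann-iteration1}) becomes $\hat{\mathbf{y}}(t+1) = F(\hat{\mathbf{y}}(t))$ with $F = \Pi_{\mathcal{S}} \circ G$. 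Since $\mathcal{S}=\mathcal{S}_1\times\cdots\times\mathcal{S}_K$ is a closed bounded box, it is a complete metric space, and the clipping $\Pi_{\mathcal{S}}$ is exactly the (non-expansive) Euclidean projection onto $\mathcal{S}$.

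First I would establish the contraction constant of $G$. On $\mathcal{S}$ every coordinate satisfies $y_j \ge \sqrt{\bar{\mu}_j/\bar{\lambda}}>0$, so $T$ is $C^1$ there with symmetric Jacobian
\[
J(\mathbf{y}) = -\frac{1}{S^2}\,\mathbf{1}\mathbf{1}^\top - \mathrm{diag}\!\left(\frac{\bar{\mu}_j}{\bar{\lambda} y_j^2}\right),\qquad S=\sum_{j'=1}^K y_{j'}.
\]
This matrix is negative semidefinite, and for any unit vector $u$ one has $u^\top(-J)u \le \frac{K}{S^2}+\max_j\frac{\bar{\mu}_j}{\bar{\lambda} y_j^2}\le \frac{K}{S_{\min}^2}+1=:L$, using $(\mathbf{1}^\top u)^2\le K$ (Cauchy--Schwarz), $y_j^2\ge \bar{\mu}_j/\bar{\lambda}$, and $S\ge S_{\min}:=\sum_j\sqrt{\bar{\mu}_j/\bar{\lambda}}>0$. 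Hence the eigenvalues of $J$ lie in $[-L,0]$, so those of $DG=(1-\alpha)I+\alpha J$ lie in $[1-\alpha(1+L),\,1-\alpha]$. Choosing $0<\alpha<\frac{1}{1+L}$ keeps them in $(0,1)$, whence $\|DG\|\le 1-\alpha$ on the convex set $\mathcal{S}$, and the mean-value inequality shows $G$ is $(1-\alpha)$-Lipschitz.

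Because $\Pi_{\mathcal{S}}$ is non-expansive, $F=\Pi_{\mathcal{S}}\circ G$ is then $(1-\alpha)$-Lipschitz and maps $\mathcal{S}$ into itself, so by Banach's theorem it has a unique fixed point $\mathbf{y}^\star$ to which the iterates converge geometrically from any start in $\mathcal{S}$. It remains to identify $\mathbf{y}^\star$ with the solution of (\ref{MF1:equ}). Assuming, as in the preceding uniqueness proposition, that (\ref{MF1:equ}) admits a solution $\mathbf{y}^{\mathrm{sol}}$, summing (\ref{MF:equ}) over $j$ yields $\sum_j m_j=1$ with each $m_j\ge 0$, so $m_j\in[0,1]$ and therefore $\mathbf{y}^{\mathrm{sol}}\in\mathcal{S}$. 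Since $T(\mathbf{y}^{\mathrm{sol}})=\mathbf{y}^{\mathrm{sol}}$ gives $G(\mathbf{y}^{\mathrm{sol}})=\mathbf{y}^{\mathrm{sol}}$ and $\Pi_{\mathcal{S}}(\mathbf{y}^{\mathrm{sol}})=\mathbf{y}^{\mathrm{sol}}$, we get $F(\mathbf{y}^{\mathrm{sol}})=\mathbf{y}^{\mathrm{sol}}$; by uniqueness $\mathbf{y}^{\mathrm{sol}}=\mathbf{y}^\star$, and the iteration converges to the solution of (\ref{MF1:equ}).

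The step requiring the most care is obtaining the \emph{uniform} spectral bound $L$ on $J$ over all of $\mathcal{S}$ and thereby a concrete admissible range of step sizes $\alpha$, which is what makes precise the paper's phrase ``$\alpha$ a sufficiently small step size.'' A secondary subtlety is the projection: a projected iteration can in principle acquire spurious fixed points on $\partial\mathcal{S}$ that do not satisfy $\mathbf{y}=T(\mathbf{y})$, so I route the final identification through the assumed existing solution (which lies in $\mathcal{S}$ and is a fixed point of $F$), rather than attempting to characterize fixed points of $\Pi_{\mathcal{S}}\circ G$ directly; uniqueness of the contraction's fixed point then rules out any such boundary artifact.
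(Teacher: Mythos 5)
Your proof is correct, and while it rests on the same two structural facts the paper uses --- the Jacobian of the map $\mathbf{y}\mapsto T(\mathbf{y})$ is negative semidefinite and uniformly bounded on $\mathcal{S}$, and the clipping $\Pi_{\mathcal{S}}$ is nonexpansive --- you package them differently. The paper anchors the analysis at the assumed solution $\mathbf{y}^*$, writes $\mathbf{y}(t+1)-\mathbf{y}^*=(1-\alpha)(\hat{\mathbf{y}}(t)-\mathbf{y}^*)+\alpha\mathbf{J}_t(\hat{\mathbf{y}}(t)-\mathbf{y}^*)$ via the mean value theorem, expands the squared norm, drops the cross term using negative semidefiniteness, and requires $(1-\alpha)^2+\alpha^2\|\mathbf{J}_t\|^2<1-\epsilon$; it never asserts that the iteration operator itself is a contraction. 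You instead bound the spectral norm of $DG=(1-\alpha)I+\alpha J$ directly, exploiting the \emph{symmetry} of $J$ (which the paper does not use) to place its eigenvalues in $(0,1-\alpha]$ for $\alpha<1/(1+L)$, conclude that $F=\Pi_{\mathcal{S}}\circ G$ is a $(1-\alpha)$-contraction on $\mathcal{S}$, and invoke Banach. Your route buys three things the paper's does not make explicit: a sharper contraction factor $1-\alpha$ rather than $\sqrt{(1-\alpha)^2+\alpha^2\|\mathbf{J}_t\|^2}$; an explicit admissible step-size range via the uniform bound $L=K/S_{\min}^2+1$ (the paper derives a comparable bound only in the remark after the proof); and uniqueness of the fixed point of the \emph{projected} map, which rules out spurious boundary fixed points --- a point the paper's argument silently bypasses because it only ever measures distance to the assumed interior solution. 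The paper's version is marginally more robust in that it needs only negative semidefiniteness and boundedness of the mean-value Jacobian, not symmetry; both arguments share the same implicit hypothesis that a solution of (\ref{MF1:equ}) exists and lies in $\mathcal{S}$, which you verify explicitly via $\sum_j m_j=1$ while the paper simply asserts $\mathbf{y}^*\in\mathcal{S}$.
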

\begin{proof}	
 We denote by $f_j(\mathbf{y}(t))= \frac{1}{\sum_{j'=1}^K y_{j'}(t)}+\frac{\bar{\mu}_j}{\bar{\lambda} y_j(t)}$. One can see that $f_j(\mathbf{y}(t))$ can be written as $f_j(\mathbf{y}(t))=g(\mathbf{y}(t))+g_j(y_j(t))$ where $\tilde{g}(\mathbf{y}(t))=\frac{1}{\sum_{j'=1}^K y_{j'}(t)}$ and $g_j(y_j(t))=\frac{\bar{\mu}_j}{\bar{\lambda} y_j(t)}$. We also denote by $\mathbf{y}^*=[y^*_1,...,y^*_K] \in \mathcal{S}$ the solution of (\ref{MF1:equ}). Recall that by using the algorithm in (\ref{Mann-iteration1}), we have $y_j(t+1))=(1-\alpha)\hat{y}_j(t)+\alpha \frac{1}{\sum_{j'=1}^K \hat{y}_{j'}(t)}+\alpha \frac{\bar{\mu}_j}{\bar{\lambda} \hat{y}_j(t)}$ and $\hat{y}_j(t+1)=\Pi_\mathcal{S}(y_j(t+1))$.  We can also write the following 
		\begin{align}
		& y_j(t+1)-y_j^*  = (1-\alpha)(\hat{y}_j(t)-y_j^*)+ \nonumber \\ & \alpha (\tilde{g}(\hat{\mathbf{y}}(t))-\tilde{g}(\mathbf{y}^*))  +\alpha_t (g_j(\hat{y}_j(t))-g_j(y_j^*))
		\end{align}
		By using the mean value theorem, $\exists \ \mathbf{\bar{y}}(t) \in [\hat{\mathbf{y}}(t),\mathbf{y}^*]$ such that 
		\[\tilde{g}(\hat{\mathbf{y}}(t))-\tilde{g}(\mathbf{y}^*)=\nabla \tilde{g} |_{\mathbf{\bar{y}}(t)}^T \left(\hat{\mathbf{y}}(t)-\mathbf{y}^*\right)\]
		Similarly,  $\exists \ \mathbf{\tilde{y}}(t)=[\tilde{y}_1(t),...,\tilde{y}_K(t)]^T \in [\hat{\mathbf{y}}(t),\mathbf{y}^*]$ such that 
		\[g_j(\hat{y}_j(t))-g_j(y_j^*)=\frac{dg_j}{dy_j} |_{\tilde{y}_j(t)}  (\hat{y}_j(t)-y_j^*), \ \forall j \]
		From all the above, we can therefore write $\mathbf{y}(t+1)-\mathbf{y}^*$ as follows:
		\begin{align}
		& \mathbf{y}(t+1)-\mathbf{y}^*  = (1-\alpha)(\hat{\mathbf{y}}(t)-\mathbf{y}^*)+ \alpha \mathbf{J}_t (\hat{\mathbf{y}}(t)-\mathbf{y}^*) 
		\end{align}
where $\mathbf{J}_t$ is a $K\times K$ matrix with diagonal elements $-\left(\frac{1}{\sum_{l=1}^{K}\bar{y}_j(t)} \right)^2-\frac{\bar{\mu}_j}{\bar{\lambda} \tilde{y}^2_j(t)}$ and the non-diagonal elements are $-\left(\frac{1}{\sum_{l=1}^{K}\bar{y}_j(t)} \right)^2$. We can show  easily then that $\forall \mathbf{a} \in \mathbbm{R}^K$ $\mathbf{a}^T \mathbf{J}_t \mathbf{a} \leq 0$, i.e. $\mathbf{J}_t$ is negative semi definite. Then, by using $\|\mathbf{y}(t+1)-\mathbf{y}^*\|^2=(\mathbf{y}(t+1)-\mathbf{y}^*)^T (\mathbf{y}(t+1)-\mathbf{y}^*)$, we get  
	\begin{align}
	& \|\mathbf{y}(t+1)-\mathbf{y}^*\|^2  = (1-\alpha)^2\|\hat{\mathbf{y}}(t)-\mathbf{y}^*\|^2+ \nonumber \\& \alpha^2 \| \mathbf{J}_t (\hat{\mathbf{y}}(t)-\mathbf{y}^*) \|^2+ 2(1-\alpha_t)\alpha (\hat{\mathbf{y}}(t)-\mathbf{y}^*)^T \mathbf{J}_t (\hat{\mathbf{y}}(t)-\mathbf{y}^*) 
	\end{align}	
	Since $\mathbf{J}$ is negative semi definite and $\| \mathbf{J}_t (\hat{\mathbf{y}}(t)-\mathbf{y}^*) \|^2 \leq \| \mathbf{J}_t\|^2 \|\hat{\mathbf{y}}(t)-\mathbf{y}^* \|^2$, we get 
	\begin{align}
	 \|\mathbf{y}(t+1)-\mathbf{y}^*\|^2  \leq \biggl( (1-\alpha)^2+\alpha^2 \| \mathbf{J}_t\|^2 \biggr) \|\hat{\mathbf{y}}(t)-\mathbf{y}^*\|^2
	\end{align}	
	Recall that $\hat{\mathbf{y}} \in \mathcal{S}$  and therefore $\hat{\mathbf{y}}(t) \neq 0$, which implies that $\bar{\mathbf{y}} \neq 0$ and $\tilde{\mathbf{y}} \neq 0$  and hence $\| \mathbf{J}_t\|^2$ is bounded. Therefore, $\exists$ a sufficiently small $\alpha$  such that $\forall t$  $\alpha^2 \| \mathbf{J}_t\|^2 +\alpha^2 -2\alpha < -\epsilon <0$, 	where $0<\epsilon <1$. Consequently, 
	\[	\|\mathbf{y}(t+1)-\mathbf{y}^*\|^2  \leq \biggl( 1-\epsilon \biggr) \|\hat{\mathbf{y}}(t)-\mathbf{y}^*\|^2\]
	Then, by using the inequality 	$\|\hat{\mathbf{y}}(t+1)-\mathbf{y}^*\|^2 \leq 	\|\mathbf{y}(t+1)-\mathbf{y}^*\|^2$ (since $\mathbf{y}^* \in \mathcal{S}$), we get 
	\begin{align}
	\|\hat{\mathbf{y}}(t+1)-\mathbf{y}^*\|^2 & \leq \biggl( 1-\epsilon \biggr) \|\hat{\mathbf{y}}(t)-\mathbf{y}^*\|^2 \nonumber \\ & \vdotswithin{\leq} \nonumber \\ &  \leq \biggl( 1-\epsilon \biggr)^{t+1} \|\hat{\mathbf{y}}(0)-\mathbf{y}^*\|^2 
	\end{align}	
	and $\|\hat{\mathbf{y}}(t)-\mathbf{y}^*\|^2 \rightarrow 0$ when $t\rightarrow \infty$. This concludes the proof. 
\end{proof}
From the Proof above, one can see that if $\alpha$ is taken such that  $\alpha^2 \| \mathbf{J}_t\|^2 +\alpha^2 -2\alpha <0$ then the algorithm in (\ref{Mann-iteration1}) converges to the solution of (\ref{MF1:equ}). Since for each $j$ $\hat{y}_j \geq \sqrt{\frac{\bar{\mu}_j}{\bar{\lambda}}}$ and $\|\mathbf{J}_t\|^2 \leq \sum_{i=1}^K\sum_{j=1}^K J_t(i,j)$, which implies that $\|\mathbf{J}_t\|^2 \leq K^2 \left(\frac{1}{\sum_{l=1}^{K}\sqrt{\frac{\bar{\mu}_j}{\bar{\lambda}}} } \right)^2+K$. Consequently, it is sufficient to take $\alpha <\frac{2}{K^2 \left(\frac{1}{\sum_{l=1}^{K}\sqrt{\frac{\bar{\mu}_j}{\bar{\lambda}}} } \right)^2+K+1}$.

\subsubsection{Numerical Results}
We provide here an example to show numerically that the  system of equations in (\ref{BR:equ}) has a solution. \textcolor{black}{While in the previous subsection, a detailed mathematical analysis is provided to optimize the upper bound in the case of large number of sources, we show here numerical results for finite number of sources. In order to solve (\ref{BR:equ}), we consider the following iterative algorithm:} $p_{ij}(t+1)=(1-\alpha)p_{ij}(t)+\alpha \frac{\sqrt{\sum_{l\neq i}^N\lambda_{l}p_{lj}(t)+\mu_j}}{\sum_{j'=1}^K \sqrt{\sum_{l\neq i}^N\lambda_{l}p_{lj'}(t)+\mu_{j'}}}, \ \forall i,j$. \textcolor{black}{One can see that actually this iterative algorithm is similar to (\ref{Mann-iteration1}), or more precisely (\ref{Mann-iteration1}) can be obtained from the above algorithm by using Mean Field
analysis (using the derivations in the previous subsection). } In Figure~\ref{fig:convergence_fixed_point_optimal_prob}
	we show an illustrative example where we consider a system with 6 sources et 10 servers with the following parameters for the sources $\lambda_1=100$, $\lambda_2=20$, $\lambda_3=50$, $\lambda_4=\lambda_5=10$ and $\lambda_6=1000$ and for the servers $\mu_1=1,$ $\mu_2=2,$ $\mu_3=3$, 
	$\mu_4=5$, $\mu_5=10$, $\mu_6=20$, $\mu_7=50$, $\mu_8=100$, $\mu_9=200$ and $\mu_{10}=1000$. The plots  show that (\ref{BR:equ})  has a solution, 
	that $p_{1,2}$, $p_{1,6}$, $p_{1,7}$, $p_{1,8}$, $p_{1,9}$ and $p_{1,10}$ converge to the solution of (\ref{BR:equ}) and also that in this example the algorithm converges quickly. \textcolor{black}{This shows that the proposed algorithm can converge also for finite number of sources with different arrival rates. Providing a formal proof of convergence would be an interesting topic and is left for future work.}

\begin{figure}[t!]
	\centering
	\includegraphics[width=\columnwidth,clip=true,trim=0pt 240pt 20pt 220pt]{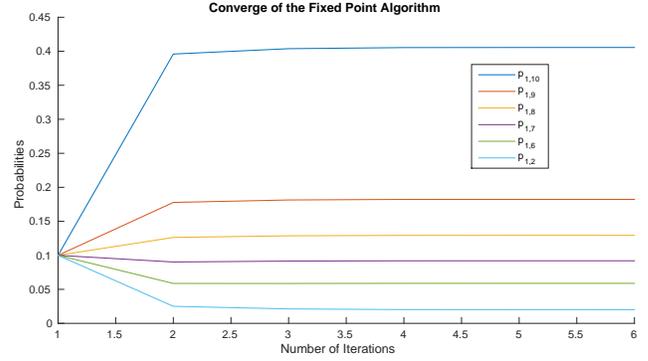}
	\caption{Convergence of the fixed point algorithm defined from \eqref{BR:equ}.}
	\label{fig:convergence_fixed_point_optimal_prob}
\end{figure}

\section{Conclusion}
\label{sec:conclusion}

In this paper, we studied the average AoI of a system of multiple sources and parallel queues using the SHS method. We considered that the queues do not communicate between each other and that the sources send their updates to the queues according to a predefined  probabilistic routing scheme. First, we computed the average AoI for the following systems: i)
two parallel M/M/1/1 queues, ii)  one M/M/1/1 queue with half arrival and  loss rates,  and iii)  one M/M/1/1
queue with double service rate. Then, we computed the average AoI for
two parallel M/M/1/2* queues,  one M/M/1/3* queue with half arrival  and loss rates,  and  one M/M/1/3* queue with double
service time. We conclude that the average AoI of the system composed of parallel queues  is always smaller than that of one 
queue with half arrival  and loss rates, and can be as small as that of one queue with double service rate. We also studied the
average AoI of a system with an arbitrary number of heterogeneous M/M/1/(N+1)* queues and we provided an upper
bound of AoI that is tight when there are multiple sources. We then provided a framework allowing each source to determine its routing decision, by using Game Theory and best response method. In the contest of large number of sources, we simplified the game framework by using Mean Field Games, provided a simple distributed algorithm and proved its convergence to the desired fixed point. 




\bibliographystyle{IEEEtran}
\bibliography{IEEEabrv,routing_aoi}

\begin{thebibliography}{10}
\providecommand{\url}[1]{#1}
\csname url@samestyle\endcsname
\providecommand{\newblock}{\relax}
\providecommand{\bibinfo}[2]{#2}
\providecommand{\BIBentrySTDinterwordspacing}{\spaceskip=0pt\relax}
\providecommand{\BIBentryALTinterwordstretchfactor}{4}
\providecommand{\BIBentryALTinterwordspacing}{\spaceskip=\fontdimen2\font plus
\BIBentryALTinterwordstretchfactor\fontdimen3\font minus
  \fontdimen4\font\relax}
\providecommand{\BIBforeignlanguage}[2]{{%
\expandafter\ifx\csname l@#1\endcsname\relax
\typeout{** WARNING: IEEEtran.bst: No hyphenation pattern has been}%
\typeout{** loaded for the language `#1'. Using the pattern for}%
\typeout{** the default language instead.}%
\else
\language=\csname l@#1\endcsname
\fi
#2}}
\providecommand{\BIBdecl}{\relax}
\BIBdecl

\bibitem{kaul2011minimizing}
S.~Kaul, M.~Gruteser, V.~Rai, and J.~Kenney, ``Minimizing age of information in
  vehicular networks,'' in \emph{8th IEEE Annual Conference on Sensor, Mesh and
  Ad Hoc Communications and Networks}, 2011, pp. 350--358.

\bibitem{kaul2012real}
S.~Kaul, R.~Yates, and M.~Gruteser, ``Real-time status: How often should one
  update?'' in \emph{Proc. of IEEE INFOCOM}, 2012, pp. 2731--2735.

\bibitem{kam2014effect}
C.~Kam, S.~Kompella, and A.~Ephremides, ``Effect of message transmission
  diversity on status age,'' in \emph{IEEE International Symposium on
  Information Theory}, 2014, pp. 2411--2415.

\bibitem{kam2015effect}
C.~Kam, S.~Kompella, G.~D. Nguyen, and A.~Ephremides, ``Effect of message
  transmission path diversity on status age,'' \emph{IEEE Transactions on
  Information Theory}, vol.~62, no.~3, pp. 1360--1374, 2015.

\bibitem{KaulYates20}
S.~K. Kaul and R.~D. Yates, ``Timely updates by multiple sources: The m/m/1
  queue revisited,'' in \emph{IEEE 54th Annual Conference on Information
  Sciences and Systems (CISS)}, 2020, pp. 1--6.

\bibitem{Moltafet20}
M.~Moltafet, M.~Leinonen, and M.~Codreanu, ``On the age of information in
  multi-source queueing models,'' \emph{IEEE Transactions on Communications,
  Early Access}, 2020.

\bibitem{kosta2017age}
A.~Kosta, N.~Pappas, V.~Angelakis \emph{et~al.}, ``Age of information: A new
  concept, metric, and tool,'' \emph{Foundations and Trends{\textregistered} in
  Networking}, vol.~12, no.~3, pp. 162--259, 2017.

\bibitem{maatouk2018relay}
A.~Maatouk, M.~Assaad, and A.~Ephremides, ``The age of updates in a simple
  relay network,'' in \emph{Proc. of IEEE Information Theory Workshop (ITW)},
  2018.

\bibitem{sun2018multipleflows}
Y.Sun, E.Uysal-Biyikoglu, and S.Kompella, ``Age-optimal updates of multiple
  information flows,'' \emph{arXiv preprint, arXiv:1801.02394}, 2018.

\bibitem{talak2018can}
R.~Talak, S.~Karaman, and E.~Modiano, ``Can determinacy minimize age of
  information?'' \emph{arXiv preprint arXiv:1810.04371}, 2018.

\bibitem{sun2017update}
Y.~Sun, E.~Uysal-Biyikoglu, R.~D. Yates, C.~E. Koksal, and N.~B. Shroff,
  ``Update or wait: How to keep your data fresh,'' \emph{IEEE Transactions on
  Information Theory}, vol.~63, no.~11, pp. 7492--7508, 2017.

\bibitem{maatouk2019age}
A.~Maatouk, S.~Kriouile, M.~Assaad, and A.~Ephremides, ``The age of incorrect
  information: A new performance metric for status updates,'' \emph{To appear
  in IEEE/ACM Transactions on Networking, arXiv preprint arXiv:1907.06604},
  2020.

\bibitem{hsu2017scheduling}
Y.-P. Hsu, E.~Modiano, and L.~Duan, ``Scheduling algorithms for minimizing age
  of information in wireless broadcast networks with random arrivals: The
  no-buffer case,'' \emph{arXiv e-prints, p. arXiv:1712.07419}, 2017.

\bibitem{bedewy2018scheduling}
A.~M. Bedewy, Y.~Sun, S.~Kompella, and N.~B. Shroff, ``Age-optimal sampling and
  transmission scheduling in multi-source systems,'' \emph{arXiv e-prints, p.
  arXiv:1812.09463}, 2018.

\bibitem{kadota2018scheduling}
I.~Kadota, A.~Sinha, E.~Uysal-Biyikoglu, R.~Singh, and E.~Modiano, ``Scheduling
  policies for minimizing age of information in broadcast wireless networks,''
  \emph{IEEE/ACM Transactions on Networking}, vol.~26, no.~6, pp. 2637--2650,
  2018.

\bibitem{maatouk2019noma}
A.~Maatouk, M.~Assaad, and A.~Ephremides, ``Minimizing the age of information:
  {NOMA} or {OMA}?'' in \emph{Proc. of Infocom AoI Workshop 2019, arXiv
  e-prints, p. arXiv:1901.03020}, 2019.

\bibitem{maatouk2019csma}
------, ``Minimizing the age of information in a {CSMA} environment,'' in
  \emph{Proc. of IEEE WiOPT, arXiv e-prints, p. arXiv:1901.00481}, 2019.

\bibitem{maatouk2020csma}
------, ``On the age of information in a {CSMA} environment,'' \emph{To appear
  in IEEE Transactions on Networking}, 2020.

\bibitem{bedewy2016age}
A.~M. Bedewy, Y.~Sun, and N.~B. Shroff, ``Optimizing data freshness,
  throughput, and delay in multi-server information-update systems,'' in
  \emph{IEEE International Symposium on Information Theory (ISIT)}, 2016, pp.
  2569--2574.

\bibitem{SunElifKompella18}
Y.~Sun, E.~Uysal-Biyikoglu, and S.~Kompella, ``Age-optimal updates of multiple
  information flows,'' in \emph{IEEE Conference on Computer Communications
  (INFOCOM) Workshops}, April 2018, pp. 2281--2285.

\bibitem{Y18}
R.~D. {Yates}, ``Status updates through networks of parallel servers,'' in
  \emph{IEEE International Symposium on Information Theory (ISIT)}, June 2018,
  pp. 2281--2285.

\bibitem{bedewy2019age}
A.~M. Bedewy, Y.~Sun, and N.~B. Shroff, ``The age of information in multihop
  networks,'' \emph{IEEE/ACM Transactions on Networking}, vol.~27, no.~3, pp.
  1248--1257, 2019.

\bibitem{bedewy2017age}
------, ``Age-optimal information updates in multihop networks,'' in \emph{IEEE
  International Symposium on Information Theory (ISIT)}, 2017, pp. 576--580.

\bibitem{Farazi19-1}
S.~Farazi, A.~G. Klein, , and D.~R. Brown, ``Fundamental bounds on the age of
  information in general multi-hop interference networks,'' in \emph{IEEE
  Conference on Computer Communications (INFOCOM) Workshops}, April 2019, pp.
  96--101.

\bibitem{Buyukates19}
A.~S. B.~Buyukates and S.~Ulukus, ``Age of information in multihop multicast
  networks,'' \emph{Journal of Communications and Networks}, vol.~21, no.~3,
  pp. 256--267, 2019.

\bibitem{sun2020review}
R.~D. Yates, Y.~Sun, D.~R.~B. III, S.~K. Kaul, E.~Modiano, and S.~Ulukus, ``Age
  of information: An introduction and survey,''
  \emph{https://arxiv.org/pdf/2007.08564.pdf}, 2020.

\bibitem{YK19}
R.~D. {Yates} and S.~K. {Kaul}, ``The age of information: Real-time status
  updating by multiple sources,'' \emph{IEEE Transactions on Information
  Theory}, vol.~65, no.~3, pp. 1807--1827, 2019.

\bibitem{hespanha2006modelling}
J.~P. Hespanha, ``Modelling and analysis of stochastic hybrid systems,''
  \emph{IEE Proceedings-Control Theory and Applications}, vol. 153, no.~5, pp.
  520--535, 2006.

\bibitem{Ishikawa}
S.~Ishikawa, ``Fixed points by a new iteration method,'' in \emph{Proc. Amer.
  Math. Soc.}, vol.~44, 1974, pp. 147--150.

\end{thebibliography}
%

\appendices
\section{Proof of Theorem~\ref{thm:upperbound}}
\label{proof:thm:upperbound}

\begin{table*}[t!]
	\centering
	\begin{tabular}{l l l l l}
		$l$  & $\lambda^{l}$ & x & $x^\prime=x A_l$ & $\bar v_{q_l}A_l$\\\hline
		0 &  $\lambda_{1}p_{1j}$ &$[x_0\ \dots\ x_{j1}\dots\ x_{jN}\ x_{jM}\ \dots  ]$ & $[x_0\ \dots\ x_{j1}\dots\ x_{jN}\ 0\ \dots  ]$ & $[v_0\ \dots\ v_{j1}\dots\ v_{jN}\ 0\ \dots  ]$ \\
		1 &  $\sum_{k>1}\lambda_{k}p_{kj}$ &$[x_0\ \dots\ x_{j1}\dots\ x_{jN}\ x_{jM}\ \dots  ]$ & $[x_0\ \dots\ x_{j1}\dots\ x_{jN}\ x_{jN}\ \dots  ]$ & $[v_0\ \dots\ v_{j1}\dots\ v_{jN}\ v_{jN}\ \dots  ]$ \\
		2 &  $\mu_j $ &$[x_0\ \dots\ x_{j1}\dots\ x_{jN}\ x_{jM}\ \dots  ]$ & $[x_{j1}\ \dots\ x_{j2}\dots\ x_{jM}\ x_{jM}\ \dots  ]$ & $[v_0\ \dots\ v_{j1}\dots\ v_{jM}\ v_{jM}\ \dots  ]$ \\
	\end{tabular}
	\caption{Table of SHS transitions of Figure~\ref{fig:upperbound}.}
	\label{tab:upperbound}
\end{table*}

\textcolor{black}{
We consider a system with $K$ parallel M/M/1/(N+1)* queues. Without loss of generality, we compute the 
average AoI of source 1. The result is of course true for any source $i$. 
The arrival rate to queue $j$ from
source $1$ is $\lambda_1p_{1j}$ and that from the rest of the sources is 
$\sum_{k>1}\lambda_kp_{kj}$ for all $j=1,\dots,K$.}
Let $M=N+1.$ The continuous state is given by a vector of size $1+K*M$
\begin{multline*}
\mathbf x(t)=[x_0(t)\ x_{11}(t)\ \dots\ x_{1M}(t)\ x_{21}(t)\ \ x_{K1}(t)\ \dots\ \\x_{KM}(t)\ ],
\end{multline*}
where $x_0$ represents the current age, $x_{j1}$ the age if the update in service in the queue $j$ is delivered and $x_{jl}$ the age if the update in the position 
$l-1$ of the queue $j$ is delivered. The discrete state is a Markov Chain with a single state. 
We note that, when an event (arrival or departure) occurs in queue $j$, the age of the updates
in the rest of the queues is not modified. This allows us to focus on a queue $j$ to illustrate 
the Markov Chain and the SHS transitions, which are presented respectively in 
Figure~\ref{fig:upperbound} and Table~\ref{tab:upperbound}.

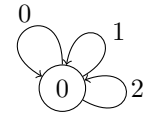
\begin{figure}[t!]
	\centering
	\begin{tikzpicture}[]
	\node[style={circle,draw}] at (0,0) (1) {$0$};
	\node[] at (-0.5,1) (X) {$0$};
	\node[] at (1,0) (X) {$2$};
	\node[] at (0.75,0.75) (X) {$1$};
	\draw[->] (1) to [out=330,in=20,looseness=8] (1) ;
	\draw[->] (1) to [out=30,in=80,looseness=8] (1) ;
	\draw[->] (1) to [out=90,in=150,looseness=8] (1) ;
	\end{tikzpicture}  
	\caption{The SHS Markov Chain for a system formed by $K$ parallel M/M/1/(N+1)* queues with multiple sources.}
	\label{fig:upperbound}
\end{figure}

We now explain each transition $l$:

\begin{itemize}
	\item[]
	\begin{itemize}
		\item[$l=0$] There is an update of source $i$ that arrives to queue $j$. For this case, the incoming
		update replaces the update in the last position of the queue and the age of the incoming update 
		is set to zero, i.e., $x_{jM}^\prime=0$.
		\item[$l=1$] There is an update of another source that arrives to queue $j$. For this case, the incoming
		update replaces the update in the last position of the queue and the age of the incoming update 
		is set to the same value as the age of penultimate update, , i.e., $x_{jM}^\prime=x_{jN}$.
		\\
		\begin{remark}
			If $N=0$, the last update in the queue is an update that is in service. Therefore, 
			when an update of other sources arrives to the system, the incoming
			update replaces the update in service and $x_{j1}^\prime=x_0$.
		\end{remark}
		\item[$l=2$] The update in service in queue $j$ is delivered and therefore the age of the monitor
		changes to $x_{j1}$. Besides, all the elements in the queue move a position ahead, which causes
		that their ages change respectively from $x_{j1}$ to $x_{j2}$, from $x_{j2}$ to $x_{j3}$, $\dots$ and from $x_{jN}$ to $x_{jM}$. Finally, in the last position of the queue, we put a fake update whose age value
		is set to $x_{jM}$, that is, the age of the penultimate element in the queue, i.e., $x_{jM}^\prime=x_{jM}$. 
	\end{itemize}
\end{itemize}

\textcolor{black}{As we have just mentioned, when an update of queue $i$ is delivered to the monitor, 
a fake update is put in the last position of the queue. We now explain that these additional fake updates 
lead to a larger sojourn time of the incoming updates, which implies that 
the overall average AoI is larger and, as a result, the result we obtain provides an upper bound 
of the real average AoI of the system (i.e. without fake updates). Consider that all the updates of queue $i$ 
are delivered before the arrival of a new update to that queue. According to the above explained system, 
the queue is full of fake updates and, therefore, when a new update arrives to the system, it is enqueued in the last position
of the queue. However, in a system without fake updates, a new update would find the queue empty and would start the service upon arrival. As a result, we have that the sojourn time of new updates in a system with fake updates (i.e., the above presented system) 
is clearly larger than the sojourn time without fake updates.} 

Since the Markov Chain is formed by a single state, the stationary distribution is trivial. We define
the vector $\mathbf v=[v_0 \ v_{11}\ v_{12}\ v_{1M}\ v_{21}\  \dots \ v_{K1} \dots v_{KM}]$ and 
also $\mathbf b$ as the vector of size $1+K*M$ with all ones. From the result of Theorem 4 in 
\cite{YK19} and the above reasoning, we know that an upper bound of the AoI is 
given by $v_0$, that is, the first coordinate of the vector $\mathbf v$.

In the remainder of the proof, we present 
the system of equations that $\mathbf v$ satisfies and solve it. 
We first present the equation of the first coordinate of $\mathbf v$:
\begin{multline*}
\sum_{j=1}^K(\lambda_{1}p_{1j}+\sum_{k>1}\lambda_{k}p_{kj}+\mu_j)v_0=\\1+\sum_{j=1}^K\left((\lambda_{1}p_{1j}+\sum_{k>1}\lambda_{k}p_{kj})v_0+\mu_jv_{j1}\right),
\end{multline*}
which can be alternatively written as
\begin{equation}
v_0\sum_{j=1}^K\mu_j=1+\sum_{j=1}^K\mu_jv_{j1}.
\label{eq:v0}
\end{equation}

Let $l^\prime=l+1. $We now present that, for all $j=1,\dots,K$ and all $l=1,\dots,M$, the following equation is satisfied:
\begin{align}
&\sum_{m=1}^K(\lambda_{1}p_{1m}+\sum_{k>1}\lambda_{k}p_{km}+\mu_m)v_{ml}=1+\nonumber\\
&\sum_{m\neq j}(\lambda_{1}p_{1m}+ \sum_{k>1}\lambda_{k}p_{km}+\mu_m)v_{ml}+\nonumber\\
&(\lambda_{1}p_{1j}+\sum_{k>1}\lambda_{k}p_{kj})v_{jl}+\mu_jv_{ml^\prime}\nonumber\\
&\iff \mu_{j}v_{jl}=1+\mu_jv_{jl^\prime}\label{eq:relation-expr-proof-ub-2}
\end{align}

Using recursively the last expression for $l$ equals $1$ to $N$, we get that 
\begin{equation}
\mu_{j}v_{j1}=N-1+\mu_jv_{mN}.
\label{eq:recursion-proof-ub}
\end{equation}

We now focus on the last position of queue $j$ and the equation that it  must satisfy is the following:
\begin{align*}
&\sum_{m=1}^K(\lambda_{1}p_{1m}+\sum_{k>1}\lambda_{k}p_{km}+\mu_m)v_{mM}=1+\\
&\sum_{m\neq j}(\lambda_{1}p_{1m}+\sum_{k>1}\lambda_{k}p_{km}+\mu_m)v_{mM}+
\sum_{k>1}\lambda_{k}p_{kj}v_{jN}+\mu_jv_{jM}\\
&\iff (\lambda_{1}p_{1j}+\sum_{k>1}\lambda_{k}p_{kj})v_{jM}=1+\sum_{k>1}\lambda_{k}p_{kj}v_{jN}
\end{align*}

Besides, from \eqref{eq:relation-expr-proof-ub-2}, for l=N, we have that
$$
\mu_{j}v_{jN}=1+\mu_jv_{jM}.
$$

Using the last two expressions, we get that
\begin{align*}
\mu_{j}v_{jN}&=1+\mu_jv_{jM}\\
&=1+\mu_j\frac{1+\sum_{k>1}\lambda_{k}p_{kj}v_{jN}}{\lambda_{1}p_{1j}+\sum_{k>1}\lambda_{k}p_{kj}}.
\end{align*}

The last expression is equivalent to the following one:
$$
\mu_jv_{jN}\left(1-\frac{\lambda_{2}p_{2j}}{\lambda_{1}p_{1j}+\sum_{k>1}\lambda_{k}p_{kj}}\right)=1+\frac{\mu_j}{\lambda_{1}p_{1j}
	+\sum_{k>1}\lambda_{k}p_{kj}} \iff
$$
$$
\mu_jv_{jN}\left(\frac{\lambda_{1p_{1j}}}{\lambda_{1p_{1j}}+\sum_{k>1}\lambda_{k}p_{kj}}\right)=
\frac{\lambda_{1}p_{1j}+\sum_{k>1}\lambda_{k}p_{kj}+\mu_j}{\lambda_{1}p_{1j}+\sum_{k>1}\lambda_{k}p_{kj}}\iff 
$$
$$
\mu_jv_{jN}\lambda_{1}p_{1j}=\lambda_{1}p_{1j}+\sum_{k>1}\lambda_{k}p_{kj}+\mu_j\iff 
$$
$$
\mu_jv_{jN}=1+\frac{\sum_{k>1}\lambda_{k}p_{kj}+\mu_j}{\lambda_{1}p_{1j}}. 
$$

Using the last expression with \eqref{eq:recursion-proof-ub} and \eqref{eq:v0}, the desired result follows for $i=1$.

\begin{figure}[t!]
	\centering
	\includegraphics[width=\columnwidth,clip=true,trim=40pt 210pt 20pt 210pt]{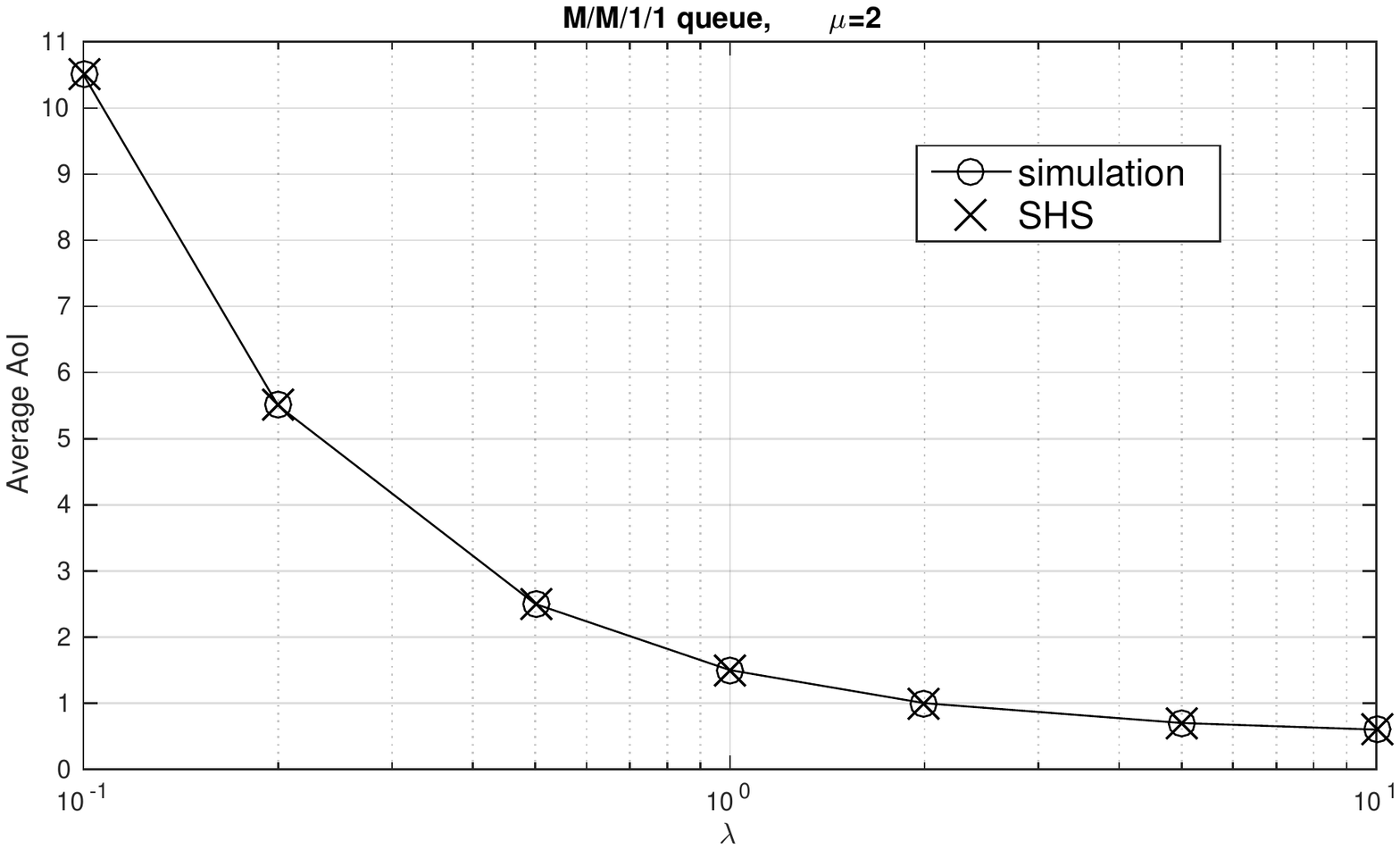}
	\caption{\textcolor{black}{ Simulation results of the M/M/1/1 queue with different values of arrival rates; x-axis in logarithmic scale.}}
	\label{fig:sim-mm11}
\end{figure}
\begin{figure}[t!]
	\centering
	\includegraphics[width=\columnwidth,clip=true,trim=40pt 210pt 20pt 210pt]{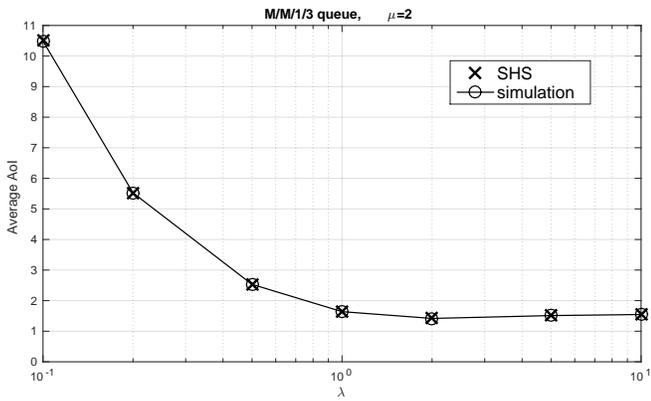}
	\caption{\textcolor{black}{Simulation results of the M/M/1/3* queue with different values of arrival rates; x-axis in logarithmic scale.}}
	\label{fig:sim-mm13}
\end{figure}

\textcolor{black}{
\section{Simulations for M/M/1/1 and  M/M/1/3* queues}
We provide in this section a comparison between the results obtained by SHS and those obtained by simulation to show the accuracy of the SHS method. We consider two different examples with either M/M/1/1 or  M/M/1/3* queues. We consider a system with a single source and without packet losses. We set $\mu=2$. As expected, the curves presented in Figures~\ref{fig:sim-mm11} and ~\ref{fig:sim-mm13} show that, in all the considered cases, the results obtained by simulation coincide with those of SHS. 
}

\section{Table of transitions of Figure~\ref{fig:mm12-routing}}
\label{app:table-mm12-routing}
\begin{table*}[t!]
	\centering
	\small 
	\begin{tabular}{l l l l l}
		$l$ & $q_l\to q_{l^\prime}$ & $\lambda^{l}$ & $x^\prime=x A_l$ & $\bar v_{q_l}A_l$\\\hline
		0 & $00\to 01$ & $\lambda_1p_{12}$ & $[x_0\ 0\ 0\ 0\ 0]$ & $[v_{00}(0)\ 0\ 0\ 0\ 0]$\\
		&  & $\sum_{k>1}\lambda_kp_{k2}$ & $[x_0\ 0\ 0\ x_0\ 0]$ & $[v_{00}(0)\ 0\ 0\ v_{00}(0)\ 0]$\\\hline
		1 & $01\to 00$ & $\mu_{2}$ & $[x_{21}\ 0\ 0\ 0\ 0]$ & $[v_{01}(3)\ 0\ 0\ 0\ 0]$\\
		&  & $\theta_{2}$ & $[x_0\ 0\ 0\ 0\ 0]$ & $[v_{01}(0)\ 0\ 0\ 0\ 0]$\\\hline
		2 & $01\to 02$ & $\lambda_1p_{12}$ & $[x_0\ 0\ 0\ x_{21}\ 0]$ & $[v_{01}(0)\ 0\ 0\ 0\ 0]$\\
		&  & $\sum_{k>1}\lambda_kp_{k2}$ & $[x_0\ 0\ 0\ x_{21}\ x_{21}]$ & $[v_{01}(0)\ 0\ 0\ v_{00}(0)\ 0]$\\\hline
		3 & $02\to 01$ & $\mu_{2}$ & $[x_{21}\ 0\ 0\ x_{22}\ 0]$ & $[v_{02}(3)\ 0\ 0\ v_{02}(4)\ 0]$\\
		&  & $\theta_{2}$ & $[x_0\ 0\ 0\ x_{22}\ 0]$ & $[v_{02}(0)\ 0\ 0\ v_{02}(4)\ 0]$\\\hline
		4 & $02\to 02$ & $\lambda_1p_{12}$ & $[x_0\ 0\ 0\ x_{21}\ 0]$ & $[v_{02}(0)\ 0\ 0\ v_{02}(3)\ 0]$\\
		& & $\sum_{k>1}\lambda_kp_{k2}$ & $[x_0\ 0\ 0\ x_{21}\ x_{21}]$ & $[v_{02}(0)\ 0\ 0\ v_{02}(3)\ v_{02}(3)]$\\\hline
		5 & $00\to 10$ & $\lambda_1p_{11}$ & $[x_0\ 0\ 0\ 0\ 0]$ & $[v_{00}(0)\ 0\ 0\ 0\ 0]$\\
		&  & $\sum_{k>1}\lambda_kp_{k1}$ & $[x_0\ x_0\ 0\ 0\ 0]$ & $[v_{00}(0)\ v_{00}(0)\ 0\ 0\ 0\ 0]$\\\hline
		6 & $10\to 00$ & $\mu_{1}$ & $[x_{12}\ 0\ 0\ 0\ 0]$ & $[v_{01}(1)\ 0\ 0\ 0\ 0]$\\
		&  & $\theta_{1}$ & $[x_0\ 0\ 0\ 0\ 0]$ & $[v_{01}(0)\ 0\ 0\ 0\ 0]$\\\hline
		7 & $01\to 11$ & $\lambda_1p_{11}$ & $[x_0\ 0\ 0\ x_{21}\ 0]$ & $[v_{01}(0)\ 0\ 0\ v_{01}(3)\ 0]$\\
		&  & $\sum_{k>1}\lambda_kp_{k1}$ & $[x_0\ x_0\ 0\ x_{21}\ 0]$ & $[v_{01}(0)\ v_{01}(0)\ 0\ 0\ v_{01}(3)\ 0]$\\\hline
		8 & $11\to 01$ & $\mu_{1}$ & $[x_{11}\ 0\ 0\ x_{21}\ 0]$ & $[v_{11}(1)\ 0\ 0\ v_{11}(3)\ 0]$\\
		&  & $\theta_{1}$ & $[x_0\ 0\ 0\ x_{21}\ 0]$ & $[v_{11}(0)\ 0\ 0\ v_{11}(3)\ 0]$\\\hline
		9& $02\to 12$ & $\lambda_1p_{11}$ & $[x_0\ 0\ 0\ x_{21}\ x_{22}]$ & $[v_{02}(0)\ 0\ 0\ v_{02}(3)\ v_{02}(4)]$\\
		&  & $\sum_{k>1}\lambda_kp_{k1}$ & $[x_0\ x_0\ 0\ x_{21}\ x_{22}]$ & $[v_{02}(0)\ v_{02}(0)\ 0\ 0\ v_{02}(3)\ v_{02}(4)]$\\\hline
		10 & $12\to 02$ & $\mu_{1}$ & $[x_{11}\ 0\ 0\ x_{21}\ x_{22}]$ & $[v_{12}(1)\ 0\ 0\ v_{12}(3)\ v_{12}(4)]$\\
		&  & $\theta_{1}$ & $[x_0\ 0\ 0\ x_{21}\ x_{22}]$ & $[v_{12}(0)\ 0\ 0\ v_{12}(3)\ v_{12}(4)]$\\\hline
		11 & $10\to 11$ & $\lambda_1p_{12}$ & $[x_0\ x_{11}\ 0\ 0\ 0]$ & $[v_{10}(0)\ v_{10}(1)\ 0\ 0\ 0]$\\
		&  & $\sum_{k>1}\lambda_kp_{k2}$ & $[x_0\ x_{11}\ 0\ x_0\ 0]$ & $[v_{10}(0)\ v_{10}(1)\ 0\ v_{10}(0)\ 0]$\\\hline
		12 & $11\to 10$ & $\mu_{2}$ & $[x_{21}\ x_{11}\ 0\ 0\ 0]$ & $[v_{11}(3)\ v_{11}(1)\ 0\ 0\ 0]$\\
		&  & $\theta_{2}$ & $[x_0\ x_{11}\ 0\ 0\ 0]$ & $[v_{11}(0)\ v_{11}(1)\ 0\ 0\ 0]$\\\hline
		13 & $11\to 12$ & $\lambda_1p_{12}$ & $[x_0\ x_{11}\ 0\ x_{21}\ 0]$ & $[v_{11}(0)\ v_{11}(1)\ 0\ v_{11}(3)\ 0]$\\
		&  & $\sum_{k>1}\lambda_kp_{k2}$ & $[x_0\ x_{11}\ 0\ x_{21}\ x_{21}]$ & $[v_{11}(0)\ v_{11}(1)\ 0\ v_{11}(3)\ v_{11}(3)]$\\\hline
		14 & $12\to 11$ & $\mu_{2}$ & $[x_{21}\ 0\ 0\ x_{22}\ 0]$ & $[v_{12}(3)\ 0\ 0\ v_{12}(4)\ 0]$\\
		&  & $\theta_{2}$ & $[x_0\ 0\ 0\ x_{22}\ 0]$ & $[v_{12}(0)\ 0\ 0\ v_{12}(4)\ 0]$\\\hline
		15 & $12\to 12$ & $\lambda_1p_{12}$ & $[x_0\ x_{11}\ 0\ x_{21}\ 0]$ & $[v_{12}(0)\ v_{12}(1)\ 0\ v_{12}(3)\ 0]$\\
		& & $\sum_{k>1}\lambda_kp_{k2}$ & $[x_0\ x_{11}\ 0\ x_{21}\ x_{21}]$ & $[v_{12}(0)\ v_{12}(1)\ 0\ v_{12}(3)\ v_{12}(3)]$\\\hline
		16 & $10\to 20$ & $\lambda_1p_{11}$ & $[x_0\ x_{11}\ 0\ 0\ 0]$ & $[v_{10}(0)\ x_{10}(1)\ 0\ 0\ 0]$\\
		&  & $\sum_{k>1}\lambda_kp_{k1}$ & $[x_0\ x_{11}\ x_{11}\ 0\ 0]$ & $[v_{10}(0)\ v_{10}(1)\ v_{10}(1)\ 0\ 0\ 0]$\\\hline
		17 & $20\to 10$ & $\mu_{1}$ & $[x_{12}\ x_{22}\ 0\ 0\ 0]$ & $[v_{20}(1)\ v_{20}(2)\ 0\ 0\ 0]$\\
		&  & $\theta_{1}$ & $[x_0\ x_{22}\ 0\ 0\ 0]$ & $[v_{20}(0)\ x_{20}(2)\ 0\ 0\ 0]$\\\hline
		18 & $11\to 21$ & $\lambda_1p_{11}$ & $[x_0\ x_{11}\ 0\ x_{21}\ 0]$ & $[v_{11}(0)\ v_{11}(1)\ 0\ v_{11}(3)\ 0]$\\
		&  & $\sum_{k>1}\lambda_kp_{k1}$ & $[x_0\ x_{11}\ x_{11}\ x_{21}\ 0]$ & $[v_{11}(0)\ v_{11}(1)\ v_{11}(1)\ v_{11}(3)\ 0]$\\\hline
		19 & $21\to 11$ & $\mu_{1}$ & $[x_{11}\ x_{12}\ 0\ x_{21}\ 0]$ & $[v_{21}(1)\ v_{21}(2)\ 0\ v_{21}(3)\ 0]$\\
		&  & $\theta_{1}$ & $[x_0\ x_{12}\ 0\ x_{21}\ 0]$ & $[v_{21}(0)\ v_{21}(2)\ 0\ v_{11}(3)\ 0]$\\\hline
		20& $12\to 22$ & $\lambda_1p_{11}$ & $[x_0\ x_{11}\ 0\ x_{21}\ x_{22}]$ & $[v_{12}(0)\ v_{12}(1)\ 0\ v_{12}(3)\ v_{12}(4)]$\\
		&  & $\sum_{k>1}\lambda_kp_{k1}$ & $[x_0\ x_{11}\ 0\ x_{21}\ x_{22}]$ & $[v_{12}(0)\ v_{12}(1)\ v_{12}(1)\ v_{12}(3)\ v_{12}(4)]$\\\hline
		21 & $22\to 12$ & $\mu_{1}$ & $[x_{11}\ x_{12}\ 0\ x_{21}\ x_{22}]$ & $[v_{22}(1)\ v_{22}(2)\ 0\ v_{22}(3)\ v_{12}(4)]$\\
		&  & $\theta_{1}$ & $[x_0\ 0\ 0\ x_{21}\ x_{22}]$ & $[v_{22}(0)\ v_{22}(2)\ 0\ v_{22}(3)\ v_{22}(4)]$\\\hline
		22 & $20\to 20$ & $\lambda_1p_{11}$ & $[x_{0}\ x_{11}\ 0\ 0 \ 0]$ & $[v_{20}(0)\ v_{20}(1)\ 0\ 0\ 0]$\\
		&  & $\sum_{k>1}\lambda_kp_{k1}$ & $[x_0\ x_{11}\ x_{11}\ x_{21}\ 0]$ & $[v_{22}(0)\ v_{20}(1)\ v_{20}(1)\ 0 \  0\ 0]$\\\hline
		23 & $20\to 21$ & $\lambda_1p_{12}$ & $[x_0\ x_{11}\ x_{12}\ 0\ 0]$ & $[v_{20}(0)\ v_{20}(1)\ v_{20}(2)\ 0\ 0]$\\
		&  & $\sum_{k>1}\lambda_kp_{k2}$ & $[x_0\ x_{11}\ x_{12}\ x_0\ 0]$ & $[v_{20}(0)\ v_{20}(1)\ v_{20}(2)\ v_{20}(0)\ 0]$\\\hline
		24 & $21\to 20$ & $\mu_{2}$ & $[x_{21}\ x_{11}\ x_{12}\ 0\ 0]$ & $[v_{21}(3)\ v_{21}(1)\ v_{21}(2)\ 0\ 0]$\\
		&  & $\theta_{2}$ & $[x_0\ x_{11}\ x_{12}\ 0\ 0]$ & $[v_{21}(0)\ v_{21}(1)\ v_{21}(2)\ 0\ 0]$\\\hline
		25 & $21\to 21$ & $\lambda_1p_{11}$ & $[x_{0}\ x_{11}\ 0\ x_{21} \ 0]$ & $[v_{21}(0)\ v_{21}(1)\ 0\ v_{21}(3)\ 0]$\\
		&  & $\sum_{k>1}\lambda_kp_{k1}$ & $[x_0\ x_{11}\ x_{11}\ x_{21}\ 0]$ & $[v_{21}(0)\ v_{21}(1)\ v_{21}(1)\  v_{21}(3)\ 0]$\\\hline
		26 & $21\to 22$ & $\lambda_1p_{12}$ & $[x_0\ x_{11}\ x_{12}\ x_{21}\ 0]$ & $[v_{21}(0)\ v_{21}(1)\ v_{21}(2)\ x_{21}(3)\ 0]$\\
		&  & $\sum_{k>1}\lambda_kp_{k2}$ & $[x_0\ x_{11}\ x_{12}\ x_{21}\ x_{21}]$ & $[v_{21}(0)\ v_{21}(1)\ v_{21}(2)\ v_{21}(3)\ v_{21}(3)]$\\\hline
		27 & $22\to 21$ & $\mu_{2}$ & $[x_{21}\ x_{11}\ x_{12}\ x_{21}\ 0]$ & $[v_{22}(3)\ v_{22}(1)\ v_{22}(2)\ x_{22}(4)\ 0]$\\
		&  & $\theta_{2}$ & $[x_0\ x_{11}\ x_{12}\ x_{21}\ 0]$ & $[v_{22}(0)\ v_{22}(1)\ v_{22}(2)\ v_{22}(4)\ 0]$\\\hline
		28 & $22\to 22$ & $\lambda_1p_{11}$ & $[x_{0}\ x_{11}\ 0\ x_{21} \ x_{22}]$ & $[v_{22}(0)\ v_{22}(1)\ 0\ v_{22}(3)\ v_{22}(4)]$\\
		&  & $\sum_{k>1}\lambda_kp_{k1}$ & $[x_0\ x_{11}\ x_{11}\ x_{21}\ x_{22}]$ & $[v_{22}(0)\ v_{22}(1)\ v_{22}(1)\  v_{22}(3)\ v_{22}(4)]$\\\hline
		29 & $22\to 22$ & $\lambda_1p_{12}$ & $[x_{0}\ x_{11}\ x_{12}\ x_{21} \ 0]$ & $[v_{22}(0)\ v_{22}(1)\ v_{22}(2)\ v_{22}(3)\ 0]$\\
		&  & $\sum_{k>1}\lambda_kp_{k2}$ & $[x_0\ x_{11}\ x_{12}\ x_{21}\ x_{21}]$ & $[v_{22}(0)\ v_{22}(1)\ v_{22}(2)\  v_{22}(3)\ v_{22}(3)]$\\\hline 
	\end{tabular}
	\caption{Table of transitions of Figure~\ref{fig:mm12-routing}.}
	\label{tab:mm12-routing}
\end{table*}

\end{document}